  \providecommand\BibTeX{{%
    \normalfont B\kern-0.5em{\scshape i\kern-0.25em b}\kern-0.8em\TeX}}}
\newtheorem{thm}{Theorem}
\newtheorem{prop}[thm]{Proposition}
\newtheorem{lem}[thm]{Lemma}
\newtheorem{defn}{Definition}
\newtheorem{exm}{Example}
\newcommand{\system}{$S = (X, \iota(X), \tau(X, X'))\ $}
\newcommand{\tarray}{$A^E_I $}
\newcommand{\Lambdatool}{\textsc{Lambda}\xspace}
\newcommand{\cubicle}{\textsc{Cubicle}\xspace}
\newcommand{\zr}{\textsc{Z3}\xspace}
\newcommand{\mathsat}{\textsc{Mathsat}\xspace}
\newcommand{\mcmt}{\textsc{Mcmt}\xspace}
\newcommand{\mypyvy}{\textsc{MyPyvy}\xspace}
\newcommand{\icrpo}{\textsc{Ic3po}\xspace}
\newcommand{\Safe}{\textsc{Safe}\xspace}
\newcommand{\Unsafe}{\textsc{Unsafe}\xspace}
\newcommand{\defeq}{\vcentcolon=}
\newcommand{\UPDRIA}{\textsc{Updria}\xspace}
\newcommand{\icria}{\textsc{Ic3Ia}\xspace}
\def\checkmark{\tikz\fill[scale=0.4](0,.35) -- (.25,0) -- (1,.7) -- (.25,.15) -- cycle;} 
\tikzstyle{io} = [ellipse,thick,minimum width=1cm, minimum height=1cm, text centered, draw=black]%, fill=blue!30]
\tikzstyle{process} = [rectangle, minimum width=1cm, minimum height=1cm, text centered, draw=black, thick,%fill=orange!30, 
\tikzstyle{decision} = [diamond, thick,minimum width=1cm, minimum height=1cm, text centered, draw=black]%, fill=green!30]
\tikzstyle{arrow} = [thick,->,>=stealth]
\tikzstyle{lemmas} = [cylinder, thick,shape border rotate=90, draw,minimum height=1cm,minimum width=1cm]
\begin{document}
\title{Invariant Checking for SMT-based Systems with Quantifiers}
\author{Gianluca Redondi}
\email{gredondi@fbk.eu}

\affiliation{%
  \institution{Fondazione Bruno Kessler \&}
  \institution{Universit\'a di Trento}
  %\streetaddress{P.O. Box 1212}
  %\city{Trento}
  %\state{Ohio}
  \country{Italy}
  %\postcode{43017-6221}
}

\author{Alessandro Cimatti}
\email{cimatti@fbk.eu}
\affiliation{%
  \institution{Fondazione Bruno Kessler}
  %\streetaddress{P.O. Box 1212}
  %\city{Trento}
  %\state{Ohio}
  \country{Italy}
  %\postcode{43017-6221}
}

\author{Alberto Griggio}
\email{griggio@fbk.eu}
\affiliation{%
  \institution{Fondazione Bruno Kessler}
  %\streetaddress{P.O. Box 1212}
  %\city{Trento}
  %\state{Ohio}
  \country{Italy}
  %\postcode{43017-6221}
}

\author{Kenneth McMillan}
\email{kenmcm@cs.utexas.edu}
\affiliation{%
  \institution{University of Texas at Austin} 
  \country{USA}
}

\begin{abstract}
This paper addresses the problem of checking invariant 
properties for a large class of symbolic transition 
systems, defined by a combination of SMT theories and
quantifiers. State variables can be functions from an 
uninterpreted sort (finite, but unbounded) to
an interpreted sort, such as the the integers under the theory 
of linear arithmetic. This formalism
is very expressive and can be used 
for modeling parameterized systems, array-manipulating programs,
and more.  We propose two algorithms
for finding universal inductive invariants for such systems.  
The first algorithm combines an IC3-style loop with a form of 
implicit predicate abstraction to construct an invariant in an 
incremental manner. The second algorithm constructs an under-approximation
of the original problem, and searches for a formula which is an inductive 
invariant for this case; then, the invariant is generalized to the original case, 
and checked with a portfolio of techniques. We have implemented the two algorithms 
and conducted an extensive experimental evaluation, considering various 
benchmarks and different tools from the literature. As far as we know, our method 
is the first capable of handling in a large class of systems in a uniform way. 
The experiment shows that both algorithms are competitive with the state of the art.
\end{abstract}

\begin{CCSXML}
<ccs2012>
<concept>
<concept_id>10003752.10003790.10011192</concept_id>
<concept_desc>Theory of computation~Verification by model checking</concept_desc>
<concept_significance>500</concept_significance>
</concept>
<concept>
<concept_id>10003752.10010124.10010138.10010139</concept_id>
<concept_desc>Theory of computation~Invariants</concept_desc>
<concept_significance>300</concept_significance>
</concept>
</ccs2012>
\end{CCSXML}

\ccsdesc[500]{Theory of computation~Verification by model checking}
\ccsdesc[300]{Theory of computation~Invariants}

\keywords{Invariant Checking, Universal Invariants, SMT, Quantifiers}

\received{XXX}
\received[accepted]{XXX}

\maketitle

%\newpage
%\tableofcontents
%\newpage

\section{Introduction}
Model checking algorithms are designed to automatically prove or disprove that a given property holds in a system. In particular, symbolic algorithms based on efficient quantifier-free SAT and SMT 
reasoning have seen significant progress in the last few years.
However, in many verification areas, first-order quantifiers are
needed, both in the symbolic description of the system and in the
property to prove. This is the case, for example, of verification of
parameterized systems, that consist of a finite but unbounded number of
components. Examples of such systems include communication protocols (e.g. leader election), feature systems, or control algorithms in various application domains (e.g. railways interlocking logic). The key challenge is to prove the correctness of the system for all
possible configurations corresponding to instantiations of the
parameters.

Unfortunately, dealing with the combined case of transition systems
with theories and quantifiers is far from trivial. Previous works on verification of parameterized systems with SMT-based techniques, such as~\cite{MCMT, cubicle}, impose strong syntactic restrictions on the formulae supported. Other approaches for model checking systems with quantifiers, such as~\cite{ic3po,updr,pdh},
are less restrictive, but they do not support theories.

In this paper, we
discuss the problem of model checking invariant properties of symbolic 
transition systems expressed in quantified first-order logic modulo theories, 
with state variables that are functions from finite finite but unbounded domains.

We start by defining our formalism, using the notion of \textit{array-based} transition systems 
(expanding the notion from \cite{SMTarray}), suited for the description of systems as above. 
Then, we propose two fully automated approaches for solving the invariant problem of array-based systems.

The first algorithm is based on UPDR \cite{updr}, an extension of the IC3/PDR algorithm that synthesize universally quantified invariants. Our procedure combines UPDR with implicit abstraction \cite{ImplicitAbstraction}, to deal with general SMT theories. 
Implicit abstraction was used in \cite{msatic3} to extend IC3 to deal with quantifier-free infinite-state systems. In this paper, we show how to modify that technique to handle quantified systems as well. We also lift several properties of the original algorithm to our setting. However, the resulting algorithm heavily relies on quantified queries to an external solver: in practice, this is still an expensive subroutine. 

The second algorithm consists of a simple yet general procedure based on the interaction of two key ingredients.
First, we restrict the cardinality of the uninterpreted sort to a fixed natural number. 
This results in a quantifier-free system that can be model-checked with existing techniques: upon termination, we either get a counterexample, in
which case the system is unsafe, or a proof for the property in the form of a quantifier-free inductive invariant. 
Second, the invariant is generalized to a quantified candidate invariant for the original system, and its validity is checked 
using either a form of abstraction, or quantified SMT reasoning. If the candidate invariant is valid, then the system is safe. Otherwise, further reasoning is required, e.g. by increasing the cardinality of the domain, and iterating the first step. 
The first option to check that a candidate invariant holds is based on an extension of the works on Parameter Abstraction of~\cite{SimpleMethod,Krstic2005ParametrizedSV} to our formalism. This technique computes a quantifier-free system and a quantifier-free property that, in case it holds, ensures that the original property is an invariant of the system. 
As a second option, we could use instead any off-the-shelf solver supporting SMT and quantifiers (e.g. Z3\cite{z3}) to discharge validity checks. However, a black-box approach to checking the validity of quantified formulae may cause the procedure to diverge in practice. Therefore, we also propose a more careful, resource-bounded approach to instantiation, that can be used to discharge quantified queries in a more controlled way.

Both algorithms have been implemented and experimentally evaluated over a large benchmark set, obtained from different sources, and making use of theories, quantifier alternations, or both. The experimental evaluation demonstrates that the algorithms are very general and that they compare well with other tools in each category.  As far as we know, our algorithms are the first capable of approaching in a
uniform way such a large variety of systems.

% conference paper 
We remark that the second algorithm of this paper is a combination of the algorithms given in 
\cite{Lambda, atva22}, described in a unified manner and experimentally evaluated over a larger set of benchmarks.

% STRUCTURE OF THE PAPER
This paper is structured as follows. In Section \ref{sec-preliminaries}, we present some background. In Section \ref{sec-problem}, we introduce our formalism, and we state the problem faced in this work. In Section \ref{sec-updria}, we illustrate the first algorithm, while in Section \ref{sec-lambda}, we describe the second one. We discuss related work in Section \ref{sec-rel-work}, and we conduct the experimental evaluation of both algorithms, as well as a comparison with other tools, in Section \ref{sec-expval}. Finally, we draw conclusions in Section \ref{sec-conclusions}.
\section{Background}
\label{sec-preliminaries}
\subsection{Preliminaries}
%Our main setting is standard first order logic. 
A theory $\mathcal{T}$ in the SMT sense is a pair 
$\mathcal{T} = (\Sigma, \mathcal{C})$, 
where $\Sigma$ is a first-order signature and $\mathcal{C}$ is a class of models over $\Sigma$. 
We use the standard notions of first-order logic and Tarskian interpretation  (variables, assignment, model, satisfiability, validity, logical consequence). In this paper, we will work in a multi-sorted setting: we outline here the main differences with the standard case. A sorted signature $\Sigma$ is given by a set of sorts and a set of sorted symbols, such that the domain and range sorts of every symbol in $\Sigma$ is also in $\Sigma$. A multi-sorted structure is a map that takes every sort symbol to a set called its universe (disjoint from other universes) and every function symbol to a function between the universes of the corresponding sorts.

Given a model $\mathcal{M}$ over $\Sigma$, the restriction of $\mathcal{M}$ over a sorted vocabulary $\Sigma' \subset \Sigma$ is given by the domain restriction of $\mathcal{M}$ to $\Sigma'$.  Given two theories $\mathcal{T} = (\Sigma, \mathcal{C})$ and $\mathcal{T'} = (\Sigma', \mathcal{C}')$, the \textit{combination} of the two theories is a theory whose signature is $\Sigma \cup \Sigma'$, and a model for it is a structure such that its restriction onto $\Sigma$ is in $\mathcal{C}$ and its restriction onto $\Sigma'$ is in $\mathcal{C'}$.
%We refer to 0-ary predicates as Boolean constants, and to 0-ary uninterpreted functions as (theory) constants. 

A literal is an atom or its negation. A clause is a disjunction of literals. A ground term is a term that does not 
contain variables. A formula is in conjunctive normal form (CNF) iff it is a conjunction of clauses. 
%If $X$ is a set of variables, and $\phi$ is a formula, we use the notation $\phi(X)$ to denote that all the variables occurring free in $\phi$ are in $X$.
If $\Sigma$ is a signature, and $X$ are a set of symbols not contained in $\Sigma$, we write $\phi(X)$ to denote a formula in the new signature expanded with such symbols (denoted with $\Sigma(X)$).

If $\phi$ is a formula, $t$ is a term and $x$ is a symbol that occurs in $\phi$, we write $\phi[x/t]$ for the substitution of every occurrence of $x$ with $t$. If $T$ and $X$ are vectors of the same length, we write $\phi[X/T]$ for the
simultaneous substitution of each $x_i$ with the corresponding term $t_i$. If $F$ denotes a set of formulae, with a slight
abuse of notation, we may use again the symbol $F$ to denote instead the formula given by the conjunction of all
the elements in $F$.

In Boolean logic, models can be represented by Boolean formulae simply by considering a conjunction between literals 
that are true in the model. A similar concept in first-order logic is the one of diagram. Diagrams are a method 
to represent a set of finite first-order models with an existentially quantified formula. We report the notion from \cite{updr}:
\begin{defn}
\label{defn-diagram}
Given a finite model $\mathcal{M}$, let $x_{m_i}$ be a fresh variable for each 
element $m_i$ in the universe of the model. The diagram of $\mathcal{M}$ over $\Sigma$ is the existential closure of the conjunction of the following literals: 
\begin{itemize}
\item[(i).] inequalities of the form $x_{m_i} \neq x_{m_j}$ for every pair of distinct elements $m_i, m_j$ in the model;
\item[(ii).] equalities of the form $c = x_m$ for every constant symbol $c$ in $\Sigma$ such that $\mathcal{M} \models c = m$;
\item[(iii).] the atomic formula $P(x_{m_{i_1}}, \dots, x_{m_{i_n}})$ for every predicate symbol $P$ of ariety $n$, if $\mathcal{M} \models P(m_{i_1}, \dots, m_{i_n})$, or the atomic formula $\neg P(m_{i_1}, \dots, m_{i_n})$ otherwise;
\item[(iv).] the atomic formula $f(x_{m_{i_1}}, \dots, x_{m_{i_n}}) = x_m$ for every function symbol $f$ of ariety $n$ if $\mathcal{M} \models f(m_{i_1}, \dots, m_{i_n}) = m$.
\end{itemize} 
\end{defn}

\subsection{Symbolic transition systems}
In the following, we suppose fixed a theory $\mathcal{T} = (\Sigma, \mathcal{C})$.
Given a set of constants $X$, we denote with $X'$ the set $\{x'\ |x\in X\}$. If $F(X)$ is a formula, we denote with $F'$ the formula $F[X/X']$. A symbolic transition system is a triple $( X,\iota(X), \tau(X, X'))$, where $X$ is a set of constants, called the \textit{state variables} of the system,  
and $\iota(X)$, $\tau(X, X')$ are $\Sigma(X, X')$-formulae.  
A state is a given by a model $\mathcal{M}$ for $\mathcal{T}$, and a valuation $s$ of the state variables $X$ in the universe of $\mathcal{M}$.
 A state is initial iff it is a model of $\iota(X)$, i.e. $\mathcal{M}, s\models \iota(X)$. A couple of states $(\mathcal{M}, s)$, $(\mathcal{M}, s')$ denote a transition iff
$\mathcal{M}, s, s'\models \tau(X, X')$, also denoted as $T(s, s')$. When clear from the context, we may omit the model $\mathcal{M}$ from the notation, and consider it fixed.
A variable $x$ is a \textit{frozen} variable
iff its value is fixed during every evolution of the system, i.e. if the constraint $x' = x$ is implied by $\tau$.
A path is a sequence of states $s_0, s_1, \ldots$ such that $s_0$ is initial and $T(s_i,s'_{i+1})$ for all $i$. If $\pi$ is a path, we denote with $\pi[j]$ the $j$-th element of $\pi$. A state $s$ is reachable iff there exists a path $\pi$ such that $\pi[i]=s$ for some $i$.

A formula $\phi(X)$ is an invariant of the transition system $C = ( X,\iota(X),$ $\tau(X, X'))$ iff it holds in all the reachable states. 
Following the standard model checking notation, we denote this with $C\models\phi(X)$.%
\footnote{Note  that  we  use  the  symbol $\models$  with  three  different  denotations: if $\phi,\psi$ are formulae, $\phi \models \psi$ denotes that $\psi$ is a logical consequence of $\phi$; if $\mu$ is an interpretation, and $\psi$ is a formula, $\mu \models \psi$ denotes that $\mu$ is a model of $\psi$; if $C$ is a transition system, $C \models \psi$ denotes that $\psi$ is an invariant of $C$.}% The different meanings will be clear from the context.}
A formula $\phi(X)$ is an inductive invariant for $C$ iff  $\iota(X) \models \phi(X)$ and 
$\phi(X)\wedge \tau(X, X')\models \phi(X')$. 
Given a formula $\psi(X)$, we say that a formula $\phi$ is an inductive invariant for $\psi$ and $C$ if $\phi$ is an 
inductive invariant and $\phi \models \psi$.

\subsection{Approximate Reachability Sequence}
We introduce here some standard concept to simplify the exposition of one of the algorithms, which is based on IC3\cite{ic3}. %IC3 was first introduced for finite-state systems defined by propositional logic formulas and with Boolean state constants. 
The algorithm maintains a \textit{trace}, i.e. an ordered sequence $F_0, \dots, F_N$ of formulae, called frames, where the formula $F_i$ over-approximate the set of states reachable in up to $i$ transitions. More formally, the trace of IC3 is an \textit{approximate reachability sequence}:
\begin{defn}
\label{def-approx-reach}
    Let $C = ( X,\iota(X), \tau(X, X'))$ be a symbolic transition system and $\phi(X)$ a formula. A sequence of formulas $F_0(X), \dots, F_N(X)$ is an \textit{approximate reachability sequence} for $C$ and $\phi$ if:
    \begin{itemize}
        \item $\iota(X) \models F_0(X)$;
        \item $F_i(X) \models F_{i+1}(X)$ for all $0 \leq i < N$;
        \item $F_i(X) \wedge \tau(X, X') \models F_i(X')$ for all $0 \leq i < N$;
        \item $F_i(X) \models \phi(X)$ for all $0 \leq i < N$.
    \end{itemize}
\end{defn}
An approximate reachability sequence can be used to prove that a formula $\phi$ is an invariant for a system, thanks to the following:
\begin{prop}[~\cite{ic3}]
\label{prop-approxi-sequence}
  Let $C = ( X,\iota(X), \tau(X, X'))$ be a transition system and $\phi$ a formula. Let $F_0, \dots, F_N$ be an \textit{approximate reachability sequence} for $C$ and $\phi$. If for some $0 \leq i < N, F_{i+1} \models F_i$, then $C \models \phi$. Moreover, $F_i$ is an inductive invariant for $\phi$ and $C$.
\end{prop}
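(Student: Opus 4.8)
The plan is to establish the two claims in order: first that $F_i$ is an inductive invariant, and then that $C \models \phi$ follows, using Proposition's hypotheses together with Definition~\ref{def-approx-reach}. Suppose $0 \leq i < N$ and $F_{i+1} \models F_i$. We already know from the defining conditions of an approximate reachability sequence that $F_i \models F_{i+1}$, so in fact $F_i$ and $F_{i+1}$ are logically equivalent. The goal is to check that $F_i$ satisfies the two requirements of an inductive invariant for $\phi$ and $C$: namely $\iota(X) \models F_i(X)$, the consecution condition $F_i(X) \wedge \tau(X,X') \models F_i(X')$, and the safety condition $F_i(X) \models \phi(X)$.

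First I would handle initiation. From the sequence conditions, $\iota(X) \models F_0(X)$, and the chain $F_0 \models F_1 \models \dots \models F_i$ (obtained by transitivity from the monotonicity condition $F_j \models F_{j+1}$) gives $\iota(X) \models F_i(X)$. Next, safety: the sequence condition gives $F_j(X) \models \phi(X)$ for all $0 \leq j < N$; since $i < N$, we directly get $F_i(X) \models \phi(X)$. The main step is consecution. The sequence provides $F_i(X) \wedge \tau(X,X') \models F_i(X')$ directly as one of its conditions — wait, that is exactly the third bullet of Definition~\ref{def-approx-reach}, so consecution for $F_i$ holds without even invoking the equivalence $F_i \equiv F_{i+1}$. (The equivalence is what lets the argument go through in the classical IC3 setting where the third bullet is stated as $F_i \wedge \tau \models F_{i+1}'$ rather than $F_i \wedge \tau \models F_i'$; with the formulation used here, one still needs $F_{i+1} \models F_i$ to push $F_{i+1}'$ down to $F_i'$, so the hypothesis is genuinely used.) Concretely: $F_i(X) \wedge \tau(X,X')$ entails $F_{i+1}(X')$ if the third bullet were in the classical form, and then $F_{i+1}(X') \models F_i(X')$ by priming the hypothesis $F_{i+1} \models F_i$, yielding $F_i(X) \wedge \tau(X,X') \models F_i(X')$.

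Having verified that $F_i$ is an inductive invariant with $F_i \models \phi$, the conclusion $C \models \phi$ is immediate by a standard induction on path length: any reachable state lies in $F_i$ (base case from initiation, inductive step from consecution), hence satisfies $\phi$ by safety. The one point requiring care — and the only real obstacle — is matching the precise form of the third bullet of Definition~\ref{def-approx-reach} as written in this paper against the classical IC3 formulation, to make sure the hypothesis $F_{i+1} \models F_i$ is invoked at the right place; everything else is routine propagation of logical consequence through the chain of frames.
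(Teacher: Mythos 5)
Your proof is correct and follows the standard IC3 argument; the paper itself gives no proof of this proposition, citing it from the IC3 literature, so there is nothing to diverge from. Your observation that the third bullet of Definition~\ref{def-approx-reach} as literally written ($F_i(X) \wedge \tau(X,X') \models F_i(X')$ rather than the classical $F_i(X) \wedge \tau(X,X') \models F_{i+1}(X')$) would make every frame inductive on its own and render the hypothesis $F_{i+1} \models F_i$ superfluous is a correct diagnosis of an apparent typo in the definition, and you rightly supply the consecution step for the intended classical formulation, which is where that hypothesis is genuinely needed.
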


\subsection{Abstraction and Refinement}
\label{subs-simul}
Abstraction is one of the most common tools in model checking. Often, one can consider a simplified version of a problem by removing some information, and reducing it to a simpler system. More formally, the notion of abstraction is embodied by the definition of simulation:
\begin{defn}[Simulation]
\label{def-simul}
  Given two symbolic transition systems $C_1 = (X_1,\iota_1,\tau_1 )$ and
  $C_2 = (X_2,\iota_2,\tau_2)$,
  with sets of states $S_1$ and $S_2$, a \textit{simulation} $\mathcal{S}$
  is a relation $\mathcal{S} \subset S_1 \times S_2$, such that:
\begin{itemize}
\item for every $s_1 \in S_1$ such that $s_1 \models \iota_1$,
 there exists some $s_2 \in S_2$ such that $(s_1, s_2)\in\mathcal{S}$ and $s_2\models\iota_2$;
\item for every $(s_1,s_2)\in\mathcal{S}$, and for every $s_1' \in
  S_1$ such that $s_1, s_1'\models \tau_1$, there some
  $s_2' \in S_2$ such that
  $(s'_1,s_2') \in \mathcal{S}$ and  $s_2, s_2'\models\tau_2$.
\end{itemize} 
\end{defn}

It is easy to see that simulation preserves reachability: if there exists a state $s$ of $C_1$ reachable by a path $\pi$ of length $k$, then there exists a state $s'$ of $C_2$ reachable by a path $\pi'$ of length $k$ such that, for every $i$, $(\pi[i], \pi'[i]) \in \mathcal{S}$. Often, simulation preserves the truth values of certain atomic propositions; in these cases, simulations preserve also the validity of reachability properties that are defined over such atoms.
A weaker notion of simulation that still preserves reachability (even if with paths of different lengths) is the following:

\begin{defn}[Stuttering Simulation]
\label{def-stutter-simul}
  Given two symbolic transition systems $C_1 = (X_1,\iota_1,\tau_1 )$ and
  $C_2 = (X_2,\iota_2,\tau_2)$,
  with sets of states $S_1$ and $S_2$, a \textit{stuttering simulation} $\mathcal{S}$
  is a relation $\mathcal{S} \subset S_1 \times S_2$, such that:
\begin{itemize}
\item for every $s_1 \in S_1$ such that $s_1 \models \iota_1$,
 there exists some $s_2 \in S_2$ such that $(s_1, s_2)\in\mathcal{S}$ and $s_2\models\iota_2$;
\item for every $(s_1,s_2)\in\mathcal{S}$, and for every $s_1' \in
  S_1$ such that $s_1, s_1'\models \tau_1$, there exists either some
  $s_2' \in S_2$ such that
  $(s'_1,s_2') \in \mathcal{S}$ and  $s_2, s_2'\models\tau_2$, or some $(s_2', s_2'') \in S_2 \times S_2$ such that $(s'_1,s_2'') \in \mathcal{S}$, and  $s_2, s_2'\models\tau_2$,  $s_2', s_2''\models\tau_2$.
\end{itemize} 
\end{defn}

One of the most used abstraction paradigms in model checking of infinite-state systems is Predicate Abstraction. This technique is usually used to reduce to the Boolean case: given an infinite-state system $C$ and a set of predicates, one can build a new system $\hat{C}$, purely propositional, such that $\hat{C}$ simulates $C$. Intuitively, the new system $\hat{C}$ is obtained with a quotient of $C$, by considering only the truth values of the fixed set of predicates.  
In general, the explicit computation of the system $\hat{C}$ can be very expensive, requiring typically ALLSAT techniques \cite{ALLSAT}. However, more efficient encodings are possible if the simulation relation can be expressed in logic \cite{ImplicitAbstraction, DBLP:conf/cav/LahiriBC05}.
%However, the method of implicit abstraction \cite{ImplicitAbstraction} was used in \cite{msatic3} to generalize Boolean IC3 to systems with quantifier-free SMT theories.  In that work, the abstract system is not computed explicitly every time, but it is implicitly handled by encoding the abstraction relation directly into logical queries. 

\subsubsection{Refinement by Craig interpolants}
\label{subs-interpol}
Usually, abstractions are not enough to prove the desired property, and spurious counterexamples may occur. In this case, the abstraction needs to be refined to eliminate the abstract counterexample. Therefore, most abstraction-based algorithms perform CEGAR (counterexample abstraction refinement)\cite{CEGAR}: abstract counterexamples are continuously analyzed to improve the precision of the abstraction. Since the work of McMillan \cite{Lazyabs}, a common approach for CEGAR loops for symbolic transition systems is based on interpolants. 
\begin{defn}
Given an ordered set of formulae $(\phi, \psi)$ in a theory $\mathcal{T}$, a Craig interpolant is a formula $\iota$ such that:
\begin{itemize}
    \item[(i)] $\phi \models \iota$;
    \item[(ii)] $\psi \wedge \iota$ is -unsatisfiable;
    \item[(iii)] all the uninterpreted symbols occurring in $\iota$ occur in both $\phi$ and $\psi$.
\end{itemize}
\end{defn}
This definition can be extended to an ordered sequence of formulae $\phi_0, \dots, \phi_n$ 
such that their conjunction is unsatisfiable, obtaining a \textit{sequence interpolant} $\iota_1, \dots, \iota_n$ 
such that, for all $i$:
\begin{itemize}
    \item[(i)] $\bigwedge_{0 \leq k < i} \phi_k \models \iota_i$;
    \item[(ii)] $\bigwedge_{i \leq k \leq n }\phi_k \wedge \iota_i$ is unsatisfiable;
    \item[(iii)] $\iota_i \wedge \phi_{i+1} \models \iota_{i+1}$ for all $1 \leq i < n$;
    \item[(iv)] all the uninterpreted symbols occurring in $\iota_i$ occur in both $\bigwedge_{0 \leq k < i} \phi_k$ and $\bigwedge_{i \leq k \leq n }\phi_k$.
\end{itemize}
Interpolants and sequence interpolants can be efficiently computed in several important theories with several techniques. e.g. see \cite{InterpolMathsat}.

In the context of predicate abstraction, given an abstract counterexample $\pi$, the predicates given by a sequence of interpolants are guaranteed to rule out the counterexample, once they are added to the set of predicates for the abstraction \cite{Lazyabs}.

%%% Local Variables:
%%% mode: latex
%%% TeX-master: "main"
%%% End:

\section{Problem statement}

\label{sec-problem}
The problem discussed in this paper is the problem of invariant checking over a class of symbolic transition systems. Such a  class is a generalization of the formalism introduced in \cite{MCMT, Ivy2016}, especially suited for the verification of parameterized systems. We call this class \textit{array-based transition systems}, following \cite{MCMT}, even if our formalism is more general, allowing general transition formulae. Note that this is also different from standard symbolic transition systems over the extensional theory of arrays.

\subsection{Array-based transition systems}

%We introduce a class of symbolic transition systems, which subsumes many formalisms presented in the literature~\cite{MCMT, Ivy2016}. 
We start by considering two theories; a theory $\mathcal{T_I} = (\Sigma_I, \mathcal{C}_I)$, called the $index$ theory, whose class of models does not contain infinite universes. In practice, this is often the theory of an uninterpreted sort with equality, whose class of models includes all possible finite (but unbounded) structures. 
In addition, we consider a quantifier-free theory of \textit{elements} $\mathcal{T}_E = (\Sigma_E, \mathcal{C}_E)$, used to model the data of the system. Relevant examples consider as $\mathcal{T}_E$ the theory of an enumerated datatype, linear arithmetic (integer or real), or a combination of those. In addition, we consider an array theory, whose signature only contains the function symbol $\{\cdot [\cdot]\}$, which is interpreted as the function application. Finally,  with \tarray{} we denote the combination of the index, the element, and the array theory. Therefore, the signature of \tarray{} is $\Sigma = \Sigma_I \cup \Sigma_E \cup \{\cdot [\cdot]\}$, 
and a model for it is given by a set of total functions from the universe of a model of the index sort, to a universe of the element sort, and $\cdot[\cdot]$ is interpreted as the function application.  In the following, we will denote with letters $I, J$ variables of index sort, while we use the letter $X$ to denote array symbols.
We restrict ourselves to one index theory and one element theory for the sake of simplicity, but typically applications include a combination of several index theories and several element theories. 

\begin{defn}
\label{TS}
An array-based transition systems is a triple $$S = (X, \iota(X), \tau(X, X'))$$
where:
\begin{itemize}
    \item[(i.)] $X$ is a set of symbols of array sorts;
    \item[(ii.)] $\iota(X), \tau(X, X')$ 
      are $\Sigma(X, X')$-formulae  (possibly containing quantified  variables only of sort $\mathcal{T}_I$).
\end{itemize}
\end{defn}

Note that arrays can be also 0-ary or constant functions, thus allowing to have state variables of index or element sort.

\begin{exm}{}
We show how to model a simplified version of the Bakery algorithm in our formalism. We use as $\mathcal{T}_I$ the theory of equality. As for $T_E$, we need a combination of two theories: an enumerated datatype with values $\{idle, wait, crit\}$, and $\mathcal{LIA}$ (linear integer arithmetic). 
We define four state variables: one array $state$ with values in  $\{idle, wait, crit\}$, one array $ticket$ with values in $\mathbb{Z}$, and two integer variables $next\_ticket$ and $to\_serve$. The initial formula of the system is: \[ \forall i.state[i] = idle \wedge ticket[i] = 0 \wedge next\_ticket = 1 \wedge to\_serve = 1 \ \]
The transition formula is the disjunction of the three formulae; the first one models a process that goes from idle to wait and takes a ticket:
\begin{align*}
    \exists i. \big(state[i] = idle \wedge state'[i] = wait \wedge t'[i] = next\_ticket \wedge {}\\
     next\_ticket' = next\_ticket + 1 \wedge to\_serve' = to\_serve \\ {} \wedge \forall j. (j \neq i \rightarrow state'[j] = state[j] \wedge ticket'[j] = ticket[j])\big).
\end{align*}
In the second disjunct, a process enters the critical section if its ticket is selected:
\begin{align*}
    \exists i. \big(state[i] = wait \wedge state'[i] = crit \wedge t[i] = to\_serve \\ 
    {} \wedge next\_ticket' = next\_ticket \wedge to\_serve' = to\_serve \wedge \\
    {} \forall j. (j \neq i \rightarrow state'[i] = state[i])  \wedge (\forall j. ticket'[j] = ticket[j])\big).
\end{align*}
Finally, a process exits from the critical state and reset its ticket:
\begin{align*}
    \exists i. \big(state[i] = crit \wedge state'[i] = idle \wedge t'[i] = 0 \wedge\\
    {} next\_ticket' = next\_ticket \wedge to\_serve' = to\_serve + 1 \wedge \\
    {}\forall j. (j \neq i \rightarrow state'[j] = state[j] \wedge ticket'[j] = ticket[j])\big).
\end{align*}
%The property we want to prove is mutual exclusion: 
%\[ \forall i,j. (i \neq j \rightarrow \neg (state[i] = crit \wedge state[j] = crit)). \]
\qed
\end{exm}

Given a model $\mathcal{M}$ of \tarray, a state of an array-based transition system is a valuation of the elements in $X$ in $\mathcal{M}$, i.e. an assignment of the state variables to functions from a finite universe of the index sort, to an universe of an element sort.  %A path is a sequence of states $s_0, \dots, s_n$ such that $\mathcal{M}, s_0 \models \iota$ and $\mathcal{M}, s_i, s_{i+1} \models \tau(x, x')$. 

The presence of quantifiers in the theory $\mathcal{T}_I$ make the general problem of satisfiability of \tarray{} formulae to be undecidable (even in the case where the element sort is boolean). However, a common fragment of first-order logic used to describe parameterized problems is EPR \cite{FMCADepr} (Effective Propositional Logic), which is decidable. In our setting, we have the following decidability result, which combines classical EPR with ground SMT theories:
\begin{prop}
\cite{MCMT}
\label{prop-decidable-epr}
    If $\Sigma_I$ is relational, and the quantifier-free fragment of the theories $\mathcal{T}_I$ and $\mathcal{T}_E$ are decidable, then the \tarray-satisfiability of formulae of the form
    \begin{equation}
    \label{epr}
      \exists I. \forall J. \phi(I, J, X)  
    \end{equation}
    is decidable.
\end{prop}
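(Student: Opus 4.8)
The plan is to reduce the problem to the satisfiability of a single \emph{quantifier-free} \tarray{}-formula, which can then be decided by combining the decision procedures for the quantifier-free fragments of $\mathcal{T}_I$ and $\mathcal{T}_E$. Throughout we assume $\phi$ is quantifier-free, as is standard for the EPR fragment. First I would Skolemize the leading existential block, replacing $\exists I.\,\forall J.\,\phi(I,J,X)$ by $\forall J.\,\psi(J,X)$, where $\psi(J,X) \defeq \phi(\bar c, J, X)$ and $\bar c$ is a vector of fresh index constants; the two formulas are equisatisfiable in \tarray{} extended with $\bar c$, so it suffices to decide satisfiability of a purely universal formula $\forall J.\,\psi(J,X)$. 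As usual in SMT we may assume the index universe is nonempty.

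The key step uses the hypothesis that $\Sigma_I$ is relational: then the set of ground terms of index sort is \emph{finite}. Indeed $\Sigma_I$ contains no non-constant function symbols, every symbol of $\Sigma_E$ returns an element-sorted term, and the array application $\cdot[\cdot]$ also returns an element, so every index-sorted ground term is one of the index constants occurring in $\Sigma_I$ or in $\bar c$; call this finite, nonempty set $\mathcal{I}_0$ (adding one fresh constant if it would otherwise be empty). I claim
\[ \forall J.\,\psi(J,X) \text{ is satisfiable} \iff \Phi \defeq \bigwedge_{\bar t \in \mathcal{I}_0^{|J|}} \psi(\bar t, X) \text{ is satisfiable}. \]
The forward direction is immediate. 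For the converse, given $\mathcal{M}\models\Phi$, restrict the index universe of $\mathcal{M}$ to $\{c^{\mathcal{M}}\mid c\in\mathcal{I}_0\}$, leaving the element universe and the $\Sigma_E$-reduct untouched; since $\Sigma_I$ is relational its reduct restricts to a genuine substructure, and each array $X^{\mathcal{M}}$ restricts to a total function from the smaller index domain into the element universe. The resulting structure is still a model of \tarray{} — here one uses that $\mathcal{C}_I$ is closed under substructures, which is immediate for the theory of an uninterpreted sort with equality (the intended instance) and which one should assume of an index theory in general — and in it $\forall J.\,\psi$ holds, because every element of the index universe is named by some $c\in\mathcal{I}_0$ and every instance $\psi(\bar t,X)$ holds.

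It remains to decide the quantifier-free \tarray{}-formula $\Phi$. I would purify the array reads: for each array $X$ and index term $t$ occurring in $\Phi$ introduce a fresh element constant $x_{X,t}$ to denote $X[t]$, and add the finitely many congruence constraints $t = t' \rightarrow x_{X,t} = x_{X,t'}$ (ranging over $\mathcal{I}_0$). Now every atom of $\Phi$ is either a pure $\Sigma_I$-atom over index constants or a pure $\Sigma_E$-atom over element terms, i.e. $\Phi$ is a Boolean combination of atoms from two theories over disjoint signatures and, crucially, disjoint sorts. Its satisfiability is then decidable by a Nelson--Oppen / DPLL($\mathcal{T}$) combination of the decision procedures for the quantifier-free fragments of $\mathcal{T}_I$ and $\mathcal{T}_E$ (combining over disjoint sorts requires no stable-infiniteness hypothesis); concretely one may branch over the finitely many truth assignments to the atoms that satisfy the Boolean structure together with the congruence clauses, and check the selected $\mathcal{T}_I$-literals for $\mathcal{T}_I$-satisfiability and the selected $\mathcal{T}_E$-literals for $\mathcal{T}_E$-satisfiability. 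Composing the three steps yields the claimed decidability.

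\textbf{Main obstacle.} The delicate point is the converse direction of the finite-instantiation equivalence — building a model of the universal formula from a model of its finite ground instantiation — which is exactly where relationality of $\Sigma_I$ (bounding the index Herbrand universe) and closure of $\mathcal{C}_I$ under substructures are needed; the rest (totality of the restricted arrays, disjointness of the two theories, the combination argument) is routine bookkeeping.
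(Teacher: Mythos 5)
Your proof is correct and follows essentially the standard argument behind this result: the paper itself gives no proof, importing the proposition from \cite{MCMT}, where it is established exactly as you do — Skolemize the existential index variables, instantiate the universal ones over the finite index Herbrand universe (finite precisely because $\Sigma_I$ is relational), and reduce to quantifier-free satisfiability in the combination of $\mathcal{T}_I$, $\mathcal{T}_E$ and the free array symbols. Your explicit caveat that $\mathcal{C}_I$ must be closed under substructures is well taken, since it is an implicit standing hypothesis of the array-based framework that the statement here omits; the only bookkeeping slip is that $\mathcal{I}_0$ should also include any index-sorted constants among the state variables $X$ (the paper allows $0$-ary arrays of index sort), which affects neither finiteness nor the rest of the argument.
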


%Moreover, the satisfiability result from \cite{MCMT} guarantees that such formulae are satisfiable iff they are satisfiable in a finite index model.

\subsubsection{The invariant problem for array-based transition systems}
\label{subs-subs-problem}
Let $S$ be an array-based transition system, and $\phi(X)$ a $\Sigma(X)$-formula, possibly containing quantified variables of sort $\mathcal{T}_I$. The \textit{Invariant Problem} we consider is the problem of deciding whether $S \models \phi(X)$.  The problem is in general undecidable since it subsumes undecidable problems such as safety of parameterized systems \cite{MCMT, ParamDecidab, DecidabInvariants}.
\begin{exm}
    Following the last example, the property we want to prove is mutual exclusion: 
\[ \forall i,j. (i \neq j \rightarrow \neg (state[i] = crit \wedge state[j] = crit)). \]
\end{exm}

\subsection{Ground instances}
\label{subs-ground} 
If a finite cardinality $n$ for the $\mathcal{T}_I$-models is fixed, one can construct a quantifier-free under-approximation of $S$ by considering another symbolic transition system, called ground $n$-instance, by considering as states functions with, as domains, sets of only that size. 
%This concept is of great importance for the remaining work and will be defined in detail in this section.
In this section, we define how to construct such a system.
First, we consider $C = \{c_1, \dots, c_n\}$ a set of fresh constants of index sort. These will be frozen variables 
of the ground instance; moreover, they will be also considered all implicitly different. In the following, if $\phi(X)$ is a formula with quantifiers of only sort $\mathcal{T}_I$, we denote $\phi_n(C, X)$ the quantifier-free formula obtained by grounding the quantifiers in $C$, i.e. by recursively applying the rewriting rules:
\begin{align}
    \label{rewriting-rules}
    \forall i. \phi'(i, X) \mapsto \bigwedge_{k=1}^n \phi'(i, X)[i/c_k] \\
    \exists i. \phi'(i, X[i]) \mapsto \bigvee_{k=1}^n \phi'(i, X)[i/c_k]
\end{align}    

Finally, we also need to restrict function symbols with values in $\mathcal{T}_I$. To do so, for each function symbol $a$ in $\Sigma_I$ whose codomain type is $\mathcal{T}_I$, we define the formula
\[\alpha(a) \defeq \forall i_1, \dots, i_m \exists j. a(i_1, \dots, i_m) = j,\]
where $m$ is the ariety of $a$, and $i_1, \dots, i_m, j$ are fresh variables of index sort.

\begin{defn}[Ground $n-$instance]
Let $S = (X, \iota(X), \tau(X, X'))$ be an array-based transition system, and $n$ an integer. Let $\bar{\iota}(X) \defeq \iota(X) \wedge \bigwedge_{a} \alpha(a)_n$, and $\bar{\tau}(X, X') \defeq \tau(X, X') \wedge \bigwedge_{a} \alpha(a)_n$, where $a$ ranges over each function symbol in $\Sigma_I$ whose codomain type is $\mathcal{T}_I$.
The ground $n$-instance of the system $S$ is a symbolic transition system $S_{n}$ defined by:
\[S_{n} = \big(C \cup X, \bar{\iota}_n(C, X), \bar{\tau}_n(C, X, X')\wedge C' = C  \big). \]
\end{defn}
To simplify the use of notation, we will denote with $\iota_n$ and $\tau_n$ the initial and transition formula of the $n$-instance $S_{n}$. Observe that a state of $S_{n}$ is given by: $(i)$ a valuation of the symbols $C$ in a finite index universe, and $(ii)$ an interpretation of the state variables X as functions from that universe to an element sort. Note that even if the models of $T_I$ are all finite, the set of states of $S_{n}$ can be infinite, since $\mathcal{T}_E$ could have an infinite model, e.g. if integer or real variables are in the system. 
%Nevertheless, the system can be model checked efficiently by modern symbolic SMT techniques like~\cite{msatic3}.

We note that, due to restriction on finite index models, it follows from the definition that the invariant problem for an array-based transition system, $S \models \phi$, is equivalent to checking if $S_{n} \models \phi_n$ for all possible $n$.

\section{UPDR with implicit abstraction}
\label{sec-updria}
\subsection{Overview}
The paper \cite{updr} proposes a variant of the PDR/IC3 \cite{ic3} algorithm to solve the problem stated in \ref{subs-subs-problem}, but only in the case of $\mathcal{T}_E$ equal to the theory of Boolean. The algorithm, called UPDR (Universal Property Directed Reachability), represents first-order models by formulae with the notion of diagrams (see preliminaries). %When extending UPDR to our general setting, a naive usage of diagrams would cause the algorithm to diverge easily since it would consider only too specific models at every iteration. 
When extending UPDR to our general setting, a main challenge is how to extend the computation of diagrams, to handle a general SMT theory $\mathcal{T}_E$. In this section, we combine UPDR with a form of predicate abstraction, so that every diagram computed by the procedure will still be over models of index sort only. 

First, we introduce the necessary notions for the predicate abstraction, and how to use them in our context. Then, we illustrate the general algorithm, and we discuss some approaches that we used for its implementation. Finally, we study its theoretical properties, most of which are a direct lifting of the ones given in \cite{updr}. The proofs of the main results are given in the appendix. 
%When introducing theories, IC3 cannot be generalized simply by switching an SAT solver with an SMT solver. This is because a key passage of IC3 is the blocking step: when a bad model is encountered, the algorithm tries to prove the unreachability of all the states that can reach it in one step. In order to represent this backward reachable set of states with a state formula, quantifier elimination is often needed. 

\subsection{Implicit Indexed Predicate Abstraction}
\label{subs-index-predicate-abs}

Predicate abstraction has been commonly considered for quantifier-free systems, but indexed predicates were introduced to abstract some restricted classes of systems with quantifiers \cite{indexpredicates}.
As already mentioned in the preliminaries, building the abstract version of a system can be expensive; however, various techniques that encode the abstract transition relation with logical formulas, such as \cite{ImplicitAbstraction, DBLP:conf/cav/LahiriBC05}, have proven to be successful in the quantifier-free and Boolean case. 
% I think Ken here wanted to stress the fact that IA is nothing new, 
% maybe lower the tone of 'IA is something new' + add ken citation
In this section, we introduce a form of abstraction suitable for our purposes: instead of Boolean variables, we consider a set of first-order predicates, where free variables may occur. 

In the following, we fix a set $I=\{i_1, \cdots, i_k\}$ of variables of index sort, and a finite set $\mathcal{P}(I) =$ $\{p_1(I, X), \dots, p_n(I, X) \}$  of $\Sigma(I)$ atoms (i.e. predicates over $\Sigma$ possibly containing free variables $I$ of sort $\mathcal{T}_I$). 
 We can write $\mathcal{P}$ instead of $\mathcal{P}(I)$ for the sake of simplicity.\\ 
\begin{exm}
%\label{predicateset}
In the bakery example, we can consider as a set of index predicates: 
\[\mathcal{P}(i_1) = \{ state[i_1] = idle, ticket[i_1] = 0, next\_ticket = 1, to\_serve = 1, state[i_1] = crit \}, \]
where $i_1$ is a free variable of index sort.

\end{exm}

We now introduce and generalize the main concepts of implicit abstraction for the quantified case.
%\todo[inline]{spiegare questa definizione}
\begin{defn}
Given $\mathcal{P}(I)$ a set of index predicates, we define:
\begin{itemize}
    \item $X_{\mathcal{P}(I)}$ is a set of fresh $\mathcal{T}_I$ predicates, one for each element $ p \in \mathcal{P}(I)$, and with ariety the number of free variables in $p$:
    \[  X_{\mathcal{P}(I)} \defeq \{ x_{p(I, X)}(I)\mid p(I, X) \in \mathcal{P}(I) \};\]
    \item the formula \[ H_\mathcal{P}(X_{\mathcal{P}}, X) \defeq \forall I. \big(\bigwedge_{p(I, X) \in \mathcal{P}} x_{p(I, X)}(I) \leftrightarrow p(I, X)\big)  \]
\item the formula
\[ EQ_\mathcal{P}(X, X') \defeq \forall I. \big(\bigwedge_{p(I, X) \in \mathcal{P}} p(I, X) \leftrightarrow p(I, X'))  \]
\end{itemize}
\end{defn}
In the above definition, the new predicates $X_\mathcal{P}$ will act as new state variables of the abstract system. The formula $H_\mathcal{P}$ is used to define the simulation relation between an abstract and a concrete state. Finally, the formula $EQ_\mathcal{P}$ is used to compute the abstraction implicitly.
%TODO: maybe add here comment about the fact the eq is not really needed
\begin{exm}
Considering again the bakery algorithm, given the set of predicates of the previous example, we have a set of abstract variables
\begin{align*}
   X_{\mathcal{P}(i_1)} = \{ x_{state[i_1] = idle}(i_1), x_{ticket[i_1] = 0}(i_1), 
   x_{next\_ticket = 1}, 
   x_{to\_serve = 1},  \\
x_{state[i_1] = crit}(i_1) \}
\end{align*}
Moreover, we have that $H_\mathcal{P}(X_{\mathcal{P}}, X)$ is equal to:\begin{align*}
     \forall i_1. (x_{state[i_1]= idle}(i_1) \leftrightarrow state[i_1] = idle) \wedge \forall i_1. (x_{ticket[i_1] = 0}(i_1) \leftrightarrow ticket[i_1] = 0) \\ {} \wedge x_{next\_ticket = 1} \leftrightarrow next\_ticket = 1 \wedge x_{to\_serve = 1} \leftrightarrow to\_serve = 1 \wedge \\ {} \forall i_1. (x_{state[i_1] = critical}(i_1) \leftrightarrow state[i_1] = critical)
\end{align*}
%\qed
\end{exm}
%Also, if $\psi$ is a formula, we denote with $\hat{\psi}$ the formula obtained by substituting each of its atoms occurring in $\mathcal{P}(I)$ with the ones in $X_{\mathcal{P}(I)}$.\\

Given a formula $\phi$, the predicate abstraction of $\phi$ with respect to $\mathcal{P}$, denoted $\hat{\phi}_{\mathcal{P}}$, is obtained by adding the abstraction relation to it and then existentially quantifying the variables $X$, i.e.,  $\hat{\phi}_{\mathcal{P}}(X_{\mathcal{P}}) \defeq \exists X. (\phi(X) \wedge  H_\mathcal{P}
(X_{\mathcal{P}}, X) )$, and similarly for a (transition) formula over $X$ and $X'$ we have $\hat{\phi}_{\mathcal{P}}(X_{\mathcal{P}}, X'_{\mathcal{P}}) \defeq \exists X, X'.  (\phi(X, X') \wedge  H_\mathcal{P} (X_{\mathcal{P}}, X) \wedge H_\mathcal{P} (X'_{\mathcal{P}}, X') )$. Moreover, given $\phi$, we denote with $\bar{\phi}$ the formula obtained by replacing the atoms in $\mathcal{P}$ with the corresponding predicates. Dually, if $\phi$ is a formula containing the $X_{\mathcal{P}}$ predicates, with $\phi[X_{\mathcal{P}} / \mathcal{P}]$ we denote the formula obtained by restoring the original predicates in place of the abstract ones. Remark that these are not substitutions, since the index variables may have been renamed, and thus beta-conversion could be needed. 
It is easy to see that, for any formula $\psi$, if all atoms of $\psi$ occur in $\mathcal{P}$, then $\hat{\psi}$ and $\bar{\psi}$ are logically equivalent under the assumption $\mathcal{H}_{\mathcal{P}}$.
%Given an interpretation $\mu$ of $X$, the corresponding abstract interpretation of
%$X_\mathcal{P}$ is denoted by $\hat{\mu}$ and defined as the interpretation %$\hat{\mu} [x_p \defeq \mu(p)] $
The indexed predicate abstraction of a system \system is obtained by abstracting the initial and the transition conditions, i.e. 
$\hat{S}_\mathcal{P} = (X_\mathcal{P}, \hat{\iota}_\mathcal{P}(X_\mathcal{P}), \hat{\tau}_\mathcal{P}(X_\mathcal{P}, X'_\mathcal{P}))$. When clear from context, we will omit the subscript $\mathcal{P}$. 
We have:
\begin{prop}
\label{prop-index-abs-1}
   Let \system an array-based transition system, and $\phi(X)$ a formula. Let $\mathcal{P}$ a set of index predicates which contains all the atoms occurring in $\phi$. If $\hat{S}_\mathcal{P} \models \hat{\phi}$ then $S \models \phi$.
\end{prop}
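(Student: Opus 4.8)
The plan is to exhibit $\hat{S}_\mathcal{P}$ as an abstraction of $S$ in the precise sense of Definition~\ref{def-simul}, using $H_\mathcal{P}$ to define the simulation relation, and then to push the invariant $\hat{\phi}$ back along the simulation. I would first set up the semantic framework: regard an abstract state as a $\mathcal{T}_I$-structure that shares the index universe and the $\Sigma_I$-interpretation with the underlying concrete model, together with an interpretation of the fresh predicates $X_\mathcal{P}$. This makes $H_\mathcal{P}(X_\mathcal{P}, X)$, which mixes concrete and abstract vocabulary, meaningful as a formula over the joint signature. I then define $\mathcal{S}$ to relate a concrete state $s$ to an abstract state $\hat{s}$ exactly when the combined assignment satisfies $H_\mathcal{P}(X_\mathcal{P}, X)$, i.e. when $\hat{s}$ interprets each $x_{p(I,X)}$ as the truth set of $p(I, X)$ at $s$. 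The structural property that matters is that $\mathcal{S}$ is total on the left: every concrete state $s$ has a companion $\hat{s}$, obtained simply by evaluating the predicates of $\mathcal{P}$ at $s$.

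Next I would verify the two clauses of Definition~\ref{def-simul}. For initial states: if $s \models \iota(X)$ and $\hat{s}$ is its companion, then $(s, \hat{s}) \models \iota(X) \wedge H_\mathcal{P}(X_\mathcal{P}, X)$, so $\hat{s} \models \exists X.\,(\iota(X) \wedge H_\mathcal{P}(X_\mathcal{P}, X)) = \hat{\iota}_\mathcal{P}$. For transitions: if $(s, \hat{s}) \in \mathcal{S}$, $s, s' \models \tau(X, X')$, and $\hat{s}'$ is the companion of $s'$, then the tuple $(s, s', \hat{s}, \hat{s}')$ satisfies $\tau(X, X') \wedge H_\mathcal{P}(X_\mathcal{P}, X) \wedge H_\mathcal{P}(X'_\mathcal{P}, X')$, and existentially quantifying $X, X'$ yields $\hat{s}, \hat{s}' \models \hat{\tau}_\mathcal{P}$. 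In both cases the concrete state(s) are the witnesses for the existential quantifiers in the definitions of $\hat{\iota}_\mathcal{P}$ and $\hat{\tau}_\mathcal{P}$, so nothing beyond unfolding the definitions is needed, and $\mathcal{S}$ is a simulation.

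Finally, I would transfer the invariant. Since simulations preserve reachability (the observation stated just after Definition~\ref{def-simul}), every reachable state $s$ of $S$ has a reachable companion $\hat{s}$ of $\hat{S}_\mathcal{P}$ with $(s, \hat{s}) \models H_\mathcal{P}$. By hypothesis $\hat{S}_\mathcal{P} \models \hat{\phi}$, so $\hat{s} \models \hat{\phi}$, hence $(s, \hat{s}) \models H_\mathcal{P} \wedge \hat{\phi}$ (adding the $s$-part of the valuation is harmless, since $\hat{\phi}$ binds $X$). Because $\mathcal{P}$ contains every atom of $\phi$, the remark preceding the proposition gives $H_\mathcal{P} \models \hat{\phi} \leftrightarrow \bar{\phi}$, and moreover $H_\mathcal{P} \models \bar{\phi} \leftrightarrow \phi$ since $H_\mathcal{P}$ forces $x_p \leftrightarrow p$ for every $p \in \mathcal{P}$. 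Therefore $(s, \hat{s}) \models \phi$, i.e. $s \models \phi$; as $s$ was an arbitrary reachable state, $S \models \phi$.

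The main obstacle is not any single calculation but getting the semantic framework straight: one has to be explicit that the concrete and abstract systems are interpreted over a common $\mathcal{T}_I$-structure, so that $H_\mathcal{P}$ genuinely defines a relation between the state spaces of $S$ and $\hat{S}_\mathcal{P}$, and that the companion map $s \mapsto \hat{s}$ is well-defined and left-total. The second delicate point is the direction of the last transfer step: $\hat{\phi} = \exists X.\,(\phi(X) \wedge H_\mathcal{P})$ is an \emph{over}-approximation, so $\hat{s} \models \hat{\phi}$ on its own only asserts that \emph{some} concrete valuation compatible with $\hat{s}$ satisfies $\phi$; it is precisely because $\bar{\phi}$ is a formula in the abstract vocabulary alone, together with $(s, \hat{s}) \models H_\mathcal{P}$, that this is upgraded to $s$ itself satisfying $\phi$. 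Everything else is routine bookkeeping with the definitions.
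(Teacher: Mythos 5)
Your proposal is correct and follows essentially the same route as the paper's proof in the appendix: you define the simulation relation via $H_\mathcal{P}$ over a shared index structure, use the companion map (the paper's Definition~\ref{defn-restriction-index}) to establish left-totality and the two simulation clauses by witnessing the existentials in $\hat{\iota}_\mathcal{P}$ and $\hat{\tau}_\mathcal{P}$ with the concrete valuations, and then transfer the invariant using the fact that $\mathcal{P}$ contains all atoms of $\phi$. The only cosmetic difference is that the paper argues by contraposition from a reachable $\neg\phi$-state while you argue directly, and your explicit remark that the existential shape of $\hat{\phi}$ alone would not suffice without the atoms-in-$\mathcal{P}$ hypothesis is a point the paper treats more tersely.
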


In our algorithm, we start with a system $S$ and use a combination of UPDR, implicit abstraction, and indexed predicates to check if the property holds in an abstract system $\hat{S}$. %Since the state variables of the new systems are the $X_P$'s, which are uninterpreted predicates over $\mathcal{T_I}$ (that enjoys the finite model property), then each state of the abstract system is a finite model. In our algorithm, we will reppresent such models with diagrams over $\Sigma_I \cup X_P$.
However, the abstraction may be too coarse, and spurious counterexamples can occur. Notably, such counterexamples are actually a proof that no universal invariant exists over the set of predicates $\mathcal{P}$: this result is a direct consequence of the fact that our algorithm is a lifting of the original UPDR\cite{updr}.

In the following, when working in the abstract space of $S$, the critical step for the algorithm is repeatedly checking whether a formula representing a model $\psi$ is inductive relative to the frame $F$. Frames will be defined as a set of negation of diagrams (except for the initial frame $F_0$ which is $\hat{\iota}$). The insight underlying implicit abstraction is to perform the check
without actually computing the abstract version of $\tau$, but by encoding the simulation relation with a logical formula. This is done by checking
the formula:
\begin{align*}
     AbsRelInd(F, \tau, \psi, \mathcal{P}) =  F(X_{\mathcal{P}}) \wedge \psi(X_{\mathcal{P}})\wedge H_\mathcal{P}(X_{\mathcal{P}}, X)
      \\
      {} \wedge EQ_\mathcal{P}(X, \bar{X}) \wedge \tau(\bar{X}, \bar{X}')\wedge EQ_\mathcal{P}(\bar{X}', X') \wedge \neg \psi(X_{\mathcal{P}}') \wedge H_\mathcal{P}(X_{\mathcal{P}}', X')  
    \end{align*}

We have: 
\begin{prop}
\label{prop-rel-inductive-1}
Let \system and let $\hat{S}_\mathcal{P}$ be its indexed predicate abstraction. Given any formulae $F, \psi$, then the formulae
    $AbsRelInd(F, \tau, \psi, \mathcal{P})$ and $F(X_{\mathcal{P}}) \wedge \hat{\tau}(X_{\mathcal{P}}, X'_{\mathcal{P}}) \wedge \psi(X_{\mathcal{P}}) \wedge \neg \psi(X'_{\mathcal{P}})$ are equisatisfiable. 
\end{prop}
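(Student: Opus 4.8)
The plan is to establish equisatisfiability by constructing a model of one formula from a model of the other, in both directions; in fact I expect to prove the slightly stronger statement that the two formulas induce the same relation on the abstract variables $X_{\mathcal{P}}, X'_{\mathcal{P}}$, i.e.\ have the same $\{X_{\mathcal{P}}, X'_{\mathcal{P}}\}$-reducts among their models. The guiding observation is that each formula is satisfiable exactly when the abstract system $\hat{S}_\mathcal{P}$ admits a transition $\hat{\tau}(X_{\mathcal{P}}, X'_{\mathcal{P}})$ whose source satisfies $F \wedge \psi$ and whose target falsifies $\psi$. The only variables that differ between the two formulas are the auxiliary ``concrete'' copies $\bar{X}, \bar{X}'$ and the copies $X, X'$ appearing in $AbsRelInd$, which will be reconstructed from, or projected onto, the existential witnesses hidden inside $\hat{\tau}$.

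From $F(X_{\mathcal{P}}) \wedge \hat{\tau}(X_{\mathcal{P}}, X'_{\mathcal{P}}) \wedge \psi(X_{\mathcal{P}}) \wedge \neg\psi(X'_{\mathcal{P}})$ to $AbsRelInd$: take a model $\mathcal{N}$ of the former. Unfolding $\hat{\tau}_{\mathcal{P}}(X_{\mathcal{P}}, X'_{\mathcal{P}}) = \exists X, X'.\,(\tau(X,X') \wedge H_\mathcal{P}(X_{\mathcal{P}}, X) \wedge H_\mathcal{P}(X'_{\mathcal{P}}, X'))$, fix witnesses $a$ for $X$ and $a'$ for $X'$. Extend $\mathcal{N}$ by interpreting both $X$ and $\bar{X}$ as $a$, and both $X'$ and $\bar{X}'$ as $a'$. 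Then $F(X_{\mathcal{P}}), \psi(X_{\mathcal{P}}), \neg\psi(X'_{\mathcal{P}})$ hold because they mention only the shared variables; $H_\mathcal{P}(X_{\mathcal{P}}, X)$ and $H_\mathcal{P}(X'_{\mathcal{P}}, X')$ hold by the choice of witnesses; $\tau(\bar{X}, \bar{X}')$ holds because $\bar X, \bar X'$ are now $a, a'$; and $EQ_\mathcal{P}(X, \bar X)$, $EQ_\mathcal{P}(\bar X', X')$ hold trivially, since in the extended model $X$ and $\bar X$ (resp.\ $X'$ and $\bar X'$) denote the same function, so each biconditional $p(I, \cdot) \leftrightarrow p(I, \cdot)$ is valid. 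Hence the extended model satisfies $AbsRelInd$.

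Conversely, let $\mathcal{M} \models AbsRelInd$, with $X, \bar X, \bar X', X'$ interpreted as $b, \bar b, \bar b', b'$. Keeping the interpretation of $X_{\mathcal{P}}, X'_{\mathcal{P}}$, I claim $\bar b, \bar b'$ witness the existentials in $\hat{\tau}(X_{\mathcal{P}}, X'_{\mathcal{P}})$: indeed $\tau(\bar b, \bar b')$ holds since $\mathcal{M} \models \tau(\bar X, \bar X')$, so it remains to verify $H_\mathcal{P}(X_{\mathcal{P}}, \bar b)$ and $H_\mathcal{P}(X'_{\mathcal{P}}, \bar b')$. This is the one step that requires an actual argument: from $\mathcal{M} \models H_\mathcal{P}(X_{\mathcal{P}}, X)$ and $\mathcal{M} \models EQ_\mathcal{P}(X, \bar X)$ we get, for every $p \in \mathcal{P}$ and every index tuple $I$, both $x_p(I) \leftrightarrow p(I, b)$ and $p(I, b) \leftrightarrow p(I, \bar b)$; transitivity of $\leftrightarrow$ under the common $\forall I$ prefix yields $x_p(I) \leftrightarrow p(I, \bar b)$, i.e.\ $H_\mathcal{P}(X_{\mathcal{P}}, \bar b)$, and symmetrically $EQ_\mathcal{P}(\bar X', X')$ together with $H_\mathcal{P}(X'_{\mathcal{P}}, X')$ gives $H_\mathcal{P}(X'_{\mathcal{P}}, \bar b')$. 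So $\hat{\tau}(X_{\mathcal{P}}, X'_{\mathcal{P}})$ holds, and $F(X_{\mathcal{P}}), \psi(X_{\mathcal{P}}), \neg\psi(X'_{\mathcal{P}})$ carry over unchanged.

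The main obstacle is essentially bookkeeping rather than a deep difficulty: one must keep straight the role of the auxiliary concrete copies $\bar X, \bar X'$ (which exist precisely so that $\hat\tau$ need never be built explicitly) and check that the universally quantified biconditionals $H_\mathcal{P}$ and $EQ_\mathcal{P}$ compose correctly under their shared $\forall I$ prefixes. Carrying out the argument semantically, as above, also sidesteps the renaming/beta-conversion subtlety noted in the text, since we never syntactically substitute the abstract predicates back into a formula. In spirit this is the quantified counterpart of the standard soundness proof of implicit predicate abstraction in the quantifier-free setting; the single genuinely new ingredient is the propagation of predicate equivalences through the index quantifiers.
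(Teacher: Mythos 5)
Your proof is correct and follows essentially the same route as the paper, which only sketches this result by reference to the quantifier-free implicit-abstraction argument (Theorem 1 of the cited work on IC3 with implicit abstraction): both directions hinge on reading $\bar X,\bar X'$ as the existential witnesses of $\hat\tau$, with the only nontrivial step being the transitive composition of $H_\mathcal{P}$ and $EQ_\mathcal{P}$ under their shared $\forall I$ prefix, exactly the "quantified lifting" the paper leaves implicit. Your version also recovers the paper's additional claim that the $\{X_\mathcal{P},X'_\mathcal{P}\}$-reduct is preserved, so nothing is missing.
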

    
\subsection{Algorithm description and pseudocode}
\label{subs-updria-alg}
We can now describe the whole procedure, depicted in the Algorithm \ref{updralg}.
\begin{algorithm}[ht]
\SetAlgoLined
\caption{UPDR + IA}
\label{updralg}
Input: $S = (X, \iota(X), \tau(X, X')), \phi(X)$
\\
    $\mathcal{P}(I) = \{ \textit{set of atoms occuring in } \iota, \phi\}$
\\
\textbf{if not} $\bar{\iota} \wedge H_P \models \bar{\phi}$:\\
\Indp \textbf{return} cex \textit{ \# cex in initial state}
\\
\Indm
$F_0 = \{\bar{\iota}\}$ 
\\
$k = 1, F_k = \emptyset $ \\
\textbf{while} True: \\
\Indp \textbf{while} $F_k \wedge H_P \wedge \neg \bar{\phi}$ is sat: \\
\textit{\# let $\sigma$ be an \tarray{} model, and $\sigma_I$ its restriction on the index sort;} \\
\Indp $\psi = diag(\sigma_I)$\\
\textbf{if not} $RecBlock(\psi, k)$: \\
\Indp %\textit{\# proof of} \\
%\textbf{if}  no universal invariant over $ \Sigma_I \cup X_P$:\\
%\Indp \textbf{return} Fail! \\
%\Indm
\textit{\# an abstract counterexample $\pi$ is provided} \\
\textbf{if not} $Concretize(\pi)$: \\
\Indp $\mathcal{P} = \mathcal{P} \cup Refine(\pi)$\\
\Indm \textbf{else:}\\
\Indp \textbf{return} \textbf{unsafe} \\
\Indm \Indm \Indm
\textit{\# if no models, add new frame:}\\
$k = k+1, F_k = \emptyset $ \\
\textit{\# propagation phase}\\
\textbf{for} $i=1, \dots, k-1$:\\
\Indp \textbf{for each diagram $\psi \in F$}:\\
\Indp \textbf{if} $AbsRelInd(F_i, \tau, \psi, \mathcal{P})$ is unsat: \\
\Indp add $\psi$ to $F_{i+1}$\\
\Indm
\textbf{if} $F_i = F_{i+1}$: \\
\Indp \textbf{Return} \textbf{safe}
\end{algorithm}

\begin{algorithm}[ht]

\caption{$RecBlock(\psi, N)$}
\label{recblock}
\textbf{if $N = 0$} \\
\Indp \textbf{return} False\\
\Indm \textbf{While} $AbsRelInd(F_{i-1}, \tau, \neg \psi, \mathcal{P})$ is sat:\\
\textit{\# let $\sigma'$ be an \tarray{} model, and $\sigma'_I$ its restriction on the index sort;} \\
\Indp $\psi' = diag(\sigma'_I)$\\
\textbf{if not} $RecBlock(\psi', N-1)$:\\
\Indp \textbf{return} cex\\
\Indm \Indm
$g = Generalize(\neg \psi, N)$ \\
add $g$ to $F_1, \dots, F_N$ \\
\textbf{Return} True
\end{algorithm}
As inputs, we have the system \system and a candidate invariant property $\phi$. At line 2, we set our initial set of predicates to be the set of $\Sigma(I)$-atoms occurring in the initial formula of the system, and in the property $\phi$. %This choice simplifies the description of the algorithm, but it is also possible to start with a different set of predicates, as long as all the atoms in $\phi$ are contained in $\mathcal{P}$. 
Then, at line 3, the algorithm checks whether there is a violation of the property in the initial formula. If a counterexample is not encountered, the algorithm is initialized and the main PDR/IC3 loop starts. The algorithm maintains a set of frames $F_0, \dots, F_N$, which are an approximate reachability sequence of $\Tilde{S}_\mathcal{P}$, 
as defined in Definition~\ref{def-approx-reach}. 
Initially, we set $F_0 = \{ \bar{\iota} \}$, $F_1$ to be empty, and we set $N$, the counter of the length of the frame, to be equal to $1$.
Then (line 8), we loop over models in the intersection between the last frame $F_N$ and $\neg \hat{\phi}$. For each of those models, we consider its restriction on the index sort as defined in the preliminaries. Then, the function $diag$ computes the diagram of the (finite) model over the signature $ \Sigma_I \cup X_P$ (Definition \ref{defn-diagram}). Afterwards, the procedure \textit{RecBlock} (Algorithm \ref{recblock}) tries to either block such a diagram, showing that the corresponding models are unreachable from the previous frame with an abstract transition, or computes a diagram from the set of backward reachable (abstract) states, and recursively calls itself. If eventually a diagram is blocked, then corresponding frames are strengthened by adding a (generalization of) the diagrams to them. If the procedure finds a diagram in the first frame, then we are in the presence of an abstract counterexample of $\hat{S}_\mathcal{P}$:
\begin{defn}(Abstract Counterexample)
    Let $F_0, \dots, F_N$ be an \textit{approximate reachability sequence} for $\hat{S}_\mathcal{P}$ and $\hat{\phi}$. An abstract counterexample is a sequence of 
    models $\pi \defeq \sigma_0, \dots, \sigma_N$ such that:
        \begin{itemize}
        \item $\sigma_i \models F_i, \  \forall i. 0 \leq i < N$;
        \item $diag(\sigma_i) \wedge \hat{\tau} \wedge diag'(\sigma_{i+1})$ is satisfiable;
        \item $\sigma_N \models \neg \hat{\phi}$.
    \end{itemize}
\end{defn}
Note the fact that we compute diagrams only over index model and over the signature $\Sigma_I \cup X_P$. This is enough to capture abstract counterexample for $\hat{S}_\mathcal{P}$; we have: 
\begin{prop}
    \label{updr-abstract-cex}
    Given a set of predicates $\mathcal{P}$, if the procedure RecBlock (Algorithm \ref{recblock}) returns false, then there exists an abstract counterexample for $\hat{S}_\mathcal{P}$.
\end{prop}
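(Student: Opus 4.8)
The idea is to read an abstract counterexample directly off the recursion tree of \textit{RecBlock}. First observe that \textit{RecBlock}$(\psi, N)$ returns a value other than \textbf{True} only in one of two ways: either $N = 0$, where the base case returns \textbf{False}; or $N \geq 1$ and, at some iteration of the \textbf{while} loop, $AbsRelInd(F_{N-1}, \tau, \neg\psi, \mathcal{P})$ was satisfiable with witnessing model $\sigma'$, we set $\psi' = diag(\sigma'_I)$, and the recursive call \textit{RecBlock}$(\psi', N-1)$ itself returned a value other than \textbf{True} (otherwise control would have reached \textbf{Return} \textbf{True}); here I read the loop test as referring to the frame immediately below the current recursion level, disambiguating the index in the pseudocode. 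Starting from the top-level call \textit{RecBlock}$(\psi_0, k)$ --- where $\psi_0 = diag(\sigma^{(0)}_I)$ and $\sigma^{(0)}$ is the model of \tarray{} drawn in the enclosing \textbf{while} loop of Algorithm~\ref{updralg}, so that $\sigma^{(0)} \models F_k \wedge H_\mathcal{P} \wedge \neg\bar\phi$ --- and unrolling this observation $k$ times (an induction on the recursion level), I obtain diagrams $\psi_0, \psi_1, \dots, \psi_k$ and models $\sigma^{(1)}, \dots, \sigma^{(k)}$ of \tarray{} with $\sigma^{(j)} \models AbsRelInd(F_{k-j}, \tau, \neg\psi_{j-1}, \mathcal{P})$ and $\psi_j = diag(\sigma^{(j)}_I)$, where $\sigma^{(j)}_I$ is the reduct of $\sigma^{(j)}$ to $\Sigma_I$ together with the current-state abstract predicates $X_\mathcal{P}$.

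Next I reverse the chain: set $\sigma_i \defeq \sigma^{(k-i)}_I$ for $i = 0, \dots, k$, so that the ``bad'' state becomes $\sigma_k$ and the $F_0$-state becomes $\sigma_0$, and I claim $\sigma_0, \dots, \sigma_k$ is an abstract counterexample of length $N = k$ for $\hat{S}_\mathcal{P}$ and $\hat\phi$. The frame-membership condition $\sigma_i \models F_i$ for $0 \le i < k$ is immediate, since the first conjunct of $AbsRelInd(F_{k-j}, \tau, \neg\psi_{j-1}, \mathcal{P})$ is $F_{k-j}(X_\mathcal{P})$, which for $j = k-i$ says exactly $\sigma^{(k-i)}_I \models F_i$. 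The violation condition $\sigma_k \models \neg\hat\phi$ holds because $\sigma^{(0)} \models H_\mathcal{P} \wedge \neg\bar\phi$ and, since $\mathcal{P}$ is initialised with every atom of $\phi$ and only grows, $\hat\phi$ and $\bar\phi$ are equivalent under $H_\mathcal{P}$ (the remark after the definition of $\bar\cdot$); as $\hat\phi$ mentions only $X_\mathcal{P}$, the reduct preserves its (negated) truth.

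For the transition condition I must show that $diag(\sigma_i) \wedge \hat\tau \wedge diag'(\sigma_{i+1})$ is satisfiable, i.e., writing $j = k-i \in \{1, \dots, k\}$, that $\psi_j \wedge \hat\tau \wedge \psi_{j-1}'$ is satisfiable. The witness is $\sigma^{(j)}$ itself: it satisfies $\psi_j$ on the current-state copy (a model satisfies its own diagram, and $\psi_j = diag(\sigma^{(j)}_I)$); it satisfies $\psi_{j-1}$ on the next-state copy $X'_\mathcal{P}$, since $\neg(\neg\psi_{j-1})(X'_\mathcal{P})$ is a conjunct of $AbsRelInd$; and it satisfies $\hat\tau(X_\mathcal{P}, X'_\mathcal{P})$, because the conjuncts $H_\mathcal{P}(X_\mathcal{P}, X) \wedge EQ_\mathcal{P}(X, \bar{X})$ of $AbsRelInd$ entail $H_\mathcal{P}(X_\mathcal{P}, \bar{X})$ and symmetrically $EQ_\mathcal{P}(\bar{X}', X') \wedge H_\mathcal{P}(X'_\mathcal{P}, X')$ entail $H_\mathcal{P}(X'_\mathcal{P}, \bar{X}')$, so that $\bar{X}, \bar{X}'$ witness the existential in $\hat\tau$ together with the conjunct $\tau(\bar{X}, \bar{X}')$. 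This is in essence Proposition~\ref{prop-rel-inductive-1}, used in the slightly sharpened form ``a model of $AbsRelInd$ restricts to a model of $F \wedge \hat\tau \wedge \psi \wedge \neg\psi'$''.

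The only genuinely delicate point is precisely that sharpening: Proposition~\ref{prop-rel-inductive-1} is phrased as equisatisfiability, whereas the construction above needs the model-restriction direction; I would either extract it from the proof of that proposition or simply redo the short $EQ_\mathcal{P}$/$H_\mathcal{P}$ manipulation sketched above. Everything else is bookkeeping, the one thing to watch being the three distinct objects in play at each step --- the model $\sigma^{(j)}$ of \tarray{}, its index reduct $\sigma^{(j)}_I$, and the diagram $\psi_j$ over $\Sigma_I \cup X_\mathcal{P}$ --- and which copy ($X_\mathcal{P}$ or $X'_\mathcal{P}$) of the abstract predicates is meant where, together with reading the frame index in the \textit{RecBlock} pseudocode as the frame one level below the current call.
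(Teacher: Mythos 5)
Your proposal is correct and follows essentially the same route as the paper: the paper's appendix derives this proposition as an immediate consequence of the strengthened form of Proposition~\ref{prop-rel-inductive-1} (stated there with the extra clause that any model of $AbsRelInd$ restricts, via Definition~\ref{defn-restriction-index}, to a model of $F \wedge \hat\tau \wedge \psi \wedge \neg\psi'$), observing that each step of \textit{RecBlock} thereby coincides with a step of UPDR on $\hat{S}_\mathcal{P}$. The ``delicate point'' you flag --- needing the model-restriction direction rather than bare equisatisfiability --- is precisely what the appendix version supplies, so your explicit unrolling of the recursion tree is just a spelled-out form of the paper's one-line argument.
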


%Every abstract counterexample of the procedure is a \textit{proof of an absence of a universal invariant over $\mathcal{P}$}  (see Proposition \ref{completeness}). 
In line 12, the abstract counterexample is analyzed: if the counterexample is spurious, then we try to refine the set of predicates, and the loop continues. Our refinement procedure is described in the next section. Else, if a concrete counterexample is found, the algorithm terminates with an unsafe result. 
Finally, the propagation phase of the algorithm tries to push diagrams in $F_i$ to $F_{i+1}$. If two frames are equal during this phase, then an inductive invariant is found by the algorithm. Otherwise, the loop restarts with a larger trace.

\begin{exm}
We give an overview of some steps of the procedure over the bakery algorithm. Given $\mathcal{P}(i_1)$ and $X_P$ of the previous examples, we have that
\[ \bar{\iota} = \forall i. \big( x_{state[i] = idle}(i_1) \wedge x_{t[i_1] = 0}(i) \wedge x_{next\_ticket = 1} \wedge x_{to\_serve = 1} \big)\]
and
\[\bar{\phi} = \forall i, j. \big (i \neq j \rightarrow \neg (x_{state[i_1] = crit}(i) \wedge x_{state[i_1] = crit}(j) \big). \]
$H_\mathcal{P}(X_{\mathcal{P}}, X)$ was also shown in the previous example. It is easy to see that 
$\bar{\iota} \wedge H_{P} \wedge \neg \bar{\phi}$ is unsatisfiable. Therefore, we have $F_0 = \hat{\iota}$ and $F_1 = \top$.
We enter the main loop, and consider models of $H_P \wedge \neg \bar{\phi}$; this formula is
satisfiable, and we consider the restriction of a model over the index sort. Suppose that this restriction is given by:

\begin{itemize}
    \item A finite index universe $\{a, b  \}$ with $a \neq b$
    \item $x_{next\_ticket = 1}$ is true in the model, $x_{to\_serve = 1}$ is false; 
    \item the predicate $x_{state[i_1]=idle}$ do not hold for $a, b$ but $x_{state[i_1]=crit}$ does.
\end{itemize}
So, the corresponding diagram over $X_{\mathcal{P}}$ is 
\begin{align*}
    \psi = \exists i_1, i_2. \big( i_1 \neq i_2 \wedge x_{next\_ticket = 1} \wedge \neg x_{to\_serve = 1} \wedge \neg x_{state[i_1]=idle}(i_1) \\ 
    {} \wedge \neg x_{state[i_1]=idle}(i_2) \wedge x_{state[i_1]=crit}(i_1) \wedge x_{state[i_1]=crit}(i_2) \big). 
\end{align*}
We now call $RecBlock(\psi, 1)$, and therefore consider $AbsRelInd(F_0, \tau, \neg \psi, \mathcal{P})$. This is unsatisfiable, so we can add (a generalization of) $\neg \psi$ to $F_1$. Eventually, $F_1 \wedge H_P \wedge \neg \hat{\phi}$ will be no longer satisfiable, so we introduce a new frame $F_2$. (For simplicity, we skip the propagation phase). At this point, $F_2 \wedge H_P \wedge \neg \hat{\phi}$ is satisfiable, and it is possible to have a sequence of models and corresponding diagrams $\psi, \psi', \psi''$ such that $RecBlock(\psi, 2)$ recursively calls $RecBlock(\psi', 1)$ which again calls $RecBlock(\psi'', 0)$. This corresponds to an abstract counterexample.
\qed
\end{exm}

The generalization phase, at line 7 of Algorithm \ref{recblock}, is not relevant to the soundness of the algorithm, but it is a crucial part for its efficiency. During this phase, the diagram $\neg \psi$ is weakened to a more general formula $g$, which implies the negation of the diagram but blocks more models. In our implementation, we used a technique based on unsat core extraction, used also in other PDR/IC3 variants \cite{updr, msatic3}. 
%\subsection{\color{blue}Properties}
%In this section, we state the results of the correctness of the procedure. The proofs may be found in the appendix. 
%\begin{prop}
%During each iteration of the algorithm, $F_0, \dots, F_N$ are an approximate %reachability sequence for $\hat{S}$.
%\end{prop}

\subsection{Concretizing counterexamples and refinement}
\label{sub-concretizing-updria}
Given an abstract counterexample, we can try to associate it with a concrete counterexample, thus concluding the algorithm with a negative result, or we could discover that no concrete counterexample corresponds to the abstract one. More formally, together with an abstract counterexample $\pi$, the algorithm finds a sequence of diagrams $\psi_0(X_\mathcal{P})$, \dots, $\psi_k(X_\mathcal{P})$. However, differently from the Boolean and quantifier-free case, the abstract unrolling
\begin{equation}
  \psi_0(X_\mathcal{P}^0)\wedge \bigwedge_{i=1, \dots, k} \hat{\tau}(X_\mathcal{P}^{i-1}, X_\mathcal{P}^i) \wedge \psi_i(X_\mathcal{P}^i)
\end{equation}
can still be unsatisfiable. That is, we may have two models $\sigma_1, \sigma_2$ such that $diag(\sigma_1) \wedge \hat{\tau} \wedge diag'(\sigma_2)$ is satisfiable, but there is no abstract transition between $\sigma_1$ and $\sigma_2$. %This is because the use of diagrams in the procedure forces the algorithm to compute universal invariants only, while in general formulae with quantifier alternation could be needed.
A possible reason for this is that $\sigma_1$
and $\sigma_2$ may have different universes, and the diagrams abstract $\sigma_1$ and $\sigma_2$ by upward-closing them w.r.t. the submodel relation  (see section 5 of \cite{updr} for examples and more details). This is also a proof that, to eliminate the abstract counterexample, a universal invariant over $\mathcal{P}$ is not enough. 
%We point out that, using the techniques of implicit abstraction \cite{ImplicitAbstraction}, the former formula is logically equivalent to 
%\begin{equation}
 % \psi_0[X_\mathcal{P}^0/ \mathcal{P}^0] \wedge \bigwedge_{i=1, \dots, k} \tau(\bar{X}^{i-1}, \bar{X}^i) 
  %\wedge EQ_\mathcal{P}(X^{i-1}, \bar{X}^{i-1}) 
  %\wedge EQ_\mathcal{P}(X^{i}, \bar{X}^{i}) \wedge
  %\psi_i[X_\mathcal{P}^i/ \mathcal{P}^i](X^i).
%\end{equation}
The automatic discovery of invariants with quantifier alternation is an active area of research \cite{DistAI, pdh, DBLP:conf/fmcad/GoelS21}; however, in this work, we focus only on universal invariants (note that the frames are the conjunction of negations of diagrams, therefore they are universally quantified in prenex normal form). Therefore, when we encounter an abstract counterexample, we try to expand our predicate set $\mathcal{P}$ and search for a new universal invariant on a larger language. In practice, in order to check whether a sequence of diagrams corresponds to a concrete counterexample, we could check the satisfiability of the concrete unrolling of the system, by replacing atoms in the diagram with their non-abstracted version, i.e. checking if the formula
\begin{equation}
\label{concrete-unrolling}
  \psi_0[X_\mathcal{P}/\mathcal{P}](X^0) \wedge \bigwedge_{i=1, \dots, k} \tau(X^{i-1}, X^i) \wedge \psi_i[X_\mathcal{P}/\mathcal{P}](X^i).  
\end{equation}
is satisfiable. If such a query is satisfiable, then we are given a sequence of models which leads to a violation of the property. In case of unsatisfiability of the (\ref{concrete-unrolling}), a typical approach would be to consider the sequence of formulae $\iota_0 = \psi_0(X^0)[X_\mathcal{P}/\mathcal{P}]$, and $\iota_i = \tau(X^{i-1}, X^i) \wedge \psi_i[X_\mathcal{P}/\mathcal{P}](X^i) $ (for all $1 \leq i \leq n-1 $), and extract an interpolant sequence (\ref{subs-interpol}) from it. However, there are two main practical issues in using directly this encoding: $(i)$ since the formula (\ref{concrete-unrolling}) contains quantifiers, proving its (un)satisfiability is very challenging, and $(ii)$ extracting interpolants from quantified queries is scarcely supported by existing solvers. %Moreover, one of the possible advantages of IC3 and its variants is that logical queries are only `locals': on the other hand, (3) is possible `heavily' quantified and this could slow the procedure. 

We propose instead to consider an under-approximation of (\ref{concrete-unrolling}), i.e. to consider a finite model encoding of it in a certain size. That is to say, instead of trying to unroll the counterexample in all the instances of the system, we try to block it in a ground instance $S_n$.  To choose an appropriate size, recall that a diagram $\psi_i$ is an existentially closed formula that is built from an index model with cardinality $n_i$, where $n_i$ is the number of existentially quantified variables in $\psi_i$. Thus, we need to choose a size able to represent at least all the diagrams in the abstract counterexample. Therefore, we take $n = max\{n_i | i=1, \dots, k\}$, which is the least integer such that each diagram is satisfiable (we take the least for tractability reasons). Then, we consider  
\begin{equation}
\label{concrete-unrolling-finite}
  (\psi_0[X_\mathcal{P}/\mathcal{P}])_n(C^0, X^0) \wedge \bigwedge_{i=1, \dots, k} \tau_n(C^i, X^{i-1}, X^i) \wedge (\psi_i[X_\mathcal{P}/\mathcal{P}])_n(C^i ,X^i).  
\end{equation}
where $C$ and $\tau_n$ are defined in \ref{subs-ground}. If such under-approximated unrolling is satisfiable, we have a counterexample in $S_n$, and we can terminate the algorithm with an unsafe result. Otherwise, we can compute an interpolant sequence, and add all the atoms of the interpolants (after replacing the symbols in $C$ with free variables) to the predicate set $\mathcal{P}$. Note however that such a refinement will rule out (\ref{concrete-unrolling-finite}), but in general not guaranteed to rule out (\ref{concrete-unrolling}): that is, the counterexample can occur again in a greater size. If this happens, we will consider larger and larger finite encodings, and repeat the process. In this way, such a refinement will diverge when an invariant with not only universal quantification is needed, as the formula \ref{concrete-unrolling} remains sat, and larger and larger ground instances will be explored, always yielding the same predicates.

\begin{exm}
%\todo[inline]{write better}
    Continuing the latter example, suppose we are given a sequence of diagrams $\psi'', \psi', \psi$, each with two existentially-quantified variables. Therefore, we consider the unrolling of (\ref{concrete-unrolling-finite}) in $S_2$, by introducing two fresh index constants $c_1$, $c_2$. Such a formula is unsatisfiable, yielding an interpolant $t[c_1] = to\_serve$. Therefore, we restart the loop with $\mathcal{P} = \mathcal{P} \cup \{t[i_1] = to\_serve \}$.
    \qed
\end{exm}
 
\subsection{Properties}

We have the following properties of the algorithm:
\begin{prop}(Soundness)
\label{prop-updria-soundness}
    If the algorithm terminates with safe, then $S \models \phi$. If the algorithm terminates with unsafe, then $S \not \models \phi$.
\end{prop}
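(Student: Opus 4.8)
The plan is to prove the two implications separately, each by tracing the structural invariants that Algorithm~\ref{updralg} maintains and then invoking the already-established correspondence between the abstract system $\hat{S}_\mathcal{P}$ and the concrete system $S$.

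First I would handle the \emph{safe} case. When the algorithm returns safe, it does so in the propagation phase because two consecutive frames coincide, $F_i = F_{i+1}$. The key observation to set up is that the sequence $F_0, \dots, F_k$ maintained by the algorithm is, at all times, an approximate reachability sequence for $\hat{S}_\mathcal{P}$ and $\hat{\phi}$ in the sense of Definition~\ref{def-approx-reach}. This is the bookkeeping core of the argument: $F_0 = \hat{\iota}$ (written $\bar{\iota}$) by initialization and the guard at line~3; the inclusion $F_i \models F_{i+1}$ holds because new frames start empty (so $F_{i+1} = \top$-conjunction is weaker) and blocking only adds clauses to earlier frames; the relative-consecution condition $F_i \wedge \hat{\tau} \models F_{i+1}$ is exactly what the propagation loop checks via $AbsRelInd$, whose correctness is Proposition~\ref{prop-rel-inductive-1}; and $F_i \models \hat{\phi}$ is enforced by the inner \textbf{while} loop at line~8 together with the fact that every diagram blocked into a frame is the negation of a model of $\neg\hat{\phi}$ or a backward-reachable predecessor thereof. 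Once this is in place, Proposition~\ref{prop-approxi-sequence} applied to $\hat{S}_\mathcal{P}$ gives $\hat{S}_\mathcal{P} \models \hat{\phi}$, and then Proposition~\ref{prop-index-abs-1} (using that $\mathcal{P}$ always contains every atom of $\phi$, since it is initialized that way at line~2 and only ever grows) yields $S \models \phi$.

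For the \emph{unsafe} case, the algorithm terminates only from inside $Concretize(\pi)$: an abstract counterexample $\pi$ with associated diagram sequence $\psi_0, \dots, \psi_k$ has been found (Proposition~\ref{updr-abstract-cex} guarantees $RecBlock$ returning false produces such a $\pi$), and the finite concrete unrolling~(\ref{concrete-unrolling-finite}) in the ground instance $S_n$ turns out to be satisfiable, where $n = \max\{n_i\}$ is the chosen cardinality. A satisfying assignment of~(\ref{concrete-unrolling-finite}) is, by construction of $S_n$ from Section~\ref{subs-ground}, exactly a path $s_0, \dots, s_k$ of $S_n$ with $s_0 \models \iota_n$, $\tau_n(s_{i-1}, s_i)$ for each $i$, and $s_k \models \neg\phi_n$ (the last diagram $\psi_k$ entails $\neg\hat\phi$, hence $\psi_k[X_\mathcal{P}/\mathcal{P}]$ entails $\neg\phi$). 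Thus $S_n \not\models \phi_n$ for this particular $n$, and by the equivalence noted at the end of Section~\ref{subs-ground} — namely $S \models \phi$ iff $S_m \models \phi_m$ for all $m$ — we conclude $S \not\models \phi$.

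The main obstacle I expect is the first case: rigorously verifying that the frames form an approximate reachability sequence \emph{at every point of the execution}, not merely at the end, despite the interleaving of blocking (which strengthens earlier frames), frame addition, and predicate refinement (which changes $\mathcal{P}$ and hence the very language of the frames). In particular, after a refinement step the old frames must be reconciled with the enlarged $X_\mathcal{P}$; I would argue that the algorithm effectively restarts the trace (or that the added predicates only refine, never invalidate, the over-approximation), so the invariant is re-established. The remaining ingredient is checking that $diag$ computed over the index reduct only — over signature $\Sigma_I \cup X_\mathcal{P}$ — still produces a \emph{sound} over-approximation of the abstract states it is meant to block; this is precisely the content of Propositions~\ref{prop-rel-inductive-1} and~\ref{updr-abstract-cex}, so I would lean on those rather than re-prove them. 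Everything else is a routine lifting of the soundness argument for UPDR from~\cite{updr}.
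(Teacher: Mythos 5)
Your proposal is correct and follows essentially the same route as the paper: the safe case goes through the frames being an approximate reachability sequence for $\hat{S}_\mathcal{P}$, then Proposition~\ref{prop-approxi-sequence} and Proposition~\ref{prop-index-abs-1}; the unsafe case reduces to a satisfiable unrolling yielding a concrete counterexample. If anything, you are slightly more precise than the paper on the unsafe direction (invoking the finite unrolling~(\ref{concrete-unrolling-finite}) in $S_n$ and the equivalence $S \models \phi$ iff $S_n \models \phi_n$ for all $n$, where the paper's proof loosely cites the unbounded unrolling), and your flagged concern about re-establishing the frame invariant after refinement is exactly the point the paper delegates to the cited lemma from~\cite{msatic3}.
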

Moreover, given propositions \ref{prop-rel-inductive-1} and \ref{updr-abstract-cex}, we can establish the following fact about Algorithm \ref{updralg}: once a set of indexed predicates $\mathcal{P}$ is fixed, our procedure simulates the standard UPDR \cite{updr} algorithm over the system $\hat{S}_\mathcal{P}$. As already anticipated, the algorithm has a result of partial completeness over a certain set of universally quantified formulae. To be more precise, let $\mathcal{L}$ be the set of formulae of the form $\forall \underline{i}. \psi(\underline{i})$, with $\psi(\underline{i})$ a Boolean combination of $\Sigma_I(\underline{i})$-atoms and  elements of $\mathcal{P}(\underline{i})$. We have the following proposition (by lifting Proposition 5.6 in \cite{updr}):
\begin{prop}(Partial Completeness)
\label{completeness}
    If the algorithm \ref{updralg} finds an abstract counterexample, then no inductive invariant in $\mathcal{L}$ exists for $S$ and $\phi$.
\end{prop}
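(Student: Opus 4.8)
The plan is to reduce the claim to the partial completeness of plain UPDR, Proposition~5.6 of~\cite{updr}, using the two facts already established in this section. By Proposition~\ref{prop-rel-inductive-1}, the $AbsRelInd$ queries issued by Algorithm~\ref{updralg} and by $RecBlock$ are equisatisfiable with the corresponding relative-induction queries on the explicitly abstracted system $\hat{S}_\mathcal{P}$; and by Proposition~\ref{updr-abstract-cex}, when $RecBlock$ fails it exhibits a genuine abstract counterexample of $\hat{S}_\mathcal{P}$. Hence, with the predicate set $\mathcal{P}$ frozen at the moment the abstract counterexample is produced (this being the $\mathcal{P}$ that defines $\mathcal{L}$), Algorithm~\ref{updralg} is literally running UPDR on $\hat{S}_\mathcal{P}$ against $\hat\phi$ (recall also that, as $\mathcal{P}$ contains all atoms of $\iota$ and $\phi$ by line~2 and is only ever extended, the frame-level formulae $\bar\iota,\bar\phi$ coincide with $\hat\iota,\hat\phi$ under $H_\mathcal{P}$). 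Producing an abstract counterexample therefore means UPDR concludes $\hat{S}_\mathcal{P}\not\models\hat\phi$ by exhibiting one, so Proposition~5.6 of~\cite{updr} applies: no universal inductive invariant over $\Sigma_I\cup X_\mathcal{P}$ separates $\hat\iota$ from $\neg\hat\phi$.

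It remains to transfer this impossibility back to the concrete side. Assume for contradiction that some $\phi^\ast=\forall\underline i.\,\psi(\underline i)\in\mathcal{L}$ is an inductive invariant for $S$ and $\phi$, and let $\bar{\phi^\ast}=\forall\underline i.\,\bar\psi(\underline i)$ be obtained by replacing each atom $p(\underline i,X)\in\mathcal{P}$ occurring in $\psi$ by the abstract predicate $x_{p(\underline i,X)}(\underline i)$ ($\beta$-reducing if needed). Since $\psi$ is a Boolean combination of $\Sigma_I(\underline i)$-atoms and $\mathcal{P}(\underline i)$-atoms, $\bar{\phi^\ast}$ is a universal formula over $\Sigma_I\cup X_\mathcal{P}$. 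I will show $\bar{\phi^\ast}$ is a universal inductive invariant for $\hat{S}_\mathcal{P}$ and $\hat\phi$, contradicting the previous paragraph. The observation driving every step is: if a concrete $X$-valuation $s$ and an $X_\mathcal{P}$-valuation $t$ share an index universe and are related by $H_\mathcal{P}$, then — because $H_\mathcal{P}$ equates each abstract predicate with the concrete atom it abstracts and the remaining atoms of $\phi^\ast$ lie in the shared signature $\Sigma_I$ — we have $s\models\phi^\ast \iff t\models\bar{\phi^\ast}$, and likewise $s\models\phi\iff t\models\bar\phi$.

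Granting this, the three conditions follow from the definitions of $\hat\iota$ and $\hat\tau$ as predicate abstractions. Initiation: a model $t\models\hat\iota=\exists X.(\iota(X)\wedge H_\mathcal{P})$ comes with a concrete $s\models\iota$ related to $t$ by $H_\mathcal{P}$; since $\iota\models\phi^\ast$ we get $s\models\phi^\ast$, hence $t\models\bar{\phi^\ast}$. Consecution: from $t\models\bar{\phi^\ast}$ and $(t,t')\models\hat\tau=\exists X,X'.(\tau(X,X')\wedge H_\mathcal{P}(X_\mathcal{P},X)\wedge H_\mathcal{P}(X'_\mathcal{P},X'))$ we obtain concrete $r,r'$ over the common (transition-invariant) index universe with $\tau(r,r')$, $r$ related to $t$ and $r'$ to $t'$; then $r\models\phi^\ast$, so $r'\models\phi^\ast$ by inductiveness of $\phi^\ast$ for $S$, so $t'\models\bar{\phi^\ast}$. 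Safety: $\bar{\phi^\ast}\wedge H_\mathcal{P}\models\phi^\ast\models\phi\equiv\bar\phi$ under $H_\mathcal{P}$, i.e. $\bar{\phi^\ast}$ implies $\hat\phi$ on every realizable abstract state — which is precisely the form of the bad-state check actually performed (the $H_\mathcal{P}$ conjunct in line~8 and inside $AbsRelInd$).

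The subtle point, and where the argument goes wrong if stated carelessly, is exactly this last one: $\hat{S}_\mathcal{P}$ also has ``inconsistent'' abstract states — $X_\mathcal{P}$-valuations realized by no concrete $s$ — and on those $\bar{\phi^\ast}\models\hat\phi$ is in general false, so there is no globally valid implication $\bar{\phi^\ast}\models\hat\phi$. The resolution is that realizability follows from $\hat\iota$ and is preserved by $\hat\tau$ (its models carry concrete witnesses $r,r'$), so both Algorithm~\ref{updralg} and UPDR only ever test the property against realizable states; the relevant notion of ``separating inductive invariant'' — and the one that Proposition~5.6 of~\cite{updr} delivers once instantiated on $\hat{S}_\mathcal{P}$ — is the one relativized by $H_\mathcal{P}$, under which the translation $\phi^\ast\mapsto\bar{\phi^\ast}$ is exact. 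Pinning down this relativized statement and checking it is the form actually proved in~\cite{updr} is the only delicate part; the rest is a verbatim lifting of the argument there.
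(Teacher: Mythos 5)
Your proposal is correct and follows essentially the same route as the paper, which proves this result only by observing (via Propositions~\ref{prop-rel-inductive-1} and~\ref{updr-abstract-cex}) that for a fixed $\mathcal{P}$ the algorithm simulates standard UPDR on $\hat{S}_\mathcal{P}$ and then invoking Proposition~5.6 of~\cite{updr}. Your explicit transfer argument ($\phi^\ast \mapsto \bar{\phi^\ast}$ and the discussion of non-realizable abstract states, which the relativization by $H_\mathcal{P}$ handles) is more detailed than anything the paper writes down, but it is the same reduction.
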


We already mentioned that the problem we are addressing in this paper is undecidable, and there are indeed many possible causes of non-termination of this algorithm. First, we rely continuously on first-order reasoning, which is in general undecidable. However, a nice property of our procedure is the following, which follows from simple logical manipulation after noticing that the formulae defining the implicit abstraction contain only universal quantification:
\begin{prop}
\label{algepr}
    If the input system \system is defined by formulae 
    falling in the decidable fragment of Proposition~\ref{prop-decidable-epr}, then each satisfiability check of Algorithm \ref{updralg} falls again in the decidable fragment.
\end{prop}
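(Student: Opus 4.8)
The plan is to show that every satisfiability query issued by Algorithm~\ref{updralg} and by its sub-procedure $RecBlock$ is, after prenexing and renaming of bound variables, a \tarray-formula of the form $\exists \underline{i}.\,\forall \underline{j}.\,\phi(\underline{i},\underline{j},\cdot)$ with quantifier-free matrix over a relational index signature, so that Proposition~\ref{prop-decidable-epr} applies directly. The one structural fact I would isolate first is that this $\exists^*\forall^*$ class is closed under finite conjunction: if $\alpha \equiv \exists \underline{i}_1 \forall \underline{j}_1.\,\alpha_0$ and $\beta \equiv \exists \underline{i}_2 \forall \underline{j}_2.\,\beta_0$ with $\alpha_0,\beta_0$ quantifier-free, then, after renaming bound variables apart, $\alpha \wedge \beta \equiv \exists \underline{i}_1 \underline{i}_2.\,\forall \underline{j}_1 \underline{j}_2.\,(\alpha_0 \wedge \beta_0)$, still with quantifier-free matrix, the index signature being unchanged and the element theory untouched.

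Next I would classify the quantifier prefix of each ingredient of the queries. The formulae $H_\mathcal{P}$ and $EQ_\mathcal{P}$ are purely universal by construction, with quantifier-free matrices over $\Sigma \cup X_\mathcal{P}$; since $X_\mathcal{P}$ consists of fresh index predicates, $\Sigma_I \cup X_\mathcal{P}$ is still relational. Every diagram $\psi = diag(\sigma_I)$ is the existential closure of a conjunction of literals over $\Sigma_I \cup X_\mathcal{P}$, hence purely existential; dually every $\neg\psi$ is purely universal, and therefore so is each frame $F_i$ (a conjunction of negated diagrams for $i \ge 1$, and $\bar\iota$ for $i = 0$). Passing from $\iota,\tau,\phi$ to their abstract counterparts by replacing atoms with abstract predicates does not change the quantifier prefix, so $\bar\iota$ and $\tau$ (which is used \emph{unabstracted} in $AbsRelInd$) stay in the $\exists^*\forall^*$ fragment by the hypothesis on $S$, while $\neg\bar\phi$ lies in that fragment as well --- it is even purely existential when the property $\phi$ is universal, the case of interest here.

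With this classification in hand the remaining work is bookkeeping. The initial check $\bar\iota \wedge H_\mathcal{P} \wedge \neg\bar\phi$ is a conjunction of an $\exists^*\forall^*$, a $\forall^*$ and an $\exists^*$ formula, hence $\exists^*\forall^*$; likewise for the main-loop check $F_k \wedge H_\mathcal{P} \wedge \neg\bar\phi$. The relative-induction query $AbsRelInd(F,\tau,\psi,\mathcal{P})$ is a conjunction of $F$ ($\forall^*$ or $\exists^*\forall^*$), $\psi$ ($\exists^*$), $\tau$ ($\exists^*\forall^*$), two copies of $H_\mathcal{P}$, two copies of $EQ_\mathcal{P}$ and one $\neg\psi'$ (all $\forall^*$); by the closure property it is again $\exists^*\forall^*$, with matrix quantifier-free over $\Sigma$ enlarged by the relational predicates $X_\mathcal{P}$ and by the auxiliary array copies $\bar X,\bar X'$ (which are again admissible ``array'' symbols for Proposition~\ref{prop-decidable-epr}). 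The variant called inside $RecBlock$, where the argument $\psi$ is itself a negated diagram, only moves the $\exists^*$ block from one conjunct to another and does not change the conclusion; the propagation-phase checks ask for the \emph{un}satisfiability of the same formulae, which is decidable whenever satisfiability is; and the concretization query~(\ref{concrete-unrolling-finite}) is an unrolling over the finitely many frozen constants $C$ in which all quantifiers --- including those inside the restrictions $\alpha(a)_n$ --- are already grounded, hence quantifier-free and trivially in the fragment.

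The only place demanding genuine care is this last conjunction step for $AbsRelInd$: one must rename the bound variables apart before prenexing, so that the several universal blocks coming from $H_\mathcal{P}$, $EQ_\mathcal{P}$ and $\neg\psi'$ do not capture the existential witnesses contributed by $\psi$, $\tau$ (and possibly $F_0$), and one must check that the enlarged signature still meets the hypotheses of Proposition~\ref{prop-decidable-epr}, i.e.\ that adjoining the index predicates $X_\mathcal{P}$ keeps $\Sigma_I$ relational and that the extra primed/barred copies of the array symbols do not affect decidability of the ground theory. Both are immediate from the definitions, so no real obstacle remains --- as anticipated before the statement, the proposition is essentially a consequence of the fact that the implicit-abstraction apparatus introduces only universal quantifiers.
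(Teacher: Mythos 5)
Your proof is correct and follows exactly the route the paper intends: the paper only remarks that the result ``follows from simple logical manipulation after noticing that the formulae defining the implicit abstraction contain only universal quantification,'' and your case analysis of the quantifier prefixes of $H_\mathcal{P}$, $EQ_\mathcal{P}$, diagrams, negated diagrams, frames, and $\tau$, together with closure of the $\exists^*\forall^*$ class under conjunction after renaming bound variables apart, is precisely that manipulation spelled out. The only caveat worth noting is your (correctly flagged) assumption that $\phi$ is universal so that $\neg\bar\phi$ is existential, which the paper also leaves implicit.
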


Proposition \ref{algepr} ensures that, if the initial and the transition formulae are described in a decidable setting, then we are not stuck in first-order reasoning.  %Although several tools support a combination of quantifiers and theories \cite{z3, cvc5, Vampire}, our first implementation of the algorithm with the satisfiability procedure of \cite{z3mbqa} was quite slow. Therefore, we decided to implement an ad-hoc instantiation procedure to reduce quantified checks as (\ref{epr}) to ground formulae. Our instantiation procedure is as follows:
%\begin{itemize}
%    \item first, we reduce the formula (2) in Skolem normal form, by introducing new constants in place of existentially quantified variables;
%    \item then, by applying laws such as $\phi(i, j) \rightarrow \phi(i, j)[j\mapsto i] \vee (\phi(i, j) \wedge i \neq j)$, we produce a set of formulae in which all constants are different;
%    \item for each formula in the set, we instantiate all the universal quantifiers of it in the set of constants occurring in the formula.
%\end{itemize}
%By (the proof of) proposition 12, if any of these formulae is satisfiable, then we have also a model for (\ref{epr}). Otherwise, the formula is unsatisfiable. With this instantiation strategy, we ensure that every sat check of the procedure will be answered in a finite time. 
However, the whole procedure may still be non-terminating: the sources of divergence may be an infinite series of refinements, or the unbounded exploration of an infinite-search space (in fact, even after fixing a set of predicates $\mathcal{P}$, the abstract system $\hat{S}_\mathcal{P}$ is still infinite-state: every model is finite, but there is no bound on the cardinalities of their universes). 

\section{Learning lemmas from ground instances}
\label{sec-lambda}
In the verification of parameterized systems, the exploration of ground instances has always been recognized as a source of helpful heuristics
~\cite{InvisibleInv, DBLP:conf/fmcad/ConchonGKMZ13}. 
The intuition is that in most cases if a counterexample to a property exists, it can be detected for small values of the parameter. Moreover, if a property holds, the reason for that should be the same for all values of the parameter (at least after a certain threshold value).
In this section, we present a second algorithm for solving the invariant problem for array-based transition systems, inspired by the 
ideas above: we try to generalize an invariant for the system by exploring small ground instances. In contrast to the algorithm of the previous section, this algorithm tries to guess an invariant, rather than constructing one incrementally. Even if we do not provide any theoretical guarantees that this guess will eventually be correct, this approach relies less on quantified reasoning, which turns out to be better from a practical standpoint. We start by giving a high-level overview of our method, depicted also in Fig.~\ref{Overall}. Then, we will go into the details of the algorithm, discussing our generalization technique \ref{subs-gener}, and presenting two different approaches to check whether our generalization is correct (the Candidate Checking box) \ref{subs-param-abs}, \ref{subs-smt-solving}. We finish the section by discussing the properties of the algorithm \ref{subs-properties-2}.

\subsection{High level algorithm}
\label{sec-overview}
\begin{figure}[ht]
\centering
 \scalebox{.7}{
\begin{tikzpicture}[node distance=1.7cm]
\node (in1) [io] {$S$, $\phi$};
\node (lem) [lemmas, below of=in1, yshift=-0.5cm]{$\Psi$};
\node (ground) [process, right of=in1, xshift=1cm]{$S_n$ Ground instance};
\node (check) [process, right of=ground, xshift=2cm]{$S_n \models \phi_n \wedge \Psi_n$?};
\node (spurious)  [process, right of=check, xshift=3.5cm]{Does $\pi \models \neg \phi_n$?};
\node (gen) [process, below of=check, yshift=-1cm]{Generalization};
\node (inv) [process, below of=gen, yshift=-1cm]{Candidate Checking};
\node (end2) [io, left of=inv, xshift=-2cm]{\Safe};
\node (end1) [io, right of=spurious, xshift=2cm] {\Unsafe};
\node (badlem) [process, below of=spurious, yshift=-1cm]{Remove $\psi_i$ such that $\pi \models \neg \psi_i$};

\draw [arrow] (in1) -- (ground);
\draw [arrow] (lem) -- (ground);
\draw [arrow] (ground) -- (check);
\draw [arrow] (check)  -- node [anchor=south]{No, $\pi$ cex}(spurious);
\draw [arrow] (spurious) -- node [anchor=south]{Yes}(end1);
\draw [arrow] (spurious) -- node [anchor=east]{No}(badlem);
\draw [arrow] (badlem) -- (check);
\draw [arrow] (check) -- node [anchor=east, align=center] {Yes,\\$Inv$ invariant}(gen);
\draw [arrow] (gen) -- (inv);
\draw [arrow] (inv) -- node [anchor=south]{Yes}(end2);
\draw [arrow, bend left, align=center] (inv) to node [anchor=east]{No,\\$n=n+1$}(ground);

\end{tikzpicture}}
 \caption{An overview of the algorithm.\label{Overall}}
\end{figure}
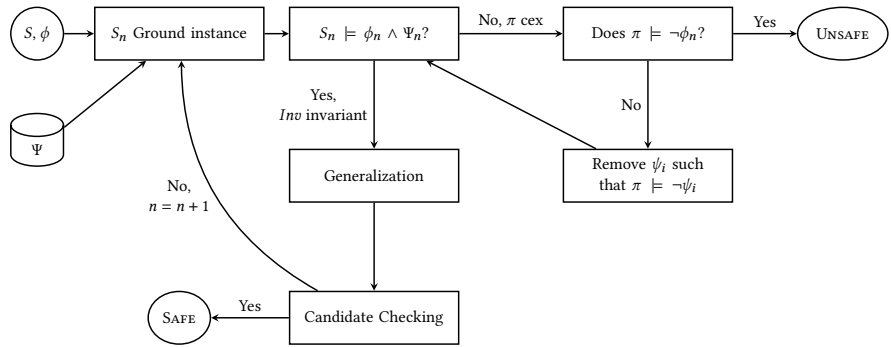

We consider as inputs for the algorithm an array-based transition system $S$ and a candidate invariant $\phi$. 
We also maintain a set of formulae $\Psi = \{\psi_1, \dots, \}$, initially empty, that contains universally quantified
 \textit{candidate lemmas}. We also initialize $n$, a counter for the size of the ground instance we explore, equal to 1. We perform the following steps:
\begin{itemize}
\item we consider $S_n$, as defined in \ref{subs-ground}, and then use a model checker to check whether $\phi_n \wedge \Psi_n$ is an invariant for the ground instance. If we get a counterexample, we check if the property itself is falsified
(thus terminating the algorithm with \Unsafe result), or we remove the lemmas that are proved to be false in size $n$.
In the latter case, we repeat the ground model checking query, until either a true counterexample is obtained, or the property holds.

\item If the model checker proves the property, we consider an inductive invariant $Inv$ of $S_n$. From such a formula, we synthesize a set of new lemmas, that we add to $\Psi$ (Generalization).%
%\todo{A : we should say what an invariant strengthening is (i.e. a candidate universal invariant)\\
%GR: since the algorithm would be feasible with more generalizations, Inv would be inclined to not stress here that we use only universal} 
%Our technique for generalization is explained in Section~\ref{subs-gener}.

\item In the Candidate Checking box, we check if the property $\phi$ (together with the lemmas $\Psi$) is an inductive invariant for the original system $S$. We propose two methods for performing this check, described in \ref{subs-param-abs}, \ref{subs-smt-solving}. 

\item In case of success, the property is proved and we have found an inductive invariant.  In case of a failure, we need a better candidate invariant: we restart the loop with a new exploration from size $n+1$. 
\end{itemize}
\noindent

Note there are many possible causes of non-termination of the algorithm (again, since we are dealing with undecidable problems, this is not avoidable): the main problems are the candidate checking box, which involves quantified reasoning, and the existence of a cut-off, i.e. an integer $n$ such that the generalized formula obtained 
after model checking a ground instance of size $n$ is inductive also for all other instances. 
The pseudocode of the procedure is reported in Algorithm \ref{alg-gener}. In the next sections, we describe all the sub-procedures in detail. 

\begin{algorithm}[ht]
\SetAlgoLined
\caption{Learning lemmas from ground instances}
\label{alg-gener}
Input: $S = (X, \iota(X), \tau(X, X')), \phi(X)$
\\
$n = 1$
\\
\textbf{while} True: \\
\Indp $S_n = grounding(S, n)$\\
\textbf{while} not $S_n \models \phi_n \wedge \Psi_n$: \\
\Indp \textit{\# model checker returns a cex $\pi$ from size $n$} \\
\textbf{if} $\pi \models \neg \phi_n$:
\\
\Indp \textbf{return} \Unsafe, $\pi$ 
\\\Indm
\textbf{else:}\\
 \Indp $\Psi = \Psi \setminus \{\psi_i | \pi \models \neg \psi_i \}$ \\
\Indm \Indm \textit{\# model checker returns an inductive invariant $Inv$ from size $n$}\\
$\Psi = \Psi \cup generalization(Inv)$\\
\textbf{if} InvariantChecking($S, \phi, \Psi$):
\\
\Indp \textbf{return} \Safe, $\phi \wedge \Psi$ 
\\
\Indm \textbf{else:}\\
\Indp  \textbf{$n=n+1$}
\end{algorithm}

\subsection{Generalization}\label{subs-gener}
\newcommand{\Alldiff}{\ensuremath{\textsf{\textit{AllDiff}}}\xspace}
A key step of this algorithm is the one of producing candidate invariants by generalizing proofs from a ground instance. In this section, we show our approach. We suppose to have a model checker capable of proving or disproving that $S_n \models \phi_n$ (e.g. \cite{msatic3}). If a counterexample is not found, we also suppose that the model checker returns a formula $Inv$ which witnesses the proof, i.e. an inductive invariant of $S_n$ and $\phi_n$. From this witness, we generalize a set of quantified lemmas. 
\begin{defn}[Generalization]
\label{def-general}
Let $S$ be an array-based transition system, 
$\phi$ a candidate invariant, and suppose $S_n \models \phi_n$. 
A $n$-\emph{generalization} is a (quantified) formula $\Psi$ 
such that $\Psi_n$ is an inductive invariant for $S_n$ and $\Psi_n \models \phi_n$.
\end{defn}
In practice, we exploit the same technique that we used in~\cite{Lambda}, inspired by~\cite{InvisibleInv}, which works as follows. Suppose that $Inv$
is in CNF. Then, $Inv = \mathcal{C}_1 \wedge
\dots \wedge \mathcal{C}_m$ is a conjunction of clauses. From every one of such clauses, we will obtain a universally quantified formula.
Let $\Alldiff(I)$ be the formula that states that all variables in $I$ are different from each other. 
For all $j \in \{1, \dots, m \}$, let $\psi_j = \forall I.\Alldiff(I)\rightarrow \mathcal{C}_j [C / I]$ - where $C$ are the variables introduced in \ref{subs-ground}. 
Let $\Psi = \bigwedge_{j=1}^m \psi_j $. It follows from Proposition (\ref{symm}) that such a $\Psi$ is a $n$-generalization.

It should be clear that our technique can infer lemmas with only universal quantifiers, but more generalizations are possible~\cite{ic3po, DBLP:conf/cav/DooleyS16}. For example, if a clause of the inductive invariant were $l(c_1) \vee ... \vee l(c_n)$, a possible generalization of that clause would be $\exists x. l(x)$.

\subsubsection{Symmetries of ground instances}
\label{subs-symm}
As already observed in previous works~\cite{Krstic2005ParametrizedSV,ic3po,Lambda}, 
transition systems obtained by instantiating quantified formulae have a certain degree of symmetry. 
We report here the notion that will be useful to our description. Let $\sigma$ be a permutation of ${1, ..., n}$ (also called $n$-permutation), and $\phi$ is a formula in which $c_1, \dots, c_n$ occur free. We denote with $\sigma \phi$ the formula obtained by substituting every $c_i$ with $c_{\sigma(i)}$. 
The following proposition follows directly from the fact that $\iota_n$ and  $\tau_n$ are obtained by instantiating a quantified formula with a set of fresh constants $C$\cite{Krstic2005ParametrizedSV}:
\begin{lem}
For every permutation $\sigma$, we have that:
$(i)$ \tarray $\models \sigma \iota_n \leftrightarrow \iota_n$; $(ii)$ 
\tarray $\models \sigma \tau_n \leftrightarrow \tau_n$.
\end{lem}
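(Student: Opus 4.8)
The plan is to prove a slightly more general \emph{syntactic} statement by structural induction and then specialize it to $\iota$ and $\tau$. For a formula $\chi$ that has index quantifiers and possibly free occurrences of the constants in $C$, let $\sigma\chi$ denote, as in the preliminaries, the formula obtained by replacing each $c_i$ with $c_{\sigma(i)}$, and let $g(\chi)$ denote the grounding of $\chi$ obtained by exhaustively applying the rewriting rules~(\ref{rewriting-rules}). The central claim I would establish is: $g(\sigma\chi)$ and $\sigma(g(\chi))$ agree up to reordering of the conjuncts and disjuncts introduced by grounding, and hence $\sigma(g(\chi)) \leftrightarrow g(\sigma\chi)$ holds in every model of \tarray.

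The induction is on the structure of $\chi$. The base case (atoms, which $g$ leaves unchanged) and the Boolean cases ($\neg,\wedge,\vee,\rightarrow$) are immediate, since $g$, the action of $\sigma$, and the Boolean connectives all commute with one another on quantifier-free structure. The informative case is $\chi = \forall i.\psi$ (and dually $\chi = \exists i.\psi$): here $g(\chi) = \bigwedge_{k=1}^n g(\psi[i/c_k])$, so $\sigma(g(\chi)) = \bigwedge_{k=1}^n \sigma(g(\psi[i/c_k]))$. Using the elementary identity $\sigma(\psi[i/c_k]) = (\sigma\psi)[i/c_{\sigma(k)}]$ — valid under the standard convention that bound index variables are kept apart from the constants $C$, since substitution of the variable $i$ commutes with the renaming of constants, and the substituted occurrence of $c_k$ is itself renamed to $c_{\sigma(k)}$ — together with the induction hypothesis, each conjunct rewrites to $g\big((\sigma\psi)[i/c_{\sigma(k)}]\big)$. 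Because $\sigma$ is a bijection of $\{1,\dots,n\}$, as $k$ ranges over $\{1,\dots,n\}$ so does $\sigma(k)$, so the conjunction equals $\bigwedge_{l=1}^n g\big((\sigma\psi)[i/c_l]\big) = g(\forall i.\sigma\psi) = g(\sigma\chi)$ up to the order of the conjuncts; the existential case is identical with disjunctions. This proves the claim.

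To conclude, I would observe that $\iota$ and $\tau$ — and likewise every conjunct $\alpha(a)$ added in the definition of the ground $n$-instance, as well as the frozenness constraint $C' = C$ — contain no free occurrence of any $c_i$, since the constants in $C$ are fresh and are introduced \emph{only} by grounding; hence $\sigma\bar\iota = \bar\iota$ and $\sigma\bar\tau = \bar\tau$, while $\sigma(C'=C)$ is $C'=C$ up to reordering its conjuncts. Instantiating the claim with $\chi = \bar\iota$ yields that $\sigma(\iota_n) \leftrightarrow \iota_n$ holds in every model of \tarray, and with $\chi = \bar\tau$ (adding the harmless reordering of $C'=C$) yields the same for $\tau_n$ — which is exactly statements (i) and (ii).

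I expect the only delicate point to be the bookkeeping in the quantifier case: making the identity $\sigma(\psi[i/c_k]) = (\sigma\psi)[i/c_{\sigma(k)}]$ precise, and being explicit that ``equality up to reordering of conjuncts/disjuncts'' already suffices, because the goal is only validity of the biconditional, not syntactic identity. If a purely semantic argument is preferred, an alternative is to note that the truth of $\iota_n$ (resp.\ $\tau_n$) in a state $(\mathcal{M}, s)$ depends only on the \emph{set} $\{s(c_1),\dots,s(c_n)\}$ and not on its enumeration, so permuting the constants cannot change the truth value; I would present the syntactic induction as the main proof, since it mirrors the way grounding is defined, and mention the semantic reading only as a remark.
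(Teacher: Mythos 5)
Your proof is correct and takes essentially the same route as the paper, which simply asserts that the lemma ``follows directly from the fact that $\iota_n$ and $\tau_n$ are obtained by instantiating a quantified formula with a set of fresh constants $C$'' (citing Krsti\'c et al.); your structural induction showing that grounding commutes with the permutation up to reordering of conjuncts/disjuncts, combined with the freshness of $C$ in $\bar\iota$, $\bar\tau$, and the $\alpha(a)$ conjuncts, is precisely the detailed justification of that one-line claim.
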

From this lemma and an induction proof, the following holds:
\begin{prop}[Invariance for permutation]
\label{symm}
Let $s$ be a state of $S_{n}$, reachable in $k$ steps. Then $s \models \phi(C, X)$, if and only if, for every $n$-permutation $\sigma$, there exists a state $s'$ reachable in $k$ steps such that  $s' \models \sigma \phi(C, X)$
\end{prop}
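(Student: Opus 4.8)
The plan is to prove this by induction on $k$, the number of steps, using Lemma (the immediately preceding one) as the base ingredient for handling the transition relation. The statement is symmetric in a useful way: since $\sigma$ ranges over all $n$-permutations and the inverse $\sigma^{-1}$ is also an $n$-permutation, it suffices to prove just one direction of the ``if and only if,'' namely that if $s$ is reachable in $k$ steps and $s \models \phi(C,X)$, then for every $n$-permutation $\sigma$ there is a state $s'$ reachable in $k$ steps with $s' \models \sigma\phi(C,X)$; the converse follows by applying this to $\sigma^{-1}$ and to the formula $\sigma\phi$, noting that $\sigma^{-1}(\sigma\phi)$ is $\phi$ up to renaming.

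First I would set up the key auxiliary claim: for every $n$-permutation $\sigma$ and every state $t$ of $S_n$, there is a ``permuted'' state $\sigma t$ such that for any formula $\chi(C,X)$ we have $t \models \chi(C,X)$ iff $\sigma t \models \sigma\chi(C,X)$, and moreover $t$ is initial iff $\sigma t$ is initial, and $(t,u)$ is a transition iff $(\sigma t, \sigma u)$ is a transition. The state $\sigma t$ is obtained concretely by permuting the interpretation: since $C = \{c_1,\dots,c_n\}$ are frozen and implicitly all-distinct, a state assigns the $c_i$ to $n$ distinct elements of the (finite) index universe and assigns each array variable a function on that universe; $\sigma t$ is the state that assigns $c_i$ the value that $t$ assigned to $c_{\sigma(i)}$, and correspondingly relabels the arrays. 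The satisfaction-transfer property is then a routine structural induction on $\chi$. The facts about initial states and transitions being preserved follow from Lemma: $t \models \iota_n$ iff $\sigma t \models \sigma\iota_n$ iff (by Lemma $(i)$, since $\sigma\iota_n \leftrightarrow \iota_n$ is valid in \tarray) $\sigma t \models \iota_n$; similarly $(t,u) \models \tau_n$ iff $(\sigma t,\sigma u) \models \tau_n$ using Lemma $(ii)$.

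Then I would run the induction. For the base case $k=0$: $s$ reachable in $0$ steps means $s \models \iota_n$; take $s' = \sigma s$, which is reachable in $0$ steps by the auxiliary claim, and $s \models \phi(C,X)$ gives $\sigma s \models \sigma\phi(C,X)$, i.e.\ $s' \models \sigma\phi(C,X)$. For the inductive step, suppose $s$ is reachable in $k+1$ steps via a state $r$ reachable in $k$ steps with $(r,s) \models \tau_n$. Set $s' = \sigma s$ and $r' = \sigma r$. By the auxiliary claim $(r',s') = (\sigma r, \sigma s)$ is a transition; it remains to see $r'$ is reachable in $k$ steps, which is where I would invoke the induction hypothesis applied to $r$ (which trivially satisfies, say, the formula $\top$, or more directly: the induction hypothesis as I have stated it is about a fixed formula, so it is cleanest to prove the auxiliary claim ``reachability in $k$ steps is preserved under $\sigma$'' separately by its own induction on $k$, and then the main statement follows immediately by combining reachability-preservation with satisfaction-transfer). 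So the cleanest structure is: (1) prove by induction on $k$ that $r$ reachable in $k$ steps implies $\sigma r$ reachable in $k$ steps, using the auxiliary claim about initial states and transitions; (2) conclude the proposition by taking $s' = \sigma s$ and transferring satisfaction of $\phi$.

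The main obstacle is not conceptual but bookkeeping: making precise the construction of the permuted state $\sigma t$ in the multi-sorted setting with uninterpreted index elements, and verifying the satisfaction-transfer lemma $t \models \chi$ iff $\sigma t \models \sigma\chi$ carefully through the array-application terms $X[i]$ — one has to be sure the relabeling of array interpretations is done consistently with the relabeling of the constants $C$ so that $\sigma$ commutes with term evaluation. Once that lemma is in hand, everything else is a short induction. A subtlety worth flagging explicitly: the proposition as stated is an ``iff,'' and the two directions are genuinely symmetric only because the permutation group is closed under inverses and under the action on formulas, so I would make sure to spell out the reduction to one direction rather than proving both by hand.
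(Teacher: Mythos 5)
Your proposal is correct and follows essentially the same route the paper takes: the paper gives no explicit proof beyond the remark that the proposition follows ``from this lemma and an induction proof,'' and your argument---permuting the assignment of the frozen constants $C$ to obtain $\sigma t$, transferring satisfaction of $\sigma\phi$, and using the lemma's validities $\sigma\iota_n\leftrightarrow\iota_n$ and $\sigma\tau_n\leftrightarrow\tau_n$ to propagate reachability by induction on $k$---is exactly the intended one. The only nit is the bookkeeping you already flag: with $\sigma\phi$ defined by $c_i\mapsto c_{\sigma(i)}$, the permuted state should assign $c_i$ the value $t(c_{\sigma^{-1}(i)})$ rather than $t(c_{\sigma(i)})$ for the satisfaction-transfer to go through, a harmless fix since the permutation group is closed under inverses.
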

We exploit this property both for the verification of ground instances and in the generalization process.  In fact, from the last proposition, we can simplify every invariant problem $S_{n} \models \bigwedge_\sigma \sigma \phi(C, X)
$ -- where $\sigma$ ranges over all possible substitutions -- to $S_{n} \models \phi(C, X)$. This simplification is of great help when checking properties which are the result of instantiating a formula with only universal quantifiers. 
\subsubsection{Minimizing modulo symmetries}
\label{subs-minimizing}
Our next step will be to try to prove that the generalized lemmas from size $n$  also hold for all other ground instances.
Therefore, it is intuitive to try to weaken as much as possible the generalization $\Psi$, to increase the chances that its inductiveness will be preserved in other instances. 
So, before generalization, we exploit the invariant minimization techniques described in~\cite{FEAS} to weaken the inductive invariant $Inv$ by removing unnecessary clauses. 
However, note that, with our generalization technique, two symmetric clauses produce the same quantified formula: if $\sigma$ is a substitution of the $C$'s, 
the formulae obtained by generalizing a clause $\mathcal{C}(C)$ or $\sigma \mathcal{C}(C)$ are logically equivalent. 
So, we apply the following strategy: given a clause $\mathcal{C}(C)$ in $Inv$,
we add to the invariant all the `symmetric' versions $\sigma \mathcal{C}$, where $\sigma$ ranges over all possible substitutions of the $C$'s. By Proposition (\ref{symm}), we can safely add 
those clauses to $Inv$ and it will remain inductive. Then, during the minimization process, a clause is removed from the invariant only if all its `symmetric'
 versions are. %In our experiments, minimizing invariants with this method has proved to be crucial for the effectiveness of our approach.

\subsection{Candidate Checking via Parameter Abstraction}
\label{subs-param-abs}
In this section, we show how we can check that a formula $\phi$ is an invariant of $S$ by using only a quantifier-free model checker.
We do this by constructing a new system $\Tilde{S}$, called the parameter abstraction of $S$, which is quantifier-free and simulates $S$. The first version of this approach was introduced in \cite{Krstic2005ParametrizedSV}; in the following, we describe a novel version of the abstraction, and how it can be applied to array-based transition systems. The main novelty is that, instead of using a special abstract index ``$*$'' that over-approximates the behavior of the system in the array locations that are not explicitly tracked, we use $n$ \emph{environmental (index) variables} which are allowed to change non-deterministically in some transitions. This can be achieved by the usage of an additional stuttering transition: this rule allows the environmental variables to change value arbitrarily, 
while not changing the values of the array in all other indexes.

\textbf{Hypothesis}: In the remaining of this section, we suppose an additional hypothesis on the shape of the system $S$. These hypotheses are needed for the construction of the abstraction system $\Tilde{S}$, and they turn out not to be particularly restrictive: for instance, large classes of parameterized systems can still be described in this setting. In particular, we suppose that the candidate property $\phi(X)$ and the initial formula $\iota(X)$ are universal formulae. Moreover, we suppose that the transition formula $\tau(X,X')$ is a disjunction of formulae of the form $\exists I \forall J. \psi(I, J, X, X')$, with $\psi$ quantifier-free.% Finally, we will refer for the sake of simplicity to systems using only a single array variable \system. All the results can be extended to any finite set of array variables. 

\subsubsection{Preprocessing steps}
Before starting constructing the abstract system, 
we apply some preprocessing steps to $S$, justified by the following two proposition. The first proposition is basically an induction principle:
\begin{prop}[Guard strengthening~\protect{\cite{Krstic2005ParametrizedSV}}]
\label{thm-guard-strengthening}
Let $C = (X,\iota(X),\tau(X,X'))$ be a symbolic transition system and let $\phi(X)$ be a formula.
Let ${C_{\phi}}$$ \defeq (X, \iota(X),$ $\tau(X,X') \wedge\phi(X) )$ (called the guard-strengthening of $C$ with respect to $\phi(X)$).
If $\phi$ is an invariant of $C_{\phi}$, then it is also an invariant of $C$.
%Then, if $\forall \underline{i}. \phi(\underline{i}, a[\underline{i}])$ is an invariant of $C_{\phi}$, it is also an invariant of $C$.
\end{prop}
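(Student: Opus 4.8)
The plan is to prove the contrapositive-style statement directly: assuming $\phi$ is an invariant of $C_\phi$, show that $\phi$ holds in every reachable state of $C$. The key observation is that, although $C_\phi$ has a strictly stronger transition relation than $C$ (it conjoins $\phi(X)$ as an extra guard on the \emph{source} state), the two systems agree on all reachable states as long as $\phi$ keeps holding. So I would argue by induction on the length of a path in $C$ that every such path is in fact also a path of $C_\phi$.

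Concretely, let $s_0, s_1, \dots, s_k$ be any path of $C$; I claim by induction on $k$ that this is also a path of $C_\phi$ and that $s_k \models \phi$. For the base case, $s_0 \models \iota$, which is the initial condition of both $C$ and $C_\phi$; since $\iota(X) \models \phi(X)$ is part of $\phi$ being an invariant of $C_\phi$ — more precisely, $\phi$ being an invariant of $C_\phi$ implies in particular $s_0 \models \phi$ because $s_0$ is reachable in $C_\phi$ in zero steps — we get $s_0 \models \phi$. For the inductive step, suppose $s_0, \dots, s_{i}$ is a path of $C_\phi$ with $s_i \models \phi$, and $T(s_i, s_{i+1})$ holds in $C$, i.e. $s_i, s_{i+1} \models \tau(X,X')$. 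Since also $s_i \models \phi(X)$, we get $s_i, s_{i+1} \models \tau(X,X') \wedge \phi(X)$, which is exactly the transition relation of $C_\phi$; hence $s_0, \dots, s_{i+1}$ is a path of $C_\phi$. Therefore $s_{i+1}$ is reachable in $C_\phi$, and since $\phi$ is an invariant of $C_\phi$ we conclude $s_{i+1} \models \phi$. This closes the induction, and in particular every reachable state of $C$ satisfies $\phi$, i.e. $C \models \phi$.

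There is essentially no hard part here: the argument is a routine induction, and the only thing to be careful about is the bookkeeping of which transition relation a given step belongs to — the strengthened guard $\phi(X)$ constrains the pre-state, and the induction hypothesis is precisely what supplies that constraint, so the two systems never diverge on reachable states. One could also phrase the whole thing slightly more abstractly by exhibiting the identity relation on states as a (stuttering-free) simulation from $C$ restricted to $\phi$-states into $C_\phi$, but the direct path-induction is cleaner and self-contained, so that is the route I would take in the write-up.
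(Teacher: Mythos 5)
Your proof is correct. The paper itself gives no proof of this proposition (it is cited from the literature), and the routine path induction you give — showing every path of $C$ is also a path of $C_\phi$ because the inductive hypothesis $s_i \models \phi$ supplies exactly the strengthened guard — is the standard and intended argument; the only slight wobble is the phrase suggesting $\iota(X)\models\phi(X)$ is ``part of'' $\phi$ being an invariant of $C_\phi$, which you immediately and correctly repair by appealing to reachability in zero steps.
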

The second proposition is used to remove quantified variables in candidate properties:
\begin{prop}[Removing quantifiers\protect{\cite{EagerAbs}}]
\label{thm-prophecy-instantiation}
  \label{prophecy}
Let $C = (X, \iota(X), \tau(X, X'))$ be a symbolic transition system. 
Let $\forall X. \phi(X)$ be a formula, and $P$ a set of constants of the same length of $X$. Then, $\forall X. \phi(X)$ is an invariant for $C$ iff the formula $\phi[X/P]$  is an invariant for $C_{P}=(X\cup P, \iota(X), \tau(X, X')\wedge P'=P)$, where $P$ is a set of fresh frozen variables, called \textit{prophecy variables}.
\end{prop}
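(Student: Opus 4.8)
The plan is to reduce the biconditional to a precise correspondence between the reachable states of $C$ and those of $C_P$, exploiting the single structural fact that $P$ is a set of \emph{fresh} symbols occurring neither in $\iota(X)$ nor in $\tau(X,X')$, and whose only constraint in $C_P$ is the freezing conjunct $P' = P$. From this it follows that, for any fixed valuation $v$ of $P$, the path-building rules of $C_P$ restricted to states whose $P$-component equals $v$ coincide with those of $C$: forgetting the $P$-component sends a path of $C_P$ to a path of $C$ (the initial formula $\iota(X)$ and the transition $\tau(X,X')\wedge P'=P$ entail $\iota(X)$ resp.\ $\tau(X,X')$, which only mention $X$, $X'$), and conversely any path $s_0,\dots,s_k$ of $C$ lifts to a path of $C_P$ by setting $P \mapsto v$ in every state --- legal because $\iota$ does not mention $P$, so $s_0[P\mapsto v]\models\iota(X)$, and because $P'=P$ holds when $P$ is held at $v$. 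Hence $(\mathcal{M},s)$ is reachable in $C$ if and only if, for every (equivalently, for some) value $v$, the state $(\mathcal{M}, s[P \mapsto v])$ is reachable in $C_P$. Throughout, the background model $\mathcal{M}$ of $\mathcal{T}$ is the same on both sides, since $P$ are state variables rather than signature symbols, so no conservative extension of $\mathcal{M}$ is needed.

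With this correspondence in hand the two directions are short. For the left-to-right implication, assume $\forall X.\,\phi(X)$ is an invariant of $C$, let $\hat{s} = (\mathcal{M}, (s,p))$ be any reachable state of $C_P$, and let $(\mathcal{M}, s)$ be its projection, which is reachable in $C$; then $(\mathcal{M},s)\models\forall X.\,\phi(X)$, so instantiating the quantified variables with $p$ and applying the substitution lemma yields exactly $\hat{s}\models\phi[X/P]$. For the converse, assume $\phi[X/P]$ is an invariant of $C_P$, let $(\mathcal{M}, s)$ be reachable in $C$, and fix an arbitrary valuation $v$ of the quantified variables of $\phi$; lifting a path witnessing reachability of $(\mathcal{M},s)$ shows that $(\mathcal{M}, s[P\mapsto v])$ is reachable in $C_P$, hence satisfies $\phi[X/P]$, which by the substitution lemma says $(\mathcal{M},s), v \models \phi(X)$, and since $v$ was arbitrary, $(\mathcal{M},s)\models\forall X.\,\phi(X)$.

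I expect no genuine difficulty here: the argument is essentially bookkeeping over the definitions of path and reachability. The one place to be careful is the interaction of the universal quantifier with the substitution $[X/P]$ --- one must verify that evaluating $\phi[X/P]$ in a state of $C_P$ (which interprets $P$) agrees with evaluating $\phi(X)$ in the corresponding state of $C$ under the assignment sending the quantified variables to the value of $P$, and that this is available for an \emph{arbitrary} such value precisely because $P$ is unconstrained by $\iota$ and frozen by $\tau$. The freshness hypothesis on $P$ is exactly what makes both the ``forget $P$'' and the ``lift with an arbitrary $P$'' maps well defined, and it is worth stating explicitly before invoking the correspondence.
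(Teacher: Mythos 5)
Your proof is correct: the paper itself gives no proof of this proposition (it is imported by citation from \cite{EagerAbs}), and your two-way path correspondence --- forgetting $P$ to project $C_P$-paths onto $C$-paths, and lifting $C$-paths with an \emph{arbitrary} frozen value of $P$ to recover the universal quantifier --- is exactly the standard argument, with the essential hypotheses (freshness of $P$, absence of $P$ from $\iota$, and the freezing conjunct $P'=P$) invoked where they are needed. The only caveat worth flagging is notational: the statement writes $\forall X.\phi(X)$ with $X$ clashing with the state variables, whereas (as the surrounding text $S \models \forall I.\,\phi'(I,X)$ makes clear) the bound variables are index variables distinct from the state, which is the reading your proof correctly adopts.
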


In order to build the parameter abstraction of the system $S$, we first apply Proposition \ref{thm-guard-strengthening}, and 
add the candidate property itself in conjunction with the 
transition formula. Then, by applying Proposition 
\ref{thm-prophecy-instantiation}, we remove the 
universal quantifiers in the formula $\phi$ and 
introduce accordingly a set of prophecy variables $P$.
In this way, we reduce the problem of checking $S \models \forall I. \phi'(I, X)$ 
to $S \models \phi'(P, X)$.

\subsubsection{Abstraction Computation}\label{sec-abstraction-defn}

We continue the computation of the abstract system by defining a set of fresh variables, called the environmental variables $E$, in a number determined by the greatest existential quantification depth in the disjuncts of the transition formula of $S$. While the prophecies are frozen variables, 
the interpretation of the environmental variables is not fixed. Moreover, we assume that the values taken by $P$ and $E$ are different.
We now define the formulae for $\tilde{S}$, the parameter abstraction of $S$.
\subsubsection*{Initial formula}
Let $\iota(X)$ be the initial formula of $S$, which by hypothesis contains only universal quantifiers.  The initial formula of the abstract system is a formula $\tilde{\iota}(P, X)$, obtained by expanding all the universal quantifiers in $\iota$ over the set of prophecy variables $P$.

\subsubsection*{Transition formula}

For simplicity, we can assume that we have only one disjunct in $\tau(X,X')$. First, we compute the set of all substitutions of the existentially quantified variables $I$ over $P\cup E$, and we consider the set of formulae $\{\tilde{\tau}_j(P, P', E, X, X')\}$, where $j$ ranges over the substitutions, and $\tilde{\tau}_j$ is the result of applying the substitution to $\tau$.

Then, for each formula in the set $\{\tilde{\tau}_j\}$, we instantiate the universal quantifiers in it over the set $P \cup E$, obtaining a quantifier-free formula over prophecy and environmental variables.

Moreover, we consider an additional transition formula, called the \textbf{stuttering transition}, defined by:
\[\tilde{\tau}_S \defeq \bigwedge_{x \in X} \bigwedge_{p \in P} x'[p] = x[p] \wedge p' = p \]

The disjunction of all the abstracted transition formulae is the transition formula $\tilde{\tau}$. So, we can now define the transition system
$$
\tilde{S}
\defeq
(\{X, P, E\}, \tilde{\iota}(P, X), \tilde{\tau}(P, P', E, X, X')).
$$

%We distinguish the transitions in two classes: we call \emph{concrete
%transitions} the ones indexed by a substitution involving only
%Prophecies. When at least one environmental variable is
%involved in the substitution of the existentially quantified variables, then we have a \emph{degenerate transition}.
%This distinction is important in the refinement
%Concrete transitions can always be reproduced in the concrete space, while the degenerate ones may be spurious.

\subsubsection{Stuttering Simulation}
We state here the property of our version of the Parameter Abstraction,  the proof of which can be found in the appendix. Formally, the abstraction induces a stuttering simulation, where the stuttering is given by $\tilde{\tau}_S$:  this is a weaker version than the simulation induced by \cite{Krstic2005ParametrizedSV}, yet it is sufficient for preserving invariants.
%We write $\mathcal{S}(s_1)$ for $\{s_2|(s_1,s_2)\}\in \mathcal{S}$. 
%We recall that stutter simulations preserve reachability, i.e. if $S_2$
%stutter simulates $S_1$, then if $s_1$ is reachable in $S_1$ then the
%set $\mathcal{S}(s_1)$ is reachable in $C_2$. 
%Formally, the stuttering simulation induced by the Parameter Abstraction is defined as follows.
%\begin{defn}[Simulation]
 % Let $S$ be the original transition system and let $\tilde{S}$ be its
  %Parameter Abstraction.
  %
  % NOTA CHE le p sono indotte dalle proprieta' e non dal transition system!
  %
  %Let $s$ and $\tilde{s}$ denote states of $S$ and $\tilde{S}$,
  %respectively.
  %
  %We define a relation $\mathcal{S}$ among the set of states as follows:
  %$$
  %\mathcal{S}(s,\tilde{s}) \mbox{ iff }
  %s(a)[i] = \tilde{s}(a)[i] \mbox{ for all }
  %i \in \bigcup_{p\in P}  \tilde{s}(p).
 %$$
%
%\end{defn}
Intuitively, in our abstraction, we require that every abstract array is equal to its concrete counterparts only on the locations referred to by the prophecy variables. We then have the following:

\begin{prop}
\label{prop-general-simulation}
There exists a stuttering simulation between $S$ and $\tilde{S}$. Moreover, if $\tilde{S} \models \Phi(P, X)$, then $S \models \Phi(P, X)$.
\end{prop}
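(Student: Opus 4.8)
The plan is to exhibit an explicit stuttering simulation relation $\mathcal{S} \subset S_S \times S_{\tilde S}$ between the concrete system and its parameter abstraction, and then to invoke the standard fact that stuttering simulations preserve reachability (and hence universal invariants). The natural candidate for $\mathcal{S}$ is the set of pairs $((\mathcal{M},s),(\mathcal{M},\tilde s))$ such that, for every prophecy variable $p \in P$, the index value $\tilde s(p)$ equals $s(p)$, and for every array $x \in X$ the abstract array $\tilde s(x)$ agrees with $s(x)$ on the location pointed to by each prophecy variable, i.e.\ $\tilde s(x)[\tilde s(p)] = s(x)[s(p)]$ for all $p \in P$. (The environmental variables $E$ are left unconstrained by $\mathcal{S}$.) The first bullet of Definition~\ref{def-stutter-simul} is then immediate: if $s \models \iota(X)$, then since $\iota$ is universal by hypothesis, $s$ satisfies every instantiation of $\iota$ over the values named by $P$; picking $\tilde s$ to agree with $s$ on the $P$-locations makes $\tilde s \models \tilde\iota(P,X)$ by construction of $\tilde\iota$, and $(s,\tilde s)\in\mathcal{S}$.

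The heart of the argument is the transition clause. Given $(s,\tilde s)\in\mathcal{S}$ and a concrete step $s \rightarrow s_1$, we may assume (since $\tau$ is a disjunction) that a single disjunct $\exists I \forall J. \psi(I,J,X,X')$ fires, witnessed by concrete index values $\bar a$ for $I$. Here the two cases of a stuttering simulation come into play. If every witness value in $\bar a$ already happens to be among the values named by $P$ in $\tilde s$, we can mimic the move directly by the matching abstract disjunct $\tilde\tau_j$ (the one whose substitution $j$ sends $I$ to the corresponding prophecy variables): the universal part $\forall J.\psi$, once instantiated over $P\cup E$, is satisfied because $s_1$ satisfied it over all concrete indices, in particular over those named by $P\cup E$, and the agreement of $\tilde s$ and $s$ on $P$-locations (plus the freshness/disjointness of $P$ and $E$) transfers satisfaction to $\tilde s, \tilde s_1$. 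If instead some witness value in $\bar a$ is \emph{not} currently named by any prophecy variable, we first take a stuttering step $\tilde\tau_S$ that leaves all $P$-locations and all arrays' $P$-values unchanged but moves the environmental variables $E$ so that they name exactly the witness values $\bar a$ (this is allowed precisely because $E$ is non-frozen and $\tilde\tau_S$ constrains only $P$ and the $x'[p]$); then from that intermediate abstract state we apply the abstract disjunct corresponding to the substitution $I \mapsto E$. In both cases we choose $\tilde s_1$ to agree with $s_1$ on the $P$-locations, which re-establishes $(s_1,\tilde s_1)\in\mathcal{S}$; the guard-strengthening and prophecy-instantiation preprocessing guarantee that the abstract transition formula indeed contains the needed substituted-and-instantiated disjuncts.

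With $\mathcal{S}$ verified to be a stuttering simulation, reachability transfers: any concrete reachable state $s$ has an abstract reachable companion $\tilde s$ with $(s,\tilde s)\in\mathcal{S}$. Now suppose $\tilde S \models \Phi(P,X)$ with $\Phi$ universal in the way produced by our generalization (a conjunction of clauses over $P$-indexed atoms). Since $\tilde s$ and $s$ agree on exactly the $P$-locations of every array and on the values of $P$ themselves, and $\Phi(P,X)$ only refers to array entries at $P$-locations, $\tilde s \models \Phi(P,X)$ implies $s \models \Phi(P,X)$. As $s$ was an arbitrary reachable state of $S$, we conclude $S \models \Phi(P,X)$.

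I expect the main obstacle to be the transition clause, specifically the bookkeeping that shows the instantiated abstract disjuncts $\tilde\tau_j$ (universals expanded over $P\cup E$) are actually satisfied by the companion states. One must check that instantiating $\forall J.\psi$ over the finite set $P\cup E$ loses nothing relevant — this is where the disjointness assumption on the values of $P$ and $E$, the fact that $E$ has size equal to the maximal existential depth, and the agreement of $\mathcal{S}$ on $P$-locations all have to be used together — and one must handle the stuttering bookkeeping so that firing $\tilde\tau_S$ to reposition $E$ does not disturb the invariant-relevant $P$-locations. The rest (initial clause, reachability transfer, the final descent from $\tilde S \models \Phi$ to $S \models \Phi$) is routine.
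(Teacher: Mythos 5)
Your proposal follows essentially the same route as the paper's proof: the same simulation relation (agreement of the abstract and concrete arrays on the locations named by the shared frozen prophecy variables), the same treatment of the initial clause by instantiating the universal $\iota$ over $P$, the same direct-step-versus-stutter-then-step case split for the transition clause, and the same final appeal to reachability preservation together with a transfer lemma for formulae whose only free index variables are prophecies (the paper isolates this as Proposition~\ref{equiv-prop}).

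There is, however, one step that would fail as written. In your second transition case you reposition $E$ so that it names \emph{exactly} the witness values $\bar{a}$ and then fire the disjunct for the substitution $I \mapsto E$. But the construction assumes the values taken by $P$ and $E$ are disjoint, so when the witnesses are split --- some equal to prophecy values, some not --- you cannot move $E$ onto the prophecy-valued witnesses. The paper's Lemma~\ref{lem-simulation-transition} handles this with a mixed substitution $I \mapsto \{p_{j_1},\dots,p_{j_h},e_1,\dots,e_{m-h}\}$: witnesses already named by prophecies are mapped to the corresponding $p$'s, and only the remaining ones are captured by repositioned environmental variables. Note also that the stuttering step must copy the concrete array values into the new $E$-locations (not merely move the $E$ pointers), so that the subsequent abstract step sees the correct guard values; the paper's third case sets $\tilde{s}'(x)[\tilde{s}(e)] \defeq s(x)[\mu(i)]$ explicitly. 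With that repair your argument coincides with the paper's.
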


Unfortunately, if $\tilde{S} \not \models \Phi(P, X)$, we cannot conclude anything, since the counterexample may be spurious, due to a too coarse abstraction. If that is the case, we restart the loop of the algorithm by increasing the size of ground instance $n$, and either find a counterexample in a larger size or consider again the abstraction with a new candidate invariant.

\subsection{Candidate Checking via SMT solving}
\label{subs-smt-solving}
An alternative -- and more standard -- approach for performing the Candidate Checking procedure, which does not 
require particular syntactical restrictions, 
is to check directly if the formula $\Psi$ is an \emph{invariant strengthening} for $\phi$.

\begin{defn}
Let  $S = (X, \iota(X), \tau(X, X'))$ be an array-based transitions transition system, and $\phi$ a candidate invariant. An invariant strengthening $\Psi$ is a formula such that the following formulae are \tarray -unsatisfiable:

\begin{align}
\label{inductive}
\begin{split}
 \iota(X) \wedge \neg (\phi(X) \wedge \Psi(X)) \\
 \tau(X, X') \wedge 
\phi(X)\wedge \Psi(X) \wedge \neg (\phi(X') \wedge \Psi(X')).
\end{split}
\end{align}

\end{defn}

Since a formula is valid iff its negation is unsatisfiable, it follows from the definition that if $\Psi$ is an invariant strengthening for $\phi$ iff $\phi \wedge \Psi$ is an inductive invariant for $S$.

Checking the unsatisfiability of the formulae \eqref{inductive} could be implemented with the usage of any prover supporting SMT reasoning and quantifiers, e.g. \cite{z3, cvc5}. However,  
especially for satisfiable instances, such solvers can diverge easily. Thus, since many queries can be SAT, a naive usage 
of such tools will cause the procedure to get stuck in quantified reasoning with no progress obtained.
Therefore, we propose in this section a `bounded' sub-procedure of Candidate Checking, in which instead of relying on an off-the-shelf SMT solver supporting quantifiers, we `manually' apply standard instantiation-based techniques for quantified SMT reasoning~\cite{simplify}, in which however we carefully manage the set of terms used to instantiate the quantifiers, in order to prevent divergence.

Given a candidate inductive invariant, we perform Skolemization on the inductive query (\ref{inductive}), obtaining a universal formula. Then, we look for a set of terms $G$ 
such that the ground formula obtained by instantiating the universals with $G$ is unsatisfiable. This is the standard approach used in SMT solvers for detecting unsatisfiability of 
quantified formulae~\cite{simplify,z3mbqa}; the main difference is that instead of relying on heuristics 
to perform the instantiation lazily during the SMT search (e.g \cite{simplify,z3mbqa}), we carefully control the quantifier instantiation procedure, and expand the quantifiers eagerly so that we can use only quantifier-free SMT reasoning.

Let $\phi_S = \forall I. \phi'_S(I, X)$ be the result of the Skolemization process, 
where $\phi'_S$ is a quantifier-free formula over a signature $\Sigma'$, obtained by expanding $\Sigma$ with new Skolem symbols. Initially, we simply let $G$ be the set of 0-ary symbols of the index sort in the formula. 
Note that apart from constants in the original signature, new (Skolem) constants arise by eliminating existential quantifiers. 
%We can also see here% 
%\todo{This seems misplaced here (mixing description and comment on the overall approach) -- first we describe what we do, then we comment on why} 
%the benefits of using only universal quantifiers during the generalization phase:
Since we use only universal quantification for the generalized invariant strengthening, $\Psi$  is a conjunction of universal formulae, 
and we can swap the conjunction and the universal quantification to obtain a formula with only $n$ universal quantified variables, where $n$ is the size of the last ground instance visited. Notice that,  
since the candidate inductive strengthening occurs also negated in the quantified formula, this will produce $n$ new Skolem constants.%

Finally, we can add to the inductive query an additional constraint. By induction on the structure of our algorithm, if $\Psi$ is generalized from size $n$, we have proven already that the property $\phi$ holds in $S$ for all the ground instances of size equal to or less than $n$. Thus, we impose that in our universe $G$ there are at least $n$ different terms. 

To sum up, let $\phi_S = \forall I. \phi'_S(I, X)$ be the universal formula obtained after Skolemizing the formulae in (\ref{inductive}), and let $m$ be the length of $I$. Let $n$ be the cardinality of the last visited ground instance. Let $G$ be the set of constants of index sort in $\phi'_S$ (by the previous discussion, $|G| \geq n$). Let $c_1, \dots, c_n$ be a set of fresh variables of index sort. We test with an SMT solver the satisfiability of the following formula

\begin{equation}
  \label{inst-check}
\bigwedge_{\underline{g} \in G^m} \phi'_S[I/\underline{g}] \wedge \Alldiff(C) \wedge \bigwedge_{j=1}^n(\bigvee_{g \in G} c_j = g) 
\end{equation}
 \noindent
We have that:
 \begin{prop}
 \label{prop-instantiations}
For any set of $\Sigma_I$-terms $G$, if \eqref{inst-check} is unsatisfiable, then $\psi$ is an inductive strengthening for $\phi$.
 \end{prop}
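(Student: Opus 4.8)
I would prove the contrapositive: assuming $\psi$ is \emph{not} an inductive strengthening for $\phi$, I exhibit a model of \eqref{inst-check}. By the definition of inductive strengthening, if $\psi$ is not one then at least one of the two formulae in \eqref{inductive} is \tarray-satisfiable; hence their disjunction is satisfiable, and since Skolemization is satisfiability-preserving the universal formula $\phi_S = \forall I.\,\phi'_S(I,X)$ produced in the Skolemization step is satisfiable as well. Fix a model $\mathcal{M}$ of $\phi_S$ over the Skolem-expanded signature $\Sigma'$.

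The plan is then to read off a model of \eqref{inst-check} from $\mathcal{M}$. The first conjunct is immediate: for every tuple $\underline{g}\in G^m$ of ground index terms, the instance $\phi'_S[I/\underline{g}]$ is a logical consequence of $\phi_S$ by universal instantiation, hence holds in $\mathcal{M}$. The only thing left is to interpret the fresh variables $c_1,\dots,c_n$ so that $\Alldiff(C)$ and $\bigwedge_{j}\bigvee_{g\in G}c_j=g$ hold; equivalently, I must exhibit a model of $\phi_S$ in which the terms of $G$ denote at least $n$ pairwise-distinct index elements, and then assign the $c_j$ to $n$ of those elements.

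To secure such a model I would invoke the loop invariant of Algorithm~\ref{alg-gener}: since $\psi$ has been generalized from size $n$, the algorithm has already established that $\phi$ (together with the lemmas that survived) is an invariant of $S_k$ for every $k\le n$. I would then argue that any model of the satisfiable formula in \eqref{inductive} whose index universe has size $k<n$ leads to a contradiction: naming its $k$ elements by the frozen constants $c_1,\dots,c_k$, and using that all elements are now named, the grounding rules of Section~\ref{subs-ground} turn $\iota$, $\tau$, $\phi$, $\psi$ into their $k$-instances while preserving truth in the model (the auxiliary constraints $\alpha(a)_k$ hold vacuously, since $a$ maps named elements to named elements, and every lemma of $\psi$ generalized from a size $m>k$ contributes a vacuously true $k$-instance, its $\Alldiff$ antecedent being unsatisfiable over fewer than $m$ names). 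This would exhibit a violation of $S_k\models\phi_k$, contradicting the loop invariant; so $\phi_S$ has a model with at least $n$ distinct $G$-denotations, and \eqref{inst-check} is satisfiable.

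\textbf{Main obstacle.} The technical heart is exactly this last step: converting a too-small model of \eqref{inductive} into a genuine counterexample to the already-verified invariance of $\phi$ in small ground instances. The subtlety is that a model of the \emph{second} formula of \eqref{inductive} only supplies a \emph{constrained} pre-state (one satisfying $\phi\wedge\psi$), not a \emph{reachable} one, so the reduction to a violation of $S_k\models\phi_k$ has to route through the inductive-invariant witness $Inv$ returned by the ground model checker and through the vacuity of over-long lemma instances. Stating the loop invariant of Algorithm~\ref{alg-gener} precisely --- which lemmas are present at which size, and why $\phi$ and enough of the surviving lemmas remain invariants of every $S_k$ with $k\le n$ --- and showing it is preserved across iterations is where the real effort goes; the remaining ingredients (Skolemization preserving satisfiability, ground instances being logical consequences of a universal formula, grounding being truth-preserving once all elements are named) are routine.
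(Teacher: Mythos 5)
Your first half is exactly the argument the paper intends (the paper never writes out a formal proof of Proposition~\ref{prop-instantiations}; the reasoning lives in the two paragraphs preceding it): work contrapositively, note that if $\Psi$ is not an inductive strengthening then one of the formulae in \eqref{inductive} is satisfiable, that Skolemization preserves satisfiability, and that every conjunct $\phi'_S[I/\underline{g}]$ is a logical consequence of $\forall I.\,\phi'_S$ and hence holds in any of its models. If \eqref{inst-check} consisted only of the instantiations, the proposition would be immediate and your proof complete.

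The gap is concentrated entirely in the remaining conjunct $\Alldiff(C)\wedge\bigwedge_{j}\bigvee_{g\in G}c_j=g$, and you have located it correctly but not closed it. Your plan is to show that a model of \eqref{inductive} in which the $G$-terms denote fewer than $n$ distinct indices would contradict the already-verified facts $S_k\models\phi_k$ for $k\le n$. But a model of the second formula of \eqref{inductive} is a counterexample to \emph{induction}: its pre-state satisfies $\phi\wedge\Psi$ yet need not be \emph{reachable} in $S_k$, so no violation of $S_k\models\phi_k$ follows. You name this obstacle yourself and then defer it to ``routing through $Inv$'', but $Inv$ only certifies inductiveness of $\Psi_n$ in $S_n$ (Definition~\ref{def-general}); nothing in Algorithm~\ref{alg-gener} guarantees that $\phi_k\wedge\Psi_k$ is inductive in $S_k$ for $k<n$, which is what your contradiction would require. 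The paper's own justification (``we have proven already that the property $\phi$ holds \dots for all the ground instances of size equal to or less than $n$'') rests on the same conflation of invariance with inductiveness, so you have faithfully reconstructed the paper's reasoning including its weakest step; but as written neither your sketch nor the paper's remark shows that the cardinality conjunct cannot turn a genuine counterexample to induction living only in small models into a spurious UNSAT answer. A clean repair is either to prove that every satisfiable instance of \eqref{inductive} admits a model with at least $n$ distinct $G$-denotations (e.g.\ by a model-extension argument for the relevant formula class), or to drop the cardinality conjunct from the statement being proved, at the cost of the optimization.
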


\subsubsection{Refinement}
If the former formula is SAT, there are two possibilities. Either we have a real counterexample to induction, and we need a better candidate, or our instantiation set $G$ was too small to detect unsatisfiability. 
In general, if $G$ covers all possible $\Sigma_I$-terms, then we can deduce that the counterexample is not spurious. 
\begin{defn}
Given an index theory $\mathcal{T}_I$ with signature $\Sigma_I$, we say that a set of $\Sigma_I$-terms $G$ is saturated if, for all terms $\Sigma_I$-term $t$, there exists a $g \in G$ such that $\mathcal{T}_I \models t = g$. 
\end{defn}
%Note that $G$ is saturated in the simple case that it contains all the set of possible $\Sigma_I$-terms: if $\Sigma_I$ is a relational signature, and if during Skolemization new function symbols are not introduced, then $G$ is saturated simply if it contains all 0-ary symbols in $\Sigma_I$. 
So, if $G$ is saturated, any model of \eqref{inst-check} corresponds to
a counterexample to induction, and we need a better strengthening. However, in case \eqref{inst-check}
is satisfiable, but $G$ is not saturated, we use the following heuristic to decide whether we need a better candidate or a larger $G$. We consider the inductive 
query in $S^{n+1}$,%
%\todo{Inv don't understand what ``relatively of $S^{n+1}$'' means}
using as a candidate inductive invariant $(\Psi \wedge \phi)^{n+1}$. If the candidate invariant is still good (the query is UNSAT), we try to increase $G$ to get the unsatisfiability of the unbounded case.
Our choice is to add to $G$ terms of the form $f(x)$ where $f$ is a function symbol of index type, and $x$ are constants already in $G$. 
Note that if no function symbols are available, i.e. if $\Sigma_I$ is a relational signature, then saturation of $G$ follows already by considering 0-ary terms.
Therefore, in case $G$ is initially not saturated, the existence of at least one function symbol is guaranteed.

%\todo{Inv think this is not obvious to see. Maybe say explicitly that if there are no function symbols then saturation follows, and link with EPR?}
If the query (\ref{inst-check}) is now UNSAT, we have succeeded. 
Otherwise, we continue to add terms to $G$, until either all function symbols have been used, or an UNSAT result is encountered. 
If the candidate invariant strengthening is not inductive for size $n+1$ (the query is SAT), then we remove the bad lemmas, and we fail to prove the property.
%To do so, before finding new lemmas from size $n$, we can run an additional minimization procedure (see \S\ref{minimizing}) in size $S^{n+1}$,
%to try to remove unnecessary clauses. If a new invariant is obtained, we repeat the instantiation procedure. Otherwise, we repeat the whole loop, 
%starting by Candidate Checking the ground instance $S^{n+1}$ and obtaining a new strenghtening from size $n+1$.%
%\todo{where do we describe the additional generalisation inside invcheck?}

An important remark is necessary to put more insight on the reasons
why our instantiation procedure is effective, especially for the benchmarks we considered. 
As already mentioned, in many systems descriptions, especially the ones arising from parameterized verification, the formulae describing array-based systems fall into the decidable fragment of Proposition~\ref{prop-decidable-epr}. 
In this case, no function symbols are introduced during Skolemization: therefore, the set $G$ of 0-ary terms already is saturated.
Even in the case of  $\forall \exists$ alternation (but in a multi-sorted setting), saturation can be achieved after a few refinement steps 
(as long as the Skolem functions introduced in the signature do not combine in cycles). More details about the completeness of instantiation methods, especially for the verification of parameterized
systems, can be found in~\cite{MCMT, DBLP:journals/lmcs/FeldmanPISS19}. 
Since we limit ourselves to terms of depth one, our method can fail to prove invariants requiring some more complex 
instantiations. Note that in that case, it is always possible to change the choice and the refinement of the set $G$ with more sophisticated methods~\cite{ematch,z3mbqa}.

\subsection{Properties}
In this section, we state the general properties of the algorithm, that hold regardless of the strategies used during the procedure.
\label{subs-properties-2}
\begin{prop}(Soundness)
    If Algorithm~\ref{alg-gener} returns \Safe, then $S \models \phi$. If the algorithm returns \Unsafe, then $S \not \models \phi$.  
\end{prop}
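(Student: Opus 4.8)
The plan is to prove soundness by analyzing the two possible exit points of Algorithm~\ref{alg-gener} separately, namely the \Unsafe branch (line where $\pi \models \neg \phi_n$) and the \Safe branch (line where \textsc{InvariantChecking} succeeds).

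\textbf{The \Unsafe case.} First I would recall from the last remark of Section~\ref{subs-ground} that $S \models \phi$ is equivalent to $S_n \models \phi_n$ for all $n$. When the algorithm returns \Unsafe, the inner model checker has produced a counterexample trace $\pi$ in the ground instance $S_n$ with $\pi \models \neg\phi_n$; thus $\pi$ witnesses $S_n \not\models \phi_n$ for that particular $n$. By the equivalence just recalled, this immediately gives $S \not\models \phi$. The only subtlety to spell out is that $\pi$ is a genuine path of $S_n$ (not of some guard-strengthened or abstracted variant), which is exactly what the ground model checker returns, so no spuriousness can arise here.

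\textbf{The \Safe case.} When the algorithm returns \Safe, the \textsc{InvariantChecking} procedure has succeeded on input $(S, \phi, \Psi)$. Here I would split on which of the two implementations of the Candidate Checking box was used. If it is the SMT-based variant of Section~\ref{subs-smt-solving}, success means the query~\eqref{inst-check} is unsatisfiable, so by Proposition~\ref{prop-instantiations} $\Psi$ is an invariant strengthening for $\phi$, hence (by the remark following that definition) $\phi \wedge \Psi$ is an inductive invariant of $S$, and therefore $S \models \phi \wedge \Psi$, which gives $S \models \phi$. If it is the Parameter Abstraction variant of Section~\ref{subs-param-abs}, success means $\tilde S \models \Phi(P,X)$ for the abstract system built via guard strengthening (Proposition~\ref{thm-guard-strengthening}) and prophecy-variable elimination (Proposition~\ref{thm-prophecy-instantiation}); by Proposition~\ref{prop-general-simulation} this yields $S_P \models \phi'(P,X)$ for the guard-strengthened, prophecy-augmented system, and then unwinding Propositions~\ref{thm-prophecy-instantiation} and~\ref{thm-guard-strengthening} in turn gives $S \models \forall I.\phi'(I,X)$, i.e.\ $S \models \phi$. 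In both sub-cases the candidate lemmas $\Psi$ play only an auxiliary strengthening role and do not affect the conclusion about $\phi$ itself.

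\textbf{Main obstacle.} The calculations themselves are routine chains of implications through already-established propositions; the only real care needed is bookkeeping. Specifically, I would make sure that the generalization step is consistent with the checking step: the lemmas in $\Psi$ returned by \emph{generalization} are quantified formulae whose ground instances $\Psi_n$ are inductive for $S_n$ (Definition~\ref{def-general}), but soundness of the overall answer does \emph{not} depend on $\Psi$ being correct — if $\Psi$ were wrong, the Candidate Checking box would simply fail and the loop would continue. Hence the proof must phrase the \Safe case purely in terms of what \textsc{InvariantChecking} guarantees, independently of how $\Psi$ was obtained. The other point to handle carefully is that, in the \Unsafe case, the removal of false lemmas ($\Psi \setminus \{\psi_i \mid \pi\models\neg\psi_i\}$) happens only when $\pi \not\models \neg\phi_n$, so the returned trace in the genuine \Unsafe exit really does falsify $\phi_n$; this guard is what makes the argument go through.
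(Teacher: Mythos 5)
Your proof is correct and follows essentially the same route as the paper's: the \Safe case is discharged by Proposition~\ref{prop-general-simulation} or Proposition~\ref{prop-instantiations} depending on which Candidate Checking implementation is used, and the \Unsafe case reduces to the fact that a genuine counterexample in a ground instance $S_n$ witnesses $S \not\models \phi$. Your version is merely more explicit than the paper's two-line argument (in particular about the preprocessing propositions in the abstraction branch and about why the correctness of $\Psi$ is irrelevant to soundness), but the underlying reasoning is the same.
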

\begin{proof}
    The first claim follows directly from propositions \ref{prop-general-simulation} or \ref{prop-instantiations}. For the second claim, the algorithm terminates with unsafe only when $S^n \not \models \phi$, meaning that there exists a model 
    $\mathcal{M}$ of \tarray{} of cardinality $n$, and a sequence of states $s_0, \dots, s_k$, such that $\mathcal{M}, s_0 \models \iota$ and $\mathcal{M}, s_i, s_{i+1} \models \tau$ for all $0 \leq i < k$. This also means that $S \not \models \phi$.
\end{proof}
In general, we do not have any theoretical guarantees that our algorithm will terminate. In fact, even the 
ground Candidate Checking  $S_n \models \phi_n \wedge \Psi_n$ can be non-terminating. However, our algorithm guarantees
 (it follows from how we do the generalization, i.e. Definition~\ref{def-general}) that every lemma in $\Psi$ can be removed only after checking an $n'$ instance, with $n' > n$. Therefore, the algorithm always makes progress in the following sense.
\begin{prop}(Progress)
    During every execution of the loop of Algorithm \ref{alg-gener}, the same pair $(n, \Psi)$ never occurs twice.
\end{prop}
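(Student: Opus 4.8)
The plan is to read off a strict progress measure from the two components of the configuration visited by Algorithm~\ref{alg-gener}: the counter $n$ and the set $\Psi$ of candidate lemmas. I would first observe that $n$ is never decreased, and is incremented by exactly one at the end of each full pass through the outer \texttt{while} loop (the only alternative to completing such a pass is to \textbf{return} \Safe or \Unsafe, which halts the execution). Hence distinct passes of the outer loop use pairwise distinct values of $n$, and it suffices to show that \emph{within} one such pass --- with $n$ held fixed --- the set $\Psi$ never recurs.

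For fixed $n$, the value of $\Psi$ is changed only by two kinds of statements: (i) the removal $\Psi := \Psi \setminus \{\psi_i \mid \pi \models \neg\psi_i\}$ inside the inner \texttt{while} loop, executed once per inner iteration; and (ii) the single addition $\Psi := \Psi \cup \mathit{generalization}(\mathit{Inv})$ performed after the inner loop exits. The first kind I would show to be \emph{strictly} decreasing: when this branch is reached, $\pi$ is a counterexample to $S_n \models \phi_n \wedge \Psi_n$ whose final state does not satisfy $\neg\phi_n$ and therefore must violate $\Psi_n$, so at least one $\psi_i$ has $\pi \models \neg\psi_i$ and is removed. Consequently the successive values of $\Psi$ along the inner loop form a strictly descending chain under inclusion --- in particular they are pairwise distinct --- and the inner loop terminates in at most $|\Psi|$ iterations.

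It then remains to check that the value $\Psi^{+}$ produced by step (ii) is not one of the values visited during step (i); this is the only place where the shape of the generalization enters. By Definition~\ref{def-general} and the construction of Section~\ref{subs-gener}, the grounding $(\bigwedge \mathit{generalization}(\mathit{Inv}))_n$ is an inductive invariant of $S_n$, hence each conjunct $\psi_n$ with $\psi \in \mathit{generalization}(\mathit{Inv})$ is an invariant of $S_n$; conversely, every lemma $\psi_i$ removed during step (i) has $(\psi_i)_n$ violated at a state of $S_n$ that is reachable in $S_n$ (namely along $\pi$), so $(\psi_i)_n$ is \emph{not} an invariant of $S_n$. Therefore $\mathit{generalization}(\mathit{Inv})$ is disjoint from the set $R$ of all lemmas removed during this pass. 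Writing $\Psi_{\mathrm{out}}$ for the value of $\Psi$ at the moment the inner loop exits, each earlier value visited in (i) is $\Psi_{\mathrm{out}}$ augmented by a non-empty subset of $R$, while $\Psi^{+} = \Psi_{\mathrm{out}} \cup \mathit{generalization}(\mathit{Inv})$ is obtained from $\Psi_{\mathrm{out}}$ by adding only lemmas outside $R$; so $\Psi^{+}$ cannot equal any earlier value, and the sole borderline case $\Psi^{+} = \Psi_{\mathrm{out}}$ (generalization adds nothing new) is not a repetition but the single inner-loop-exit state seen once. Assembling the three observations gives the proposition.

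Most of this is bookkeeping; the step I expect to need the most care is the disjointness used in (ii). Concretely, one must be precise about two dichotomous facts: that the counterexample $\pi$ returned for $S_n$ is a genuine path of $S_n$ --- so that a removed lemma is really falsified by a reachable state --- and that the conjunction returned by $\mathit{generalization}$ is genuinely inductive for $S_n$ (which is exactly Definition~\ref{def-general}), so that invariance of each individual conjunct of its grounding follows. Everything else reduces to elementary set and loop reasoning.
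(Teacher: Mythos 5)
Your proof is correct and follows the same idea the paper relies on: the paper gives no formal proof, only the one-line remark that, by Definition~\ref{def-general}, a lemma generalized at size $n$ is an invariant of $S_n$ and hence can only be removed at some $n'>n$, which combined with the strict increase of $n$ gives the claim. Your argument is exactly the careful elaboration of that remark (strict decrease of $\Psi$ inside the inner loop, disjointness of the generalized lemmas from the removed ones because the former are invariants of $S_n$ and the latter are falsified on reachable states of $S_n$), so no genuine difference in approach.
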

Finally, by checking instances of bigger and bigger size, we semi-decide the problem of falsifying invariant problems:
\begin{prop}(Semi-completeness for counterexamples)
    Suppose $S \not \models \phi$, and the minimal counterexample wrt of the sizes of index models is in a certain size $n$. If all the model checking problems $S_{n'} \models \phi_{n'}$, with $n' < n$, terminate, 
    then algorithm \ref{alg-gener} eventually finds the counterexample. 
\end{prop}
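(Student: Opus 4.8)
The plan is to follow the value of the outer‑loop counter $n$ in Algorithm~\ref{alg-gener} and show that it is forced to increase, one unit at a time, until it reaches the critical size, at which point the inner loop is compelled to return \Unsafe together with a genuine counterexample. First I would fix notation: by the remark at the end of Section~\ref{subs-ground}, $S \models \phi$ is equivalent to $S_m \models \phi_m$ for every $m$, so from $S \not\models \phi$ the size $n$ in the statement is precisely the least $m$ with $S_m \not\models \phi_m$; in particular $S_{n'} \models \phi_{n'}$ for all $n' < n$ (this also pins down the informal phrase ``minimal counterexample with respect to index‑model size''). I would then record the two structural facts I will lean on: (i) by the Soundness proposition for Algorithm~\ref{alg-gener}, the call \textsc{InvariantChecking}$(S,\phi,\Psi)$ can never return true here, since $S \not\models \phi$; and (ii) by Definition~\ref{def-general} a lemma in $\Psi$ is only guaranteed to hold in the ground instance it was generalized from, which is exactly why the inner \textbf{while} loop that prunes falsified lemmas is present.

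Next I would analyze one visit to a stage $n' < n$. The inner loop repeatedly model‑checks $S_{n'} \models \phi_{n'} \wedge \Psi_{n'}$. Whenever the model checker returns a counterexample path $\pi$, minimality of $n$ forces $\pi \models \phi_{n'}$, so the test $\pi \models \neg\phi_{n'}$ fails and the algorithm instead discards every lemma $\psi_i$ violated along $\pi$; since $\Psi$ is finite and each such iteration strictly shrinks it, the inner loop can repeat only finitely often, so it must terminate by the model checker proving the (strengthened) property — in the worst case $\Psi$ has been emptied and the query is literally $S_{n'}\models\phi_{n'}$, which is true and, by hypothesis, is decided. In particular the inner loop never exits via \Unsafe at this stage. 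By fact (i) the subsequent \textsc{InvariantChecking} call fails, so the only remaining move is the assignment $n \gets n+1$. Hence the counter passes through $n' = 1, 2, \dots$ and reaches $n' = n$; the Progress proposition (no repeated pair $(n,\Psi)$) reinforces that the algorithm cannot stall at any fixed stage.

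Finally I would treat the stage $n' = n$. Here $S_n \not\models \phi_n$, hence $S_n \not\models \phi_n \wedge \Psi_n$ for any $\Psi$, so the model checker — which by assumption can disprove a false ground property — returns a counterexample $\pi$ on every call. If some returned $\pi$ satisfies $\pi \models \neg\phi_n$, the algorithm returns $(\Unsafe,\pi)$, and $\pi$ unrolled in $S_n$ is a counterexample of size $n$; done. Otherwise each returned $\pi$ only kills a lemma of $\Psi$, which, as above, can happen only finitely many times because $\Psi$ is finite and strictly shrinks; and since the inner loop cannot terminate via ``$S_n \models \phi_n \wedge \Psi_n$'' (that formula is false), it must eventually take the \Unsafe branch. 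Combining the three steps gives the statement.

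The step I expect to be the main obstacle is the termination of the inner loop at the stages $n' < n$: the hypothesis is phrased for the plain queries $S_{n'} \models \phi_{n'}$, whereas the algorithm actually issues the strengthened queries $S_{n'} \models \phi_{n'} \wedge \Psi_{n'}$ for a decreasing family of $\Psi$'s. I would close this either by reading the hypothesis as covering all ground model‑checking problems posed below size $n$, or by noting that, since within a stage $\Psi$ can only shrink and is finite, the inner loop reaches the empty‑$\Psi$ (hence hypothesized‑terminating) query after finitely many steps — provided the model checker does not itself diverge on the intermediate strengthened queries, which is the point that genuinely requires the underlying model checker to be well‑behaved. Everything else — the characterization of $n$, the use of Soundness to exclude a spurious \Safe, and the pruning bound from finiteness of $\Psi$ — is routine.
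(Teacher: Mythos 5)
The paper states this proposition without proof, so there is no official argument to compare against; your reconstruction is the natural one and is essentially correct: minimality of $n$ forces every counterexample at a stage $n'<n$ to violate only a lemma of the finite, strictly shrinking set $\Psi$, soundness rules out a \Safe{} exit, so $n$ is incremented until stage $n$, where the pruning argument plus falsity of $\phi_n$ eventually forces the \Unsafe{} branch. The obstacle you single out (the hypothesis covers only the plain queries $S_{n'}\models\phi_{n'}$, not the strengthened ones $S_{n'}\models\phi_{n'}\wedge\Psi_{n'}$) is real and is indeed an implicit well-behavedness assumption on the ground model checker; the same caveat applies at stage $n$ itself, which the hypothesis ($n'<n$) does not literally cover, and to the \textsc{InvariantChecking} calls at stages $n'<n$, which your step (i) treats as returning false but which could in principle diverge (the paper's bounded instantiation of Section~\ref{subs-smt-solving} terminates by construction, but the parameter-abstraction and plain-\zr{} variants need not). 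These are gaps in the proposition's statement rather than in your argument, and flagging the first of them explicitly, as you do, is the right call.
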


\section{Related Work}
\label{sec-rel-work}

Verification of systems with quantifiers ranging over finite but unbounded domains has always received a lot of attention from the literature. In particular, a main area of application, for which our method is designed, is parameterized verification. 

In this field, various techniques are based on \textit{cut-off} results. In our terms, a cut-off is a size of a ground instance that contains already all possible behaviors. Cut-off values exist for large varieties of classes of systems, but such results strongly depend on assumptions such as topology, data, etc (see \cite{ParamDecidab} for a survey). In contrast, our methodology aims to be deductive and more general. Still, the idea of generalizing from ground instances is the main block of the algorithm described in Section \ref{sec-lambda}. Indeed, we can see \textit{a posteriori} that, when the algorithm terminates, we have generalized the behaviors of a ground instance to the parameterized case.

The approach of invisible invariants~\cite{InvisibleInv} was to our knowledge the first which proposed the usage of ground exploration to produce candidate universally quantified invariants. In that paper, a candidate invariant is generalized from the formula describing the set of reachable states of a ground instance. The technique was BDD-based and considered only Boolean systems. Therefore, systems considered in that work are a subclass of ours.

Abstraction methods are another major trend in the verification of quantified systems.  Within this family of abstractions, earlier versions of the Parameter Abstraction \cite{Krstic2005ParametrizedSV,SimpleMethod} have been used
successfully also for industrial protocols \cite{Talupur}. 
The main drawback is that the degree of automation is limited, and substantial expertise is required to obtain the desired results. The first steps of our abstraction algorithm of Section \ref{subs-param-abs} are inspired by the ones in \cite{EagerAbs} and \cite{Krstic2005ParametrizedSV}. 

Tools designed for the automatic verification of systems with a combination of first-order quantifiers and SMT theories are \mcmt \cite{MCMT} and \cubicle \cite{cubicle}. These tools use the framework of array-based transition systems but with additional restrictions on the shape of the formulae defining the system. For example, all the assignments of the next state variables in the transition formulae must be functional, i.e. only deterministic updates are allowed. Both tools implement a fully symbolic backward reachability algorithm, where pre-images of states can be described by symbolic quantified formulae. Quantified queries are then discharged with an instantiation approach similar to Section \ref{subs-smt-solving}. Nonetheless, some approximations can be introduced during the backward computation, which may cause spurious counterexamples\cite{alberti2012universal}. \cubicle extends this algorithm by using finite instance exploration to speed up pre-image computation \cite{DBLP:conf/fmcad/ConchonGKMZ13}. Instead, our approaches are not based on the computation of pre-images.  

Ivy \cite{Ivy2016, DBLP:journals/lmcs/FeldmanPISS19} is a tool for the verification of inductive invariants of a class of systems, which can be embedded in our formalism. The tool is not completely automatic and guides the user to write manually inductive invariant. Inspired by Ivy, \mypyvy \cite{QfSep} is a tool that implements algorithms for the automatic discovery of inductive invariants. Among those we have UPDR\cite{updr}, the mentioned version of the IC3 algorithm capable of inferring universally quantified invariants, that we extended in Section \ref{sec-updria} to handle a general theory $\mathcal{T}_E$. Other algorithms available are FOL-IC3 \cite{QfSep}, which extends IC3 by using separators to find invariants with quantifier alternation during the construction of frames, and PdH, a recent algorithm that combines the duality between states and predicates to discover invariants  \cite{pdh}.

Additionally, there are various tools in the literature are designed to use finite instance exploration to guess invariants to lift to the unbounded case \cite{ic3po, i4, DistAI, DBLP:conf/nsdi/HanceHMP21, AutomApproach}. These tools either rely on cut-off results, and so are domain-specific, and on some external prover to discharge the quantified queries. In particular, the tool \icrpo~\cite{ic3po} performs an algorithm similar to the one in Section \ref{sec-lambda}.  Moreover, it implements a generalization technique that can also infer formulae with quantifier alternation by detecting symmetries in a clausal proof. All these methods however are not applicable in the presence of a general SMT theory $\mathcal{T}_E$, but only in the case of pure first-order logic.

Another SMT-based approach for parametric verification is in \cite{SMTparam}. The method is based on a reduction of invariant checking to the satisfiability of non-linear Constrained Horn Clauses (CHCs). Besides differing substantially in the overall approach, the method is more restrictive in the input language and handles invariants only with a specific syntactic structure.

IC3 with implicit abstraction is presented in \cite{msatic3} as an efficient combination of Boolean IC3 and predicate abstraction, which does not rely on expensive ALLSAT procedures for the computation of the abstract system. The algorithm in Section \ref{sec-updria} follows the same ideas, but instead combines UPDR and indexed predicate abstraction \cite{indexpredicates} to obtain a general algorithm for array-based transition systems. 

The use of prophecy variables for inferring universally quantified invariants has also been explored in other contexts, such as \cite{tacas21, DBLP:conf/vmcai/VickM23}. The main difference with our work is that they focus on finding quantified invariants for quantifier-free transition systems with arrays, 
rather than array-based systems with quantifiers. The overall abstraction-refinement approach is also substantially different.

%%% Local Variables:
%%% mode: latex
%%% TeX-master: "main"
%%% End:

\section{Experimental Evaluation}
\label{sec-expval}
We organize the experimental evaluation as follows: first, we compare the two algorithms and discuss different options in Section \ref{subsec-comparison-algos};
then, we compare the algorithms to other tools in Section \ref{subsec-comparison-others}.
The tools we considered are the model checkers \mcmt, \cubicle, \icrpo, and \mypyvy, all previously mentioned in Section \ref{sec-rel-work}. 
All benchmarks and our implementations are available at the following link: \url{https://drive.google.com/file/d/1_lIUa_Y-yKAhj5bXnX1hLEovFmP3Jos9/view?usp=sharing}. 

We have run our experiments on a cluster of machines with a 2.90GHz Intel Xeon Gold 6226R CPU running Ubuntu Linux 20.04.1, using a time limit of 1 hour and a memory limit of 4GB for each instance. 

\subsection{Implementation}
We have implemented the algorithms described in sections \ref{sec-updria} and \ref{sec-lambda} in Python. 
In this section, we test the two algorithms on various benchmarks, by considering different options and back-end engines. 
For the first algorithm, which we will denote in the following as \UPDRIA, we used the SMT solver \zr \cite{z3} for satisfiability queries (using the approach \cite{z3mbqa} for quantifiers). We also implemented a simple loop to ensure that every index model found by the solver would be of minimal (finite) size: in theory, when working outside the decidable fragment \ref{prop-decidable-epr}, there is no guarantee that such a finite model exists. However, this was never an issue in our experiments. 

We used instead \mathsat \cite{msat} to extract interpolants from ground unrollings, as explained in Section \ref{sub-concretizing-updria}. 
For the second algorithm, denoted as \Lambdatool, we used \icria\cite{msatic3} as a quantifier-free model checker. 
Finally, the sub-procedure of invariant checking with SMT solving explained in \ref{subs-smt-solving}, has been implemented on top of \mathsat. 
Upon termination, both algorithms return either a counterexample trace in a ground instance, or an inductive invariant which can also be checked externally with automatic provers such as Z3, Vampire or CVC5 \cite{z3, Vampire, cvc5}. 

\subsection{Benchmarks}
We collected several benchmarks from different sources, that we divided in three main families. 

The first family of benchmarks consists of 238 array-based systems in \mcmt format, modelling parameterized protocols, timed systems, programs manipulating arrays, and more, taken from \cite{fm18, VERIFY-2010:MCMT_in_Land_of, DBLP:journals/tdsc/BruschiPGLP22} and from the standard \mcmt distribution. 
Often, in those systems, the index theory is pure equality or the theory of a linear order; instead, the theory of element is usually $\mathcal{LIA}$, $\mathcal{LRA}$, or the theory of an enumerated datatype. 
On this set of benchmarks, we could run all the versions of the algorithms presented in this paper, as well as \mcmt. 
In principle, such benchmarks are supported also by \cubicle; 
in practice, however, \cubicle and \mcmt use two (quite) different input languages, so we could only run the former on the 42 instances of this family for which the \cubicle translation is available.

The second family of benchmarks considered are 52 array-based systems in VMT \cite{vmtlib} and \mypyvy format, taken from \cite{ic3po, pdh}. 
This family does not use theories for array elements (i.e. $\mathcal{T}_E$ is simply the theory of Booleans), 
but is more liberal than the previous one in the shape of the formulae used in the transitions.  
Therefore, we could not run \mcmt (nor \cubicle) on this family, %directly. 
%Moreover, we could not 
nor we could use the parameter abstraction technique of Section \ref{subs-param-abs} on them. This family of systems can be used as inputs to the model checkers \icrpo and \mypyvy.

Finally, the third family %of benchmarks we consider is 
consists of a set of 25 array-based transition systems modeling simple train verification problems \cite{isola20}, on which only the two algorithms presented in this paper can be applied, due to the presence of various SMT theories and non-restricted transition formulae.

\subsection{Comparison of \UPDRIA and \Lambdatool}
\label{subsec-comparison-algos}
We compare here the two algorithms described in this paper, \UPDRIA and \Lambdatool. 
In addition to the basic versions described in Sections~\ref{sec-updria} and \ref{sec-lambda}, for \UPDRIA we also implemented a variant of the algorithm (denoted with option \textbf{--size-$n$}) which combines some of the ideas of Section \ref{sec-lambda}. 
With this option, the algorithm starts by considering a ground instance of size $n$, and extracts a set of lemmas from it (as in \ref{subs-gener}) which can help the algorithm to converge faster or can be discarded if falsified later. For the majority of the benchmarks considered, this strategy (with $n=3$) improved the performances of \UPDRIA.
For \Lambdatool, we consider various options; with the flag \textbf{--no-symm}, we do not use the results of Section \ref{subs-symm} to help the model checking of ground instances. With the flag \textbf{--no-invgen}, we do not apply the invariant minimization technique explained in \ref{subs-minimizing}. 
Moreover, we distinguish three options for implementing the invariant checking sub-procedure of Algorithm \ref{alg-gener}: the option \textbf{--param} consists of the implementation of the parameter abstraction technique of Section \ref{subs-param-abs}; the option \textbf{--ind} applies the procedure of Section \ref{subs-smt-solving}. Finally, the option \textbf{--z3} simply calls \zr on inductive queries.  

We tested all such options on the first family of benchmarks, where it was possible to compare them all. 
A summary of the results is reported in Table \ref{tab:results1-summary}, where we show the number of solved benchmarks and the total time taken by the tool; 
we also report plots comparing different \Lambdatool options in Figure~\ref{fig:cactus-lambda}, 
and different \UPDRIA options in Figure~\ref{fig:cactus-updria}. 
Those plots show, for each point in time, the number of solved instances up to that time.
We consider as solved instances the ones where the algorithms terminate with \Safe or \Unsafe. 
\begin{table}[ht]
  \centering
  \caption{Summary of experimental results on MCMT benchmarks.
  \label{tab:results1-summary}}
\begin{scriptsize}
  \begin{tabular}{l|r|r}
            &  \textbf{Tot solved}   & \textbf{Tot time} \\
  
\hline
\Lambdatool--ind & 216 & 2773s\\
\Lambdatool--nosymm--ind & 216 & 8740s \\
\Lambdatool--z3 & 214 & 11615s\\
\Lambdatool--nosymm--z3 & 214 & 14615s \\
\UPDRIA--size-3 & 208 & 24038s \\
\Lambdatool--param & 206 & 1055s \\
\Lambdatool--noinvgen--ind & 204  & 4073s \\
\Lambdatool--nosymm--param & 203 & 17313s \\
\Lambdatool--noinvgen--param & 202 & 1914s \\
\Lambdatool--noinvgen--nosymm--param & 202 & 10158s \\
\Lambdatool--noinvgen--z3 & 191 & 8897s \\
\Lambdatool--noinvgen--nosymm--ind & 180 & 7986s \\
\UPDRIA--size-2 & 197 & 17488s\\
\UPDRIA--size-4 & 197 & 20943s \\
\Lambdatool--noinvgen--nosymm--z3 & 166 & 3025s \\
\UPDRIA & 162 & 27649s \\

  \end{tabular}
\end{scriptsize}
\end{table}

\begin{figure}[h!]
    \centering
    \includegraphics[width=0.75\textwidth]{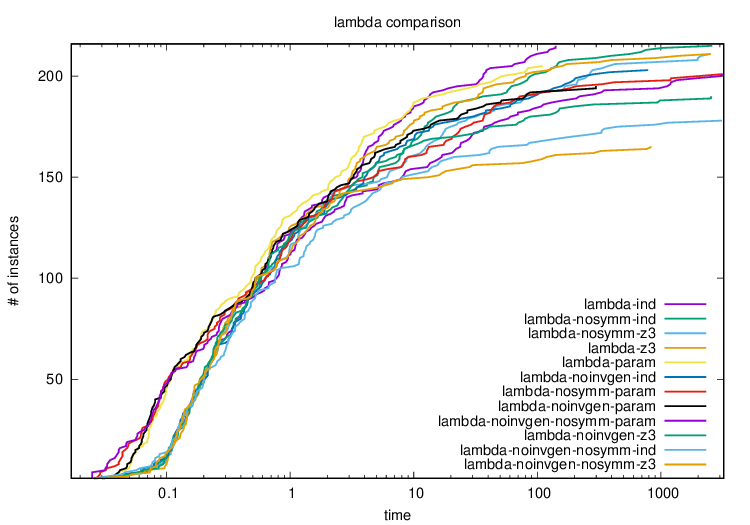}
    \caption{Plot comparing different options of \Lambdatool}
    \label{fig:cactus-lambda}
\end{figure}

\begin{figure}[h!]
    \centering
    \includegraphics[width=0.75\textwidth]{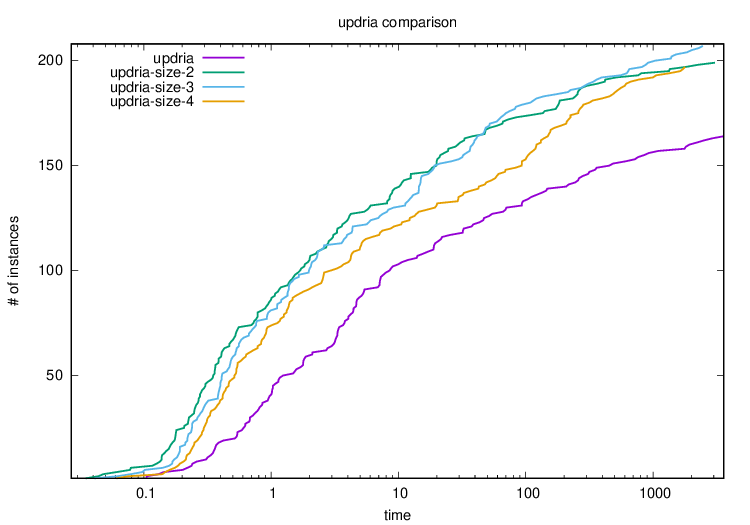}
    \caption{Plot comparing different options of  \UPDRIA}
    \label{fig:cactus-updria}
\end{figure}

In the case of \UPDRIA, most of the time used by the algorithm is consumed by \zr for proving the (un)satisfiability 
of quantified queries. 
In this case, when the procedure diverges, it usually does so by extending the frame sequence without converging to an inductive invariant. 
In general, the algorithm can also diverge with a sequence of refinement failures: as explained in Section~\ref{sub-concretizing-updria}, this can also happen when and all the predicates 
of the interpolants extracted by \mathsat are already in the predicate set $\mathcal{P}$. However, the latter possibility is rarer in our experiments: in the \mcmt benchmarks, this happened only once. Interestingly, we discovered that for that model, no universal inductive invariant exists.

The performance profile of \Lambdatool, instead, is very different from the above. The procedure relies less on quantified reasoning, and most of its time is spent in the model checking of ground instances. 
When the tool diverges, it is usually during this phase: if an invariant is not found, larger and larger (ground) instances are analyzed. 

From the experiments, it was clear that the results about symmetries of Section \ref{subs-symm} reduced drastically the time of model checking of ground instances, thus improving the overall performances. 
The technique of invariant reduction of Section 
\ref{subs-minimizing} was also helpful, in particular, to obtain better lemmas from finite instances (meaning that they 
generalized better to other instances). In this experimental evaluation, the usage of the parameter abstraction technique does not show particular benefits in contrast to the usage of SMT-solving techniques on inductive queries. Although for some instances the computation and model checking of the abstract system was quicker, it is possible to see the prophecy variables of the parameter abstraction as an \textit{a priori} fixed instantiation strategy, thus reducing the generality of the approach. 
Finally, we can also see how our simple instantiation technique was more efficient than a naive usage of \zr: this is because, in the case of non-inductive lemmas, the solver takes a lot of time in building models for counterexamples to induction.
Instead, our approach discards lemmas with an additional ground instance exploration, as explained in Section \ref{subs-smt-solving}.

Finally, we remark on the difference in time consumed by the two algorithms, \UPDRIA and \Lambdatool. As already mentioned, this is due to the fact that most of the \UPDRIA queries are quantified, whereas \Lambdatool tries to avoid that as much as possible. To better illustrate the difference, in Figure \ref{fig:scatter-updria-lambda}, we report a scatter plot comparing the two tools with their respective best options. 

\begin{figure}[h]
    \centering
    \includegraphics[width=0.75\textwidth]{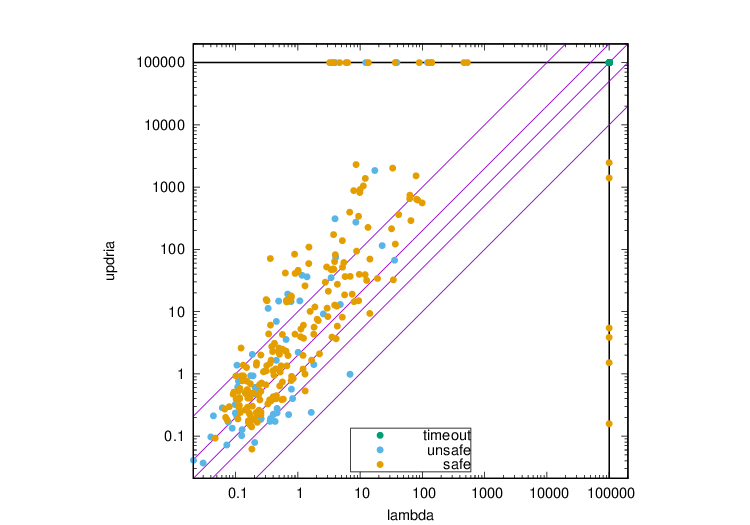}
    \caption{Scatter plot comparing \UPDRIA and \Lambdatool}
    \label{fig:scatter-updria-lambda}
\end{figure}

%Most of the time is \zr time. 
%We then discuss the lambda algorithm. We presented various options, that we can experimentally evaluate. the first difference is in using smt solving and parameter abstaction; as written in.. we use the latter only in case bla bla. This is true in all \mcmt benchmarks; it is also true in some of the others, but we did not implement some automatic rewriting. On this common set, we also evaluate the usage of \zr for smt solving, the usage of symmetries in groudn model checking, and the usage of the invariant reduction technique. 
\subsection{Comparison with other tools}
\label{subsec-comparison-others}
We report in Table \ref{tab:resultsall-summary} a summary of the comparison of the algorithms described in this paper and the other available model checkers. 
The table shows, for each tool/configuration, the number of solved instances for each of the family of benchmarks we considered. In the second column, we show the number of benchmarks solved in the subset of the first family available also for \cubicle. When a tool is not applicable to a family, we use the symbol `\textbf{-}'. For \UPDRIA and \Lambdatool, we used the best options according to the previous section. For \mcmt, we used standard settings: the tool has however different strategies, but we did not investigate the best ones for each benchmark. For \cubicle, we used the \textbf{--brab 2} option. For \mypyvy, we used the implementation and the best options given in the artifact of \cite{QfSep}. For \icrpo, we used the option \textbf{--finv=2} to discharge unbounded checks with Z3. 
The properties of the tools we consider are summarized in Table \ref{tab:tool-summary}.

\begin{table}[ht]
  \centering
  \caption{Summary of the characteristics of the tools considered}
  \label{tab:tool-summary}
\begin{scriptsize}
  \begin{tabular}{l|r|r|r|r}
            & \textbf{$\mathcal{T}_E$ only Boolean} & \textbf{$\mathcal{T}_E$ generic} & \textbf{$\tau(X, X')$ generic} & $\forall \exists$ invariants \\
\hline
\UPDRIA & \checkmark &  \checkmark &  \checkmark & -\\
\Lambdatool &  \checkmark &  \checkmark &  \checkmark & -\\
\mcmt &  \checkmark &  \checkmark & - & -\\
\cubicle &  \checkmark &  \checkmark & - & -\\
\icrpo &  \checkmark & - &  \checkmark  &  \checkmark\\
\mypyvy &  \checkmark & - &  \checkmark &   \checkmark\\
  \end{tabular}
\end{scriptsize}
\end{table}

\begin{table}[ht]
  \centering
  \caption{Summary of experiments on all benchmarks.}
  \label{tab:resultsall-summary}
\begin{scriptsize}
  \begin{tabular}{l|r|r|r|r|r}
            & \textbf{\mcmt (238 tot)} &
            \textbf{\cubicle (42 tot)} &
            \textbf{VMT (52 tot)} & \textbf{Trains (25 tot)} \\
\hline
\UPDRIA & 208 & 30 & 29 & 24 \\
\Lambdatool & 214 & 35 & 29 & 25 \\
\mcmt & 172 & 24  & - & - & \\
\cubicle & - &31 & - & -  \\
\icrpo & - & - & 36 & -  \\
\mypyvy & - & - & 27 & -  \\
\hline
\textbf{VBS} & 220 & 36 & 37 & 25 \\
  \end{tabular}
\end{scriptsize}
\end{table}
We also show, in Figures \ref{fig:mcmt-comp}, \ref{fig:cubicle-comp}, \ref{fig:vmt-compt}, plots comparing \Lambdatool and \UPDRIA with the other tools on single benchmarks families. We can see from the table and the plots the generality of our approach, and the fact that both \UPDRIA and \Lambdatool compare well with the state of the art.

\begin{figure}[h!]
    \centering
    \includegraphics[width=0.5\textwidth]{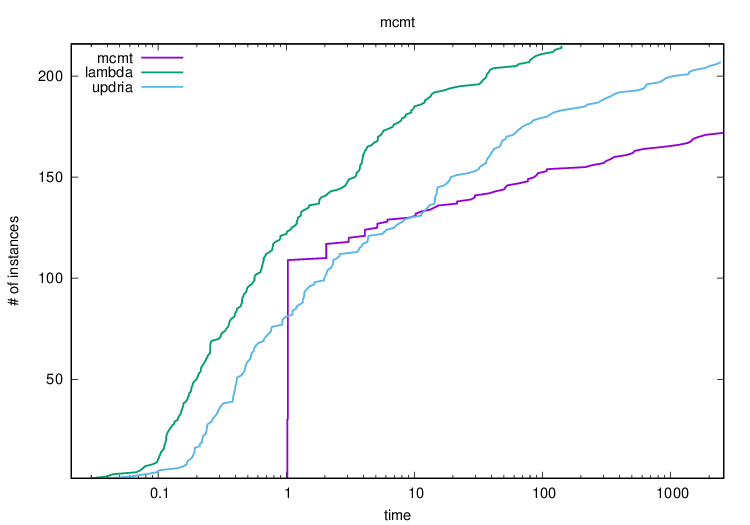}
    \caption{Plot for Mcmt family}
    \label{fig:mcmt-comp}
\end{figure}

\begin{figure}[h!]
\centering
    \centering
    \includegraphics[width=0.5\textwidth]{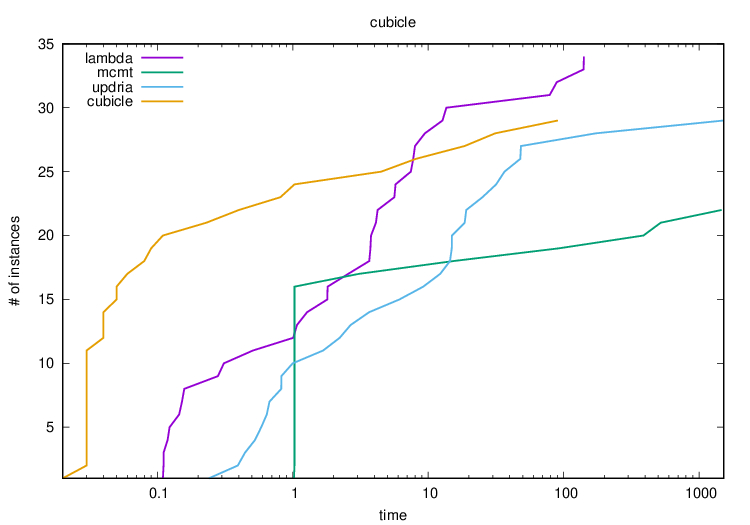}
    \caption{Plot for Cubicle family}
    \label{fig:cubicle-comp}
\end{figure}

\begin{figure}[h!]
    \centering
        \includegraphics[width=0.5\textwidth]{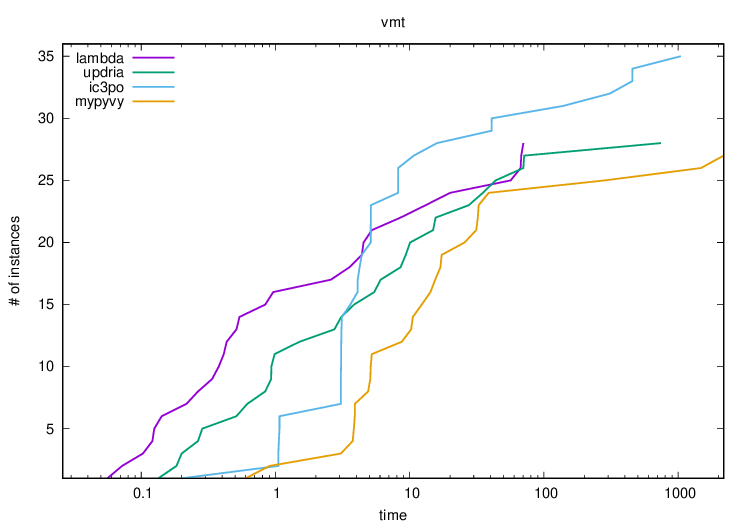}
 \caption{Plot for VMT family}
    \label{fig:vmt-compt}
\end{figure}

In particular, \Lambdatool is outperformed only by \icrpo in the second family of benchmarks, solving 7 more instances. In these cases, \icrpo finds an inductive invariant that contains an existential quantifier, while \Lambdatool (and \UPDRIA) only search for universally quantified inductive invariants. In particular, \UPDRIA diverges on those instances by discovering always the same predicates from a finite instance, as explained in Section~\ref{sub-concretizing-updria}.

%%% Local Variables:
%%% mode: latex
%%% TeX-master: "main"
%%% End:

\section{Conclusions and Future Work}
\label{sec-conclusions}
In this paper, we studied the problem of synthesizing universal inductive invariants for array-based transition systems, a large class of symbolic transition systems that can be used to model parameterized systems, array programs, and more.
We proposed two algorithms: the first one, called \UPDRIA, combines UPDR (a version of the IC3 algorithm capable of inferring universal invariants) with a novel form of implicit predicate abstraction. 
Once the predicates are fixed, the resulting algorithm returns (upon termination) either a counterexample to the property or a universal invariant over the set of predicates, if exists. However, such an algorithm heavily relies on quantified SMT-solving, which is in practice an expensive subroutine. Moreover, it can diverge with an infinite series of wrong refinements. The second algorithm we propose, called \Lambdatool, is instead more lazy, and relies on the idea that a quantified inductive invariant can be found by generalizing proofs of small instances of the system. This idea requires less quantified reasoning and has proven to be very effective. We have performed an extensive experimental evaluation of the algorithms, that shows that both can be applied to a large set of benchmarks and that they are competitive with other tools. 
As a future work, we plan to extend both algorithms to synthesize invariants with quantifier alternation. Moreover, we intend to investigate the applicability of variants of these algorithms to handle liveness properties. 
%%% Local Variables:
%%% mode: latex
%%% TeX-master: "main"
%%% End:

\begin{acks}
  This work has been partly supported by the project “AI@TN” funded by the
Autonomous Province of Trento and by the PNRR project FAIR -  Future AI
Research (PE00000013),  under the NRRP MUR program funded by the
NextGenerationEU. 
\end{acks}
\newpage
\bibliographystyle{ACM-Reference-Format}
\bibliography{references}

%%% -*-BibTeX-*-
%%% Do NOT edit. File created by BibTeX with style
%%% ACM-Reference-Format-Journals [18-Jan-2012].

\begin{thebibliography}{52}

%%% ====================================================================
%%% NOTE TO THE USER: you can override these defaults by providing
%%% customized versions of any of these macros before the \bibliography
%%% command.  Each of them MUST provide its own final punctuation,
%%% except for \shownote{}, \showDOI{}, and \showURL{}.  The latter two
%%% do not use final punctuation, in order to avoid confusing it with
%%% the Web address.
%%%
%%% To suppress output of a particular field, define its macro to expand
%%% to an empty string, or better, \unskip, like this:
%%%
%%% \newcommand{\showDOI}[1]{\unskip}   % LaTeX syntax
%%%
%%% \def \showDOI #1{\unskip}           % plain TeX syntax
%%%
%%% ====================================================================

\ifx \showCODEN    \undefined \def \showCODEN     #1{\unskip}     \fi
\ifx \showDOI      \undefined \def \showDOI       #1{#1}\fi
\ifx \showISBNx    \undefined \def \showISBNx     #1{\unskip}     \fi
\ifx \showISBNxiii \undefined \def \showISBNxiii  #1{\unskip}     \fi
\ifx \showISSN     \undefined \def \showISSN      #1{\unskip}     \fi
\ifx \showLCCN     \undefined \def \showLCCN      #1{\unskip}     \fi
\ifx \shownote     \undefined \def \shownote      #1{#1}          \fi
\ifx \showarticletitle \undefined \def \showarticletitle #1{#1}   \fi
\ifx \showURL      \undefined \def \showURL       {\relax}        \fi
% The following commands are used for tagged output and should be
% invisible to TeX
\providecommand\bibfield[2]{#2}
\providecommand\bibinfo[2]{#2}
\providecommand\natexlab[1]{#1}
\providecommand\showeprint[2][]{arXiv:#2}

\bibitem[Alberti et~al\mbox{.}(2012)]%
        {alberti2012universal}
\bibfield{author}{\bibinfo{person}{Francesco Alberti}, \bibinfo{person}{Silvio
  Ghilardi}, \bibinfo{person}{Elena Pagani}, \bibinfo{person}{Silvio Ranise},
  {and} \bibinfo{person}{Gian~Paolo Rossi}.} \bibinfo{year}{2012}\natexlab{}.
\newblock \showarticletitle{Universal guards, relativization of quantifiers,
  and failure models in model checking modulo theories}.
\newblock \bibinfo{journal}{\emph{Journal on Satisfiability, Boolean Modeling
  and Computation}}  \bibinfo{volume}{8} (\bibinfo{year}{2012}),
  \bibinfo{pages}{29--61}.
\newblock


\bibitem[Amendola et~al\mbox{.}(2020)]%
        {isola20}
\bibfield{author}{\bibinfo{person}{Arturo Amendola}, \bibinfo{person}{Anna
  Becchi}, \bibinfo{person}{Roberto Cavada}, \bibinfo{person}{Alessandro
  Cimatti}, \bibinfo{person}{Alberto Griggio}, \bibinfo{person}{Giuseppe
  Scaglione}, \bibinfo{person}{Angelo Susi}, \bibinfo{person}{Alberto
  Tacchella}, {and} \bibinfo{person}{Matteo Tessi}.}
  \bibinfo{year}{2020}\natexlab{}.
\newblock \showarticletitle{A Model-Based Approach to the Design, Verification
  and Deployment of Railway Interlocking System}. In
  \bibinfo{booktitle}{\emph{ISoLA 2020, Rhodes, Greece, October 20-30, 2020,
  Proceedings, Part {III}}} \emph{(\bibinfo{series}{Lecture Notes in Computer
  Science}, Vol.~\bibinfo{volume}{12478})}. \bibinfo{publisher}{Springer},
  \bibinfo{pages}{240--254}.
\newblock


\bibitem[Barbosa et~al\mbox{.}(2022)]%
        {cvc5}
\bibfield{author}{\bibinfo{person}{Haniel Barbosa}, \bibinfo{person}{Clark~W.
  Barrett}, \bibinfo{person}{Martin Brain}, \bibinfo{person}{Gereon Kremer},
  \bibinfo{person}{Hanna Lachnitt}, \bibinfo{person}{Makai Mann},
  \bibinfo{person}{Abdalrhman Mohamed}, \bibinfo{person}{Mudathir Mohamed},
  \bibinfo{person}{Aina Niemetz}, \bibinfo{person}{Andres N{\"{o}}tzli},
  \bibinfo{person}{Alex Ozdemir}, \bibinfo{person}{Mathias Preiner},
  \bibinfo{person}{Andrew Reynolds}, \bibinfo{person}{Ying Sheng},
  \bibinfo{person}{Cesare Tinelli}, {and} \bibinfo{person}{Yoni Zohar}.}
  \bibinfo{year}{2022}\natexlab{}.
\newblock \showarticletitle{cvc5: {A} Versatile and Industrial-Strength {SMT}
  Solver}. In \bibinfo{booktitle}{\emph{Tools and Algorithms for the
  Construction and Analysis of Systems - 28th International Conference, {TACAS}
  2022}} \emph{(\bibinfo{series}{Lecture Notes in Computer Science},
  Vol.~\bibinfo{volume}{13243})}. \bibinfo{publisher}{Springer},
  \bibinfo{pages}{415--442}.
\newblock


\bibitem[Bloem et~al\mbox{.}(2015)]%
        {ParamDecidab}
\bibfield{author}{\bibinfo{person}{Roderick Bloem}, \bibinfo{person}{Swen
  Jacobs}, {and} \bibinfo{person}{Ayrat Khalimov}.}
  \bibinfo{year}{2015}\natexlab{}.
\newblock \bibinfo{booktitle}{\emph{Decidability of Parameterized
  Verification}}.
\newblock \bibinfo{publisher}{Morgan \& Claypool Publishers}.
\newblock
\showISBNx{1627057439, 9781627057431}


\bibitem[Bradley(2011)]%
        {ic3}
\bibfield{author}{\bibinfo{person}{Aaron~R. Bradley}.}
  \bibinfo{year}{2011}\natexlab{}.
\newblock \showarticletitle{SAT-Based Model Checking without Unrolling}. In
  \bibinfo{booktitle}{\emph{Verification, Model Checking, and Abstract
  Interpretation}}, \bibfield{editor}{\bibinfo{person}{Ranjit Jhala} {and}
  \bibinfo{person}{David Schmidt}} (Eds.). \bibinfo{publisher}{Springer Berlin
  Heidelberg}, \bibinfo{address}{Berlin, Heidelberg}, \bibinfo{pages}{70--87}.
\newblock
\showISBNx{978-3-642-18275-4}


\bibitem[Bruschi et~al\mbox{.}(2022)]%
        {DBLP:journals/tdsc/BruschiPGLP22}
\bibfield{author}{\bibinfo{person}{Danilo Bruschi}, \bibinfo{person}{Andrea~Di
  Pasquale}, \bibinfo{person}{Silvio Ghilardi}, \bibinfo{person}{Andrea Lanzi},
  {and} \bibinfo{person}{Elena Pagani}.} \bibinfo{year}{2022}\natexlab{}.
\newblock \showarticletitle{A Formal Verification of ArpON - {A} Tool for
  Avoiding Man-in-the-Middle Attacks in Ethernet Networks}.
\newblock \bibinfo{journal}{\emph{{IEEE} Trans. Dependable Secur. Comput.}}
  \bibinfo{volume}{19}, \bibinfo{number}{6} (\bibinfo{year}{2022}),
  \bibinfo{pages}{4082--4098}.
\newblock


\bibitem[Carioni et~al\mbox{.}(2012)]%
        {VERIFY-2010:MCMT_in_Land_of}
\bibfield{author}{\bibinfo{person}{Alessandro Carioni}, \bibinfo{person}{Silvio
  Ghilardi}, {and} \bibinfo{person}{Silvio Ranise}.}
  \bibinfo{year}{2012}\natexlab{}.
\newblock \showarticletitle{MCMT in the Land of Parametrized Timed Automata}.
  In \bibinfo{booktitle}{\emph{VERIFY-2010. 6th International Verification
  Workshop}} \emph{(\bibinfo{series}{EPiC Series in Computing},
  Vol.~\bibinfo{volume}{3})}, \bibfield{editor}{\bibinfo{person}{Markus
  Aderhold}, \bibinfo{person}{Serge Autexier}, {and} \bibinfo{person}{Heiko
  Mantel}} (Eds.). \bibinfo{publisher}{EasyChair}, \bibinfo{pages}{47--64}.
\newblock
\showISSN{2398-7340}


\bibitem[Chou et~al\mbox{.}(2004)]%
        {SimpleMethod}
\bibfield{author}{\bibinfo{person}{Ching-Tsun Chou},
  \bibinfo{person}{Phanindra~K. Mannava}, {and} \bibinfo{person}{Seungjoon
  Park}.} \bibinfo{year}{2004}\natexlab{}.
\newblock \showarticletitle{A Simple Method for Parameterized Verification of
  Cache Coherence Protocols}. In \bibinfo{booktitle}{\emph{Formal Methods in
  Computer-Aided Design}}, \bibfield{editor}{\bibinfo{person}{Alan~J. Hu} {and}
  \bibinfo{person}{Andrew~K. Martin}} (Eds.). \bibinfo{publisher}{Springer
  Berlin Heidelberg}, \bibinfo{address}{Berlin, Heidelberg},
  \bibinfo{pages}{382--398}.
\newblock
\showISBNx{978-3-540-30494-4}


\bibitem[Cimatti et~al\mbox{.}(2016)]%
        {msatic3}
\bibfield{author}{\bibinfo{person}{Alessandro Cimatti},
  \bibinfo{person}{Alberto Griggio}, \bibinfo{person}{Sergio Mover}, {and}
  \bibinfo{person}{Stefano Tonetta}.} \bibinfo{year}{2016}\natexlab{}.
\newblock \showarticletitle{Infinite-state invariant checking with {IC3} and
  predicate abstraction}.
\newblock \bibinfo{journal}{\emph{Formal Methods Syst. Des.}}
  (\bibinfo{year}{2016}).
\newblock


\bibitem[Cimatti et~al\mbox{.}(2021a)]%
        {Lambda}
\bibfield{author}{\bibinfo{person}{Alessandro Cimatti},
  \bibinfo{person}{Alberto Griggio}, {and} \bibinfo{person}{Gianluca Redondi}.}
  \bibinfo{year}{2021}\natexlab{a}.
\newblock \showarticletitle{Universal Invariant Checking of Parametric Systems
  with Quantifier-free {SMT} Reasoning}. In \bibinfo{booktitle}{\emph{{CADE}
  28}}.
\newblock


\bibitem[Cimatti et~al\mbox{.}(2022)]%
        {atva22}
\bibfield{author}{\bibinfo{person}{Alessandro Cimatti},
  \bibinfo{person}{Alberto Griggio}, {and} \bibinfo{person}{Gianluca Redondi}.}
  \bibinfo{year}{2022}\natexlab{}.
\newblock \showarticletitle{Verification of {SMT} Systems with Quantifiers}. In
  \bibinfo{booktitle}{\emph{Automated Technology for Verification and Analysis
  - 20th International Symposium, {ATVA} 2022, Virtual Event, October 25-28,
  2022, Proceedings}} \emph{(\bibinfo{series}{Lecture Notes in Computer
  Science}, Vol.~\bibinfo{volume}{13505})},
  \bibfield{editor}{\bibinfo{person}{Ahmed Bouajjani},
  \bibinfo{person}{Luk{\'{a}}s Hol{\'{\i}}k}, {and} \bibinfo{person}{Zhilin
  Wu}} (Eds.). \bibinfo{publisher}{Springer}, \bibinfo{pages}{154--170}.
\newblock


\bibitem[Cimatti et~al\mbox{.}(2013)]%
        {msat}
\bibfield{author}{\bibinfo{person}{Alessandro Cimatti},
  \bibinfo{person}{Alberto Griggio}, \bibinfo{person}{Bastiaan~Joost
  Schaafsma}, {and} \bibinfo{person}{Roberto Sebastiani}.}
  \bibinfo{year}{2013}\natexlab{}.
\newblock \showarticletitle{The MathSAT5 SMT Solver}. In
  \bibinfo{booktitle}{\emph{TACAS'13}} (Rome, Italy)
  \emph{(\bibinfo{series}{TACAS'13})}. \bibinfo{publisher}{Springer-Verlag},
  \bibinfo{address}{Berlin, Heidelberg}.
\newblock
\showISBNx{9783642367410}


\bibitem[Cimatti et~al\mbox{.}(2010)]%
        {InterpolMathsat}
\bibfield{author}{\bibinfo{person}{Alessandro Cimatti},
  \bibinfo{person}{Alberto Griggio}, {and} \bibinfo{person}{Roberto
  Sebastiani}.} \bibinfo{year}{2010}\natexlab{}.
\newblock \showarticletitle{Efficient Generation of Craig Interpolants in
  Satisfiability modulo Theories}.
\newblock \bibinfo{journal}{\emph{ACM Trans. Comput. Logic}}
  \bibinfo{volume}{12}, \bibinfo{number}{1}, Article \bibinfo{articleno}{7}
  (\bibinfo{date}{nov} \bibinfo{year}{2010}), \bibinfo{numpages}{54}~pages.
\newblock
\showISSN{1529-3785}
\urldef\tempurl%
\url{https://doi.org/10.1145/1838552.1838559}
\showDOI{\tempurl}


\bibitem[Cimatti et~al\mbox{.}(2021b)]%
        {vmtlib}
\bibfield{author}{\bibinfo{person}{Alessandro Cimatti},
  \bibinfo{person}{Alberto Griggio}, {and} \bibinfo{person}{Stefano Tonetta}.}
  \bibinfo{year}{2021}\natexlab{b}.
\newblock \showarticletitle{The {VMT-LIB} Language and Tools}.
\newblock \bibinfo{journal}{\emph{CoRR}}  \bibinfo{volume}{abs/2109.12821}
  (\bibinfo{year}{2021}).
\newblock
\showeprint[arXiv]{2109.12821}


\bibitem[Cimatti et~al\mbox{.}(2018)]%
        {fm18}
\bibfield{author}{\bibinfo{person}{Alessandro Cimatti}, \bibinfo{person}{Ivan
  Stojic}, {and} \bibinfo{person}{Stefano Tonetta}.}
  \bibinfo{year}{2018}\natexlab{}.
\newblock \showarticletitle{Formal Specification and Verification of Dynamic
  Parametrized Architectures}. In \bibinfo{booktitle}{\emph{{FM} 2018}}.
\newblock


\bibitem[Clarke et~al\mbox{.}(2003)]%
        {CEGAR}
\bibfield{author}{\bibinfo{person}{Edmund~M. Clarke}, \bibinfo{person}{Orna
  Grumberg}, \bibinfo{person}{Somesh Jha}, \bibinfo{person}{Yuan Lu}, {and}
  \bibinfo{person}{Helmut Veith}.} \bibinfo{year}{2003}\natexlab{}.
\newblock \showarticletitle{Counterexample-guided abstraction refinement for
  symbolic model checking}.
\newblock \bibinfo{journal}{\emph{J. {ACM}}} \bibinfo{volume}{50},
  \bibinfo{number}{5} (\bibinfo{year}{2003}), \bibinfo{pages}{752--794}.
\newblock
\urldef\tempurl%
\url{https://doi.org/10.1145/876638.876643}
\showDOI{\tempurl}


\bibitem[Conchon et~al\mbox{.}(2012)]%
        {cubicle}
\bibfield{author}{\bibinfo{person}{Sylvain Conchon}, \bibinfo{person}{Amit
  Goel}, \bibinfo{person}{Sava Krstic}, \bibinfo{person}{Alain Mebsout}, {and}
  \bibinfo{person}{Fatiha Za\"idi}.} \bibinfo{year}{2012}\natexlab{}.
\newblock \showarticletitle{{Cubicle: A Parallel SMT-based Model Checker for
  Parameterized Systems}}. In \bibinfo{booktitle}{\emph{CAV 2012}}.
\newblock


\bibitem[Conchon et~al\mbox{.}(2013)]%
        {DBLP:conf/fmcad/ConchonGKMZ13}
\bibfield{author}{\bibinfo{person}{Sylvain Conchon}, \bibinfo{person}{Amit
  Goel}, \bibinfo{person}{Sava Krstic}, \bibinfo{person}{Alain Mebsout}, {and}
  \bibinfo{person}{Fatiha Za{\"{\i}}di}.} \bibinfo{year}{2013}\natexlab{}.
\newblock \showarticletitle{Invariants for finite instances and beyond}. In
  \bibinfo{booktitle}{\emph{Formal Methods in Computer-Aided Design, {FMCAD}
  2013}}.
\newblock


\bibitem[de~Moura and Bj{\o}rner(2008)]%
        {z3}
\bibfield{author}{\bibinfo{person}{Leonardo~Mendon{\c{c}}a de Moura} {and}
  \bibinfo{person}{Nikolaj Bj{\o}rner}.} \bibinfo{year}{2008}\natexlab{}.
\newblock \showarticletitle{{Z3:} An Efficient {SMT} Solver}. In
  \bibinfo{booktitle}{\emph{{TACAS}}}.
\newblock


\bibitem[Detlefs et~al\mbox{.}(2005)]%
        {simplify}
\bibfield{author}{\bibinfo{person}{David Detlefs}, \bibinfo{person}{Greg
  Nelson}, {and} \bibinfo{person}{James~B. Saxe}.}
  \bibinfo{year}{2005}\natexlab{}.
\newblock \showarticletitle{Simplify: a theorem prover for program checking}.
\newblock \bibinfo{journal}{\emph{J. {ACM}}} \bibinfo{volume}{52},
  \bibinfo{number}{3} (\bibinfo{year}{2005}), \bibinfo{pages}{365--473}.
\newblock


\bibitem[Dooley and Somenzi(2016)]%
        {DBLP:conf/cav/DooleyS16}
\bibfield{author}{\bibinfo{person}{Michael Dooley} {and} \bibinfo{person}{Fabio
  Somenzi}.} \bibinfo{year}{2016}\natexlab{}.
\newblock \showarticletitle{Proving Parameterized Systems Safe by Generalizing
  Clausal Proofs of Small Instances}. In \bibinfo{booktitle}{\emph{{CAV}
  2016}}.
\newblock


\bibitem[Feldman et~al\mbox{.}(2019)]%
        {DBLP:journals/lmcs/FeldmanPISS19}
\bibfield{author}{\bibinfo{person}{Yotam M.~Y. Feldman}, \bibinfo{person}{Oded
  Padon}, \bibinfo{person}{Neil Immerman}, \bibinfo{person}{Mooly Sagiv}, {and}
  \bibinfo{person}{Sharon Shoham}.} \bibinfo{year}{2019}\natexlab{}.
\newblock \showarticletitle{Bounded Quantifier Instantiation for Checking
  Inductive Invariants}.
\newblock \bibinfo{journal}{\emph{Log. Methods Comput. Sci.}}
  (\bibinfo{year}{2019}).
\newblock


\bibitem[Ge et~al\mbox{.}(2009)]%
        {z3mbqa}
\bibfield{author}{\bibinfo{person}{Yeting Ge}, \bibinfo{person}{Clark Barrett},
  {and} \bibinfo{person}{Cesare Tinelli}.} \bibinfo{year}{2009}\natexlab{}.
\newblock \showarticletitle{Solving Quantified Verification Conditions Using
  Satisfiability modulo Theories}.
\newblock \bibinfo{journal}{\emph{Annals of Mathematics and Artificial
  Intelligence}} (\bibinfo{date}{feb} \bibinfo{year}{2009}),
  \bibinfo{pages}{101–122}.
\newblock


\bibitem[Ghilardi et~al\mbox{.}(2008)]%
        {SMTarray}
\bibfield{author}{\bibinfo{person}{Silvio Ghilardi}, \bibinfo{person}{Enrica
  Nicolini}, \bibinfo{person}{Silvio Ranise}, {and} \bibinfo{person}{Daniele
  Zucchelli}.} \bibinfo{year}{2008}\natexlab{}.
\newblock \showarticletitle{Towards SMT Model Checking of Array-Based Systems}.
  In \bibinfo{booktitle}{\emph{Automated Reasoning}},
  \bibfield{editor}{\bibinfo{person}{Alessandro Armando},
  \bibinfo{person}{Peter Baumgartner}, {and} \bibinfo{person}{Gilles Dowek}}
  (Eds.). \bibinfo{publisher}{Springer Berlin Heidelberg},
  \bibinfo{address}{Berlin, Heidelberg}, \bibinfo{pages}{67--82}.
\newblock
\showISBNx{978-3-540-71070-7}


\bibitem[Ghilardi and Ranise(2010)]%
        {MCMT}
\bibfield{author}{\bibinfo{person}{Silvio Ghilardi} {and}
  \bibinfo{person}{Silvio Ranise}.} \bibinfo{year}{2010}\natexlab{}.
\newblock \showarticletitle{Backward Reachability of Array-based Systems by
  {SMT} solving: Termination and Invariant Synthesis}.
\newblock \bibinfo{journal}{\emph{Log. Methods Comput. Sci.}}
  \bibinfo{volume}{6}, \bibinfo{number}{4} (\bibinfo{year}{2010}).
\newblock


\bibitem[Goel and Sakallah(2021a)]%
        {ic3po}
\bibfield{author}{\bibinfo{person}{Aman Goel} {and} \bibinfo{person}{Karem~A.
  Sakallah}.} \bibinfo{year}{2021}\natexlab{a}.
\newblock \showarticletitle{On Symmetry and Quantification: {A} New Approach to
  Verify Distributed Protocols}. In \bibinfo{booktitle}{\emph{{NFM} 2021}}.
\newblock


\bibitem[Goel and Sakallah(2021b)]%
        {DBLP:conf/fmcad/GoelS21}
\bibfield{author}{\bibinfo{person}{Aman Goel} {and} \bibinfo{person}{Karem~A.
  Sakallah}.} \bibinfo{year}{2021}\natexlab{b}.
\newblock \showarticletitle{Towards an Automatic Proof of Lamport's Paxos}. In
  \bibinfo{booktitle}{\emph{{FMCAD} 2021}}. \bibinfo{publisher}{{IEEE}},
  \bibinfo{pages}{112--122}.
\newblock


\bibitem[Gurfinkel et~al\mbox{.}(2016)]%
        {SMTparam}
\bibfield{author}{\bibinfo{person}{Arie Gurfinkel}, \bibinfo{person}{Sharon
  Shoham}, {and} \bibinfo{person}{Yuri Meshman}.}
  \bibinfo{year}{2016}\natexlab{}.
\newblock \showarticletitle{SMT-Based Verification of Parameterized Systems}.
  In \bibinfo{booktitle}{\emph{Proceedings of the 2016 24th ACM SIGSOFT
  International Symposium on Foundations of Software Engineering}} (Seattle,
  WA, USA) \emph{(\bibinfo{series}{FSE 2016})}. \bibinfo{publisher}{Association
  for Computing Machinery}, \bibinfo{address}{New York, NY, USA},
  \bibinfo{pages}{338–348}.
\newblock
\showISBNx{9781450342186}


\bibitem[Hance et~al\mbox{.}(2021)]%
        {DBLP:conf/nsdi/HanceHMP21}
\bibfield{author}{\bibinfo{person}{Travis Hance}, \bibinfo{person}{Marijn
  Heule}, \bibinfo{person}{Ruben Martins}, {and} \bibinfo{person}{Bryan
  Parno}.} \bibinfo{year}{2021}\natexlab{}.
\newblock \showarticletitle{Finding Invariants of Distributed Systems: It's a
  Small (Enough) World After All}. In \bibinfo{booktitle}{\emph{{NSDI} 2021}}.
  \bibinfo{publisher}{{USENIX} Association}, \bibinfo{pages}{115--131}.
\newblock


\bibitem[Ivrii et~al\mbox{.}(2014)]%
        {FEAS}
\bibfield{author}{\bibinfo{person}{Alexander Ivrii}, \bibinfo{person}{Arie
  Gurfinkel}, {and} \bibinfo{person}{Anton Belov}.}
  \bibinfo{year}{2014}\natexlab{}.
\newblock \showarticletitle{Small inductive safe invariants}. In
  \bibinfo{booktitle}{\emph{Formal Methods in Computer-Aided Design, {FMCAD}
  2014, Lausanne, Switzerland, October 21-24, 2014}}.
  \bibinfo{publisher}{{IEEE}}, \bibinfo{pages}{115--122}.
\newblock


\bibitem[Karbyshev et~al\mbox{.}(2015)]%
        {updr}
\bibfield{author}{\bibinfo{person}{A. Karbyshev}, \bibinfo{person}{N.
  Bj{\o}rner}, \bibinfo{person}{S. Itzhaky}, \bibinfo{person}{N. Rinetzky},
  {and} \bibinfo{person}{S. Shoham}.} \bibinfo{year}{2015}\natexlab{}.
\newblock \showarticletitle{Property-Directed Inference of Universal Invariants
  or Proving Their Absence}. In \bibinfo{booktitle}{\emph{Computer Aided
  Verification}}, \bibfield{editor}{\bibinfo{person}{Daniel Kroening} {and}
  \bibinfo{person}{Corina~S. P{\u{a}}s{\u{a}}reanu}} (Eds.).
\newblock
\showISBNx{978-3-319-21690-4}


\bibitem[Koenig et~al\mbox{.}(2020)]%
        {QfSep}
\bibfield{author}{\bibinfo{person}{Jason~R. Koenig}, \bibinfo{person}{Oded
  Padon}, \bibinfo{person}{Neil Immerman}, {and} \bibinfo{person}{Alex Aiken}.}
  \bibinfo{year}{2020}\natexlab{}.
\newblock \showarticletitle{First-order quantified separators}. In
  \bibinfo{booktitle}{\emph{{PLDI}}}.
\newblock


\bibitem[Kov{\'{a}}cs and Voronkov(2013)]%
        {Vampire}
\bibfield{author}{\bibinfo{person}{Laura Kov{\'{a}}cs} {and}
  \bibinfo{person}{Andrei Voronkov}.} \bibinfo{year}{2013}\natexlab{}.
\newblock \showarticletitle{First-Order Theorem Proving and Vampire}. In
  \bibinfo{booktitle}{\emph{{CAV} 2013,}}.
\newblock


\bibitem[Krstic(2005)]%
        {Krstic2005ParametrizedSV}
\bibfield{author}{\bibinfo{person}{Sava Krstic}.}
  \bibinfo{year}{2005}\natexlab{}.
\newblock \showarticletitle{Parametrized System Verification with Guard
  Strengthening and Parameter Abstraction}.
\newblock


\bibitem[Lahiri et~al\mbox{.}(2005)]%
        {DBLP:conf/cav/LahiriBC05}
\bibfield{author}{\bibinfo{person}{Shuvendu~K. Lahiri}, \bibinfo{person}{Thomas
  Ball}, {and} \bibinfo{person}{Byron Cook}.} \bibinfo{year}{2005}\natexlab{}.
\newblock \showarticletitle{Predicate Abstraction via Symbolic Decision
  Procedures}. In \bibinfo{booktitle}{\emph{Computer Aided Verification, 17th
  International Conference, {CAV} 2005, Edinburgh, Scotland, UK, July 6-10,
  2005, Proceedings}} \emph{(\bibinfo{series}{Lecture Notes in Computer
  Science}, Vol.~\bibinfo{volume}{3576})},
  \bibfield{editor}{\bibinfo{person}{Kousha Etessami} {and}
  \bibinfo{person}{Sriram~K. Rajamani}} (Eds.). \bibinfo{publisher}{Springer},
  \bibinfo{pages}{24--38}.
\newblock
\urldef\tempurl%
\url{https://doi.org/10.1007/11513988\_5}
\showDOI{\tempurl}


\bibitem[Lahiri and Bryant(2007)]%
        {indexpredicates}
\bibfield{author}{\bibinfo{person}{Shuvendu~K. Lahiri} {and}
  \bibinfo{person}{Randal~E. Bryant}.} \bibinfo{year}{2007}\natexlab{}.
\newblock \showarticletitle{Predicate Abstraction with Indexed Predicates}.
\newblock \bibinfo{journal}{\emph{ACM Trans. Comput. Logic}}
  \bibinfo{volume}{9}, \bibinfo{number}{1} (\bibinfo{date}{dec}
  \bibinfo{year}{2007}), \bibinfo{pages}{4–es}.
\newblock
\showISSN{1529-3785}


\bibitem[Lahiri et~al\mbox{.}(2006)]%
        {ALLSAT}
\bibfield{author}{\bibinfo{person}{Shuvendu~K. Lahiri}, \bibinfo{person}{Robert
  Nieuwenhuis}, {and} \bibinfo{person}{Albert Oliveras}.}
  \bibinfo{year}{2006}\natexlab{}.
\newblock \showarticletitle{SMT Techniques for Fast Predicate Abstraction}. In
  \bibinfo{booktitle}{\emph{Computer Aided Verification}},
  \bibfield{editor}{\bibinfo{person}{Thomas Ball} {and}
  \bibinfo{person}{Robert~B. Jones}} (Eds.). \bibinfo{publisher}{Springer
  Berlin Heidelberg}, \bibinfo{address}{Berlin, Heidelberg},
  \bibinfo{pages}{424--437}.
\newblock
\showISBNx{978-3-540-37411-4}


\bibitem[Li et~al\mbox{.}(2018)]%
        {AutomApproach}
\bibfield{author}{\bibinfo{person}{Yongjian Li}, \bibinfo{person}{Kaiqiang
  Duan}, \bibinfo{person}{David~N. Jansen}, \bibinfo{person}{Jun Pang},
  \bibinfo{person}{Lijun Zhang}, \bibinfo{person}{Yi Lv}, {and}
  \bibinfo{person}{Shaowei Cai}.} \bibinfo{year}{2018}\natexlab{}.
\newblock \showarticletitle{An Automatic Proving Approach to Parameterized
  Verification}.
\newblock \bibinfo{journal}{\emph{ACM Trans. Comput. Logic}}, Article
  \bibinfo{articleno}{27} (\bibinfo{date}{Nov.} \bibinfo{year}{2018}),
  \bibinfo{numpages}{25}~pages.
\newblock


\bibitem[Ma et~al\mbox{.}(2019)]%
        {i4}
\bibfield{author}{\bibinfo{person}{Haojun Ma}, \bibinfo{person}{Aman Goel},
  \bibinfo{person}{Jean-Baptiste Jeannin}, \bibinfo{person}{Manos Kapritsos},
  \bibinfo{person}{Baris Kasikci}, {and} \bibinfo{person}{Karem~A. Sakallah}.}
  \bibinfo{year}{2019}\natexlab{}.
\newblock \showarticletitle{I4: Incremental Inference of Inductive Invariants
  for Verification of Distributed Protocols}. In \bibinfo{booktitle}{\emph{SOSP
  '19}} (Huntsville, Ontario, Canada). \bibinfo{numpages}{15}~pages.
\newblock


\bibitem[Mann et~al\mbox{.}(2021)]%
        {tacas21}
\bibfield{author}{\bibinfo{person}{Makai Mann}, \bibinfo{person}{Ahmed Irfan},
  \bibinfo{person}{Alberto Griggio}, \bibinfo{person}{Oded Padon}, {and}
  \bibinfo{person}{Clark~W. Barrett}.} \bibinfo{year}{2021}\natexlab{}.
\newblock \showarticletitle{Counterexample-Guided Prophecy for Model Checking
  Modulo the Theory of Arrays}.
\newblock \bibinfo{journal}{\emph{CoRR}}  \bibinfo{volume}{abs/2101.06825}
  (\bibinfo{year}{2021}).
\newblock


\bibitem[McMillan(2006)]%
        {Lazyabs}
\bibfield{author}{\bibinfo{person}{Kenneth~L. McMillan}.}
  \bibinfo{year}{2006}\natexlab{}.
\newblock \showarticletitle{Lazy Abstraction with Interpolants}. In
  \bibinfo{booktitle}{\emph{Computer Aided Verification}},
  \bibfield{editor}{\bibinfo{person}{Thomas Ball} {and}
  \bibinfo{person}{Robert~B. Jones}} (Eds.). \bibinfo{publisher}{Springer
  Berlin Heidelberg}, \bibinfo{address}{Berlin, Heidelberg},
  \bibinfo{pages}{123--136}.
\newblock
\showISBNx{978-3-540-37411-4}


\bibitem[McMillan(2018)]%
        {EagerAbs}
\bibfield{author}{\bibinfo{person}{Kenneth~L. McMillan}.}
  \bibinfo{year}{2018}\natexlab{}.
\newblock \showarticletitle{Eager Abstraction for Symbolic Model Checking}. In
  \bibinfo{booktitle}{\emph{Computer Aided Verification}},
  \bibfield{editor}{\bibinfo{person}{Hana Chockler} {and}
  \bibinfo{person}{Georg Weissenbacher}} (Eds.). \bibinfo{publisher}{Springer
  International Publishing}, \bibinfo{address}{Cham},
  \bibinfo{pages}{191--208}.
\newblock
\showISBNx{978-3-319-96145-3}


\bibitem[Padon(2018)]%
        {FMCADepr}
\bibfield{author}{\bibinfo{person}{Oded Padon}.}
  \bibinfo{year}{2018}\natexlab{}.
\newblock \showarticletitle{Deductive Verification of Distributed Protocols in
  First-Order Logic}. In \bibinfo{booktitle}{\emph{2018 Formal Methods in
  Computer Aided Design (FMCAD)}}. \bibinfo{pages}{1--1}.
\newblock


\bibitem[Padon et~al\mbox{.}(2016a)]%
        {DecidabInvariants}
\bibfield{author}{\bibinfo{person}{Oded Padon}, \bibinfo{person}{Neil
  Immerman}, \bibinfo{person}{Sharon Shoham}, \bibinfo{person}{Aleksandr
  Karbyshev}, {and} \bibinfo{person}{Mooly Sagiv}.}
  \bibinfo{year}{2016}\natexlab{a}.
\newblock \showarticletitle{Decidability of Inferring Inductive Invariants}.
\newblock \bibinfo{journal}{\emph{SIGPLAN Not.}} \bibinfo{volume}{51},
  \bibinfo{number}{1} (\bibinfo{date}{jan} \bibinfo{year}{2016}),
  \bibinfo{pages}{217–231}.
\newblock
\showISSN{0362-1340}


\bibitem[Padon et~al\mbox{.}(2016b)]%
        {Ivy2016}
\bibfield{author}{\bibinfo{person}{Oded Padon}, \bibinfo{person}{Kenneth~L.
  McMillan}, \bibinfo{person}{Aurojit Panda}, \bibinfo{person}{Mooly Sagiv},
  {and} \bibinfo{person}{Sharon Shoham}.} \bibinfo{year}{2016}\natexlab{b}.
\newblock \showarticletitle{Ivy: Safety Verification by Interactive
  Generalization}.
\newblock \bibinfo{journal}{\emph{SIGPLAN Not.}} \bibinfo{volume}{51},
  \bibinfo{number}{6} (\bibinfo{date}{June} \bibinfo{year}{2016}),
  \bibinfo{pages}{614–630}.
\newblock
\showISSN{0362-1340}


\bibitem[Padon et~al\mbox{.}(2022)]%
        {pdh}
\bibfield{author}{\bibinfo{person}{Oded Padon}, \bibinfo{person}{James~R.
  Wilcox}, \bibinfo{person}{Jason~R. Koenig}, \bibinfo{person}{Kenneth~L.
  McMillan}, {and} \bibinfo{person}{Alex Aiken}.}
  \bibinfo{year}{2022}\natexlab{}.
\newblock \showarticletitle{Induction Duality: Primal-Dual Search for
  Invariants}.
\newblock \bibinfo{journal}{\emph{POPL}} \bibinfo{volume}{6},
  \bibinfo{number}{POPL} (\bibinfo{year}{2022}).
\newblock


\bibitem[Pnueli et~al\mbox{.}(2001)]%
        {InvisibleInv}
\bibfield{author}{\bibinfo{person}{Amir Pnueli}, \bibinfo{person}{Sitvanit
  Ruah}, {and} \bibinfo{person}{Lenore~D. Zuck}.}
  \bibinfo{year}{2001}\natexlab{}.
\newblock \showarticletitle{Automatic Deductive Verification with Invisible
  Invariants}. In \bibinfo{booktitle}{\emph{{TACAS}}}.
\newblock


\bibitem[Reynolds(2017)]%
        {ematch}
\bibfield{author}{\bibinfo{person}{Andrew Reynolds}.}
  \bibinfo{year}{2017}\natexlab{}.
\newblock \showarticletitle{Quantifier Instantiation Beyond E-Matching}. In
  \bibinfo{booktitle}{\emph{{(CAV} 2017)}},
  \bibfield{editor}{\bibinfo{person}{Martin Brain} {and} \bibinfo{person}{Liana
  Hadarean}} (Eds.).
\newblock


\bibitem[{Talupur} and {Tuttle}(2008)]%
        {Talupur}
\bibfield{author}{\bibinfo{person}{M. {Talupur}} {and} \bibinfo{person}{M.~R.
  {Tuttle}}.} \bibinfo{year}{2008}\natexlab{}.
\newblock \showarticletitle{Going with the Flow: Parameterized Verification
  Using Message Flows}. In \bibinfo{booktitle}{\emph{2008 Formal Methods in
  Computer-Aided Design}}. \bibinfo{pages}{1--8}.
\newblock


\bibitem[Tonetta(2009)]%
        {ImplicitAbstraction}
\bibfield{author}{\bibinfo{person}{Stefano Tonetta}.}
  \bibinfo{year}{2009}\natexlab{}.
\newblock \showarticletitle{Abstract Model Checking without Computing the
  Abstraction}. In \bibinfo{booktitle}{\emph{FM 2009: Formal Methods}},
  \bibfield{editor}{\bibinfo{person}{Ana Cavalcanti} {and}
  \bibinfo{person}{Dennis~R. Dams}} (Eds.). \bibinfo{publisher}{Springer Berlin
  Heidelberg}, \bibinfo{address}{Berlin, Heidelberg}, \bibinfo{pages}{89--105}.
\newblock
\showISBNx{978-3-642-05089-3}


\bibitem[Vick and McMillan(2023)]%
        {DBLP:conf/vmcai/VickM23}
\bibfield{author}{\bibinfo{person}{Cole Vick} {and} \bibinfo{person}{Kenneth~L.
  McMillan}.} \bibinfo{year}{2023}\natexlab{}.
\newblock \showarticletitle{Synthesizing History and Prophecy Variables for
  Symbolic Model Checking}. In \bibinfo{booktitle}{\emph{Verification, Model
  Checking, and Abstract Interpretation - 24th International Conference,
  {VMCAI} 2023, Boston, MA, USA, January 16-17, 2023, Proceedings}}
  \emph{(\bibinfo{series}{Lecture Notes in Computer Science},
  Vol.~\bibinfo{volume}{13881})}, \bibfield{editor}{\bibinfo{person}{Cezara
  Dragoi}, \bibinfo{person}{Michael Emmi}, {and} \bibinfo{person}{Jingbo Wang}}
  (Eds.). \bibinfo{publisher}{Springer}, \bibinfo{pages}{320--340}.
\newblock


\bibitem[Yao et~al\mbox{.}(2021)]%
        {DistAI}
\bibfield{author}{\bibinfo{person}{Jianan Yao}, \bibinfo{person}{Runzhou Tao},
  \bibinfo{person}{Ronghui Gu}, \bibinfo{person}{Jason Nieh},
  \bibinfo{person}{Suman Jana}, {and} \bibinfo{person}{Gabriel Ryan}.}
  \bibinfo{year}{2021}\natexlab{}.
\newblock \showarticletitle{{DistAI}: {Data-Driven} Automated Invariant
  Learning for Distributed Protocols}. In \bibinfo{booktitle}{\emph{(OSDI
  21)}}.
\newblock
\showISBNx{978-1-939133-22-9}


\end{thebibliography}
\newpage
\appendix
\section{Appendix}
In the following, we will refer for the sake of simplicity to a system with a single array state variable $S = (x, \iota(x), \tau(x, x'))$. All the results here can be extended without loss of generality to any finite number of array variables $X = \{x_1, \dots, x_n\}$.
\subsection{Proofs of Section \ref{sec-updria}}
\subsubsection{Proofs of subsection \ref{subs-index-predicate-abs}}
In this section, we start by proving Proposition \ref{prop-index-abs-1}. We consider fixed a set of index predicates $\mathcal{P}(I)$ = \{$p_1(I, x),$ \dots, $p_n(I, x) \}$. Given a system  $S = (x, \iota(x), \tau(x, x,'))$, we denote with $\hat{S}_\mathcal{P} = (X_\mathcal{P}, \hat{\iota}_\mathcal{P}(X_\mathcal{P}), \hat{\tau}_\mathcal{P}(X_\mathcal{P}, X'_\mathcal{P}))$ its indexed predicate abstraction, as defined in \ref{subs-index-predicate-abs}. We also recall the following definitions:
\begin{defn}
Given $\mathcal{P}(I)$ a set of index predicates, we define:
\begin{itemize}
    \item $X_{\mathcal{P}(I)}$ is a set of fresh $\mathcal{T}_I$ predicates, one for each element $ p(I, x) \in \mathcal{P}(I)$, and with ariety the number of free variables in $p$:
    \[  X_{\mathcal{P}(I)} \defeq \{ x_{p(I, x)}(I)\mid p(I, x) \in \mathcal{P}(I) \};\]
    \item the formula \[ H_\mathcal{P}(X_{\mathcal{P}}, x) \defeq \forall I. \big(\bigwedge_{p(I, x) \in \mathcal{P}} x_{p(I, x)}(I) \leftrightarrow p(I, x)\big)  \]
\item the formula
\[ EQ_\mathcal{P}(x, x') \defeq \forall I. \big(\bigwedge_{p(I, x) \in \mathcal{P}} p(I, x) \leftrightarrow p(I, x'))  \]
\end{itemize}
\end{defn}

We start by defining the simulation relation between the abstract and the concrete system. Note that $S$ is defined over the theory \tarray{}, whereas $\hat{S}$ is defined only over the theory $\mathcal{T}_I$. Therefore, a state of $S$ is given by a model $\mathcal{M}$ of \tarray, and a valuation $s$ of the array symbol $x$ as a function. Instead, a state of  $\hat{S}_\mathcal{P}$ is given by an index model $\mathcal{M'}$ and a valuation $\hat{s}$ of the index predicates $X_\mathcal{P}$ as a subset of (a Cartesian product of) $\mathcal{M'}$.
Given a model $\mathcal{M}$ of \tarray, we denote with $\mathcal{M}_{|I}$ its restriction on the index sort. 

\begin{defn}
Let $(\mathcal{M}, s)$ be a state of $ S$, and $(\mathcal{M}', \hat{s})$ a state of $ \hat{S}$. The two states are in the relation $\mathcal{S}_\mathcal{P}$ if and only if $\mathcal{M}_{|I} = \mathcal{M}'$ and $
s, \hat{s} \models H_\mathcal{P}(X_{\mathcal{P}}, x)$.
\end{defn}
We also define the restriction of a state to its abstract counterpart:
\begin{defn}
\label{defn-restriction-index}
    Let $s, \mathcal{M}$ be a state of $S$. We define a state of 
    $\hat{S}_\mathcal{P}$ by considering $\mathcal{M}_{|I}$, and
    $\hat{s}_\mathcal{P}$ a valuation of the predicates $X_P$ into $\mathcal{M}_{|I}$ given by:
    \[ \hat{s}_\mathcal{P}(x_{p(I, x)}) = \{ M \in \mathcal{M}_{|I} \times ... \times \mathcal{M}_{|I}  \ | \ s \models p(M, x) \}.  \]
\end{defn}
    The following proposition follows directly from this definition:
\begin{prop}
\label{prop-restriction-index}
    Let $\mathcal{M}$ a model of \tarray, and $s$ a valuation into $\mathcal{M}$. Let $\hat{s}_\mathcal{P}$ as in Definition \ref{defn-restriction-index}. Then, we have that $s, \hat{s}_\mathcal{P} \in \mathcal{S}.$
\end{prop}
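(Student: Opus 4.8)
The plan is to prove this purely by unfolding the two relevant definitions — that of the simulation relation $\mathcal{S}_\mathcal{P}$ and that of the restriction $\hat{s}_\mathcal{P}$ from Definition~\ref{defn-restriction-index} — and checking the two defining clauses of membership in $\mathcal{S}_\mathcal{P}$ separately. Recall that a concrete state $(\mathcal{M}, s)$ of $S$ and an abstract state $(\mathcal{M}', \hat{s})$ of $\hat{S}_\mathcal{P}$ are related iff (a) $\mathcal{M}_{|I} = \mathcal{M}'$ and (b) $s, \hat{s} \models H_\mathcal{P}(X_\mathcal{P}, x)$. The abstract state under consideration is by construction exactly $(\mathcal{M}_{|I}, \hat{s}_\mathcal{P})$, so clause (a) holds immediately and requires no argument.

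The only real content is clause (b). First I would expand $H_\mathcal{P}(X_\mathcal{P}, x) = \forall I.\, \bigwedge_{p(I,x)\in\mathcal{P}} \big(x_{p(I,x)}(I) \leftrightarrow p(I,x)\big)$, fix an arbitrary tuple $M$ of elements of the index universe of $\mathcal{M}_{|I}$ of the length appropriate to some chosen $p(I,x)\in\mathcal{P}$, and reduce the goal to the single biconditional $\hat{s}_\mathcal{P} \models x_{p(I,x)}(M)$ iff $\mathcal{M}, s \models p(M, x)$. By Definition~\ref{defn-restriction-index}, $\hat{s}_\mathcal{P}(x_{p(I,x)}) = \{\, M \mid \mathcal{M}, s \models p(M, x)\,\}$, so $\hat{s}_\mathcal{P} \models x_{p(I,x)}(M)$ iff $M \in \hat{s}_\mathcal{P}(x_{p(I,x)})$ iff $\mathcal{M}, s \models p(M, x)$, which is precisely the required equivalence. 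Since $M$ and $p$ were arbitrary, $H_\mathcal{P}$ is satisfied, and hence $(\mathcal{M}, s)\,\mathcal{S}_\mathcal{P}\,(\mathcal{M}_{|I}, \hat{s}_\mathcal{P})$.

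The main point requiring care — the closest thing to an obstacle — is spelling out the structure with respect to which $s, \hat{s}_\mathcal{P} \models H_\mathcal{P}$ is to be evaluated. The formula $H_\mathcal{P}$ mixes the fresh index predicates $x_{p(I,x)}$ with the original atoms $p(I,x)$, which refer to the array variable $x$ and to symbols of $\Sigma_E$ and of the array theory; one therefore evaluates it over the expansion of $\mathcal{M}$ that interprets each $x_{p(I,x)}$ according to $\hat{s}_\mathcal{P}$ and $x$ according to $s$. With this reading the argument above goes through verbatim, because the index universe over which $I$ ranges in $H_\mathcal{P}$ is that of $\mathcal{M}_{|I}$, which by definition coincides with the index universe of $\mathcal{M}$, so the $M$'s appearing on the two sides of the biconditional range over the same set. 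No quantifier manipulation, induction, or theory-specific reasoning is needed; the proposition is essentially a sanity check that the restriction construction realizes the abstraction relation.
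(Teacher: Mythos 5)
Your proof is correct and matches the paper's intent: the paper simply asserts that the proposition ``follows directly from this definition,'' and your argument is exactly that direct unfolding — clause (a) of the relation $\mathcal{S}_\mathcal{P}$ holds by construction, and clause (b) reduces to the tautological biconditional $M \in \hat{s}_\mathcal{P}(x_{p(I,x)}) \Leftrightarrow s \models p(M,x)$ given by Definition~\ref{defn-restriction-index}. Your remark about which expanded structure interprets $H_\mathcal{P}$ is a reasonable clarification of a detail the paper leaves implicit, and nothing more is needed.
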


It is now easy to prove that the index predicate abstraction simulates the original system:
\begin{lem}
\label{lem-index-abs-1}
Given $(\mathcal{M}, s)$ a state of $S$ such that $\mathcal{M}, s \models \iota(x)$, then there exists a state $(\mathcal{M}'$, $\hat{s}$) of $\hat{S}_\mathcal{P}$ such that $\mathcal{M}', \hat{s} \models \hat{\iota}_\mathcal{P}(X_\mathcal{P})$, and $ s, \hat{s} \in \mathcal{S}$.
\end{lem}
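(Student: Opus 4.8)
The plan is to exhibit the required abstract state explicitly, as the restriction of $(\mathcal{M}, s)$ to the index sort in the sense of Definition~\ref{defn-restriction-index}, and then to verify the two conditions separately. Concretely, given $(\mathcal{M}, s)$ with $\mathcal{M}, s \models \iota(x)$, I would set $\mathcal{M}' \defeq \mathcal{M}_{|I}$ and take $\hat{s} \defeq \hat{s}_\mathcal{P}$, the valuation of the abstract predicates $X_\mathcal{P}$ into $\mathcal{M}_{|I}$ given by $\hat{s}_\mathcal{P}(x_{p(I,x)}) = \{\, \bar{M} \mid s \models p(\bar{M}, x)\,\}$ as in Definition~\ref{defn-restriction-index}. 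This is a legitimate state of $\hat{S}_\mathcal{P}$, and Proposition~\ref{prop-restriction-index} gives at once $(s, \hat{s}) \in \mathcal{S}$, which discharges the second part of the claim; unfolding the definition of $\mathcal{S}$, this in particular says $s, \hat{s} \models H_\mathcal{P}(X_\mathcal{P}, x)$.

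It then remains to check $\mathcal{M}', \hat{s} \models \hat{\iota}_\mathcal{P}(X_\mathcal{P})$. Since $\hat{\iota}_\mathcal{P}(X_\mathcal{P}) = \exists x.\,(\iota(x) \wedge H_\mathcal{P}(X_\mathcal{P}, x))$, what must be produced is a \tarray-model whose index reduct is $\mathcal{M}'$ together with an interpretation of the array symbol $x$ satisfying $\iota(x) \wedge H_\mathcal{P}(X_\mathcal{P}, x)$ under $\hat{s}$. The obvious witness is $\mathcal{M}$ itself — whose index reduct is exactly $\mathcal{M}' = \mathcal{M}_{|I}$ by construction — with $x$ interpreted as $s$: the conjunct $\iota(x)$ holds by the hypothesis on $(\mathcal{M}, s)$, and the conjunct $H_\mathcal{P}(X_\mathcal{P}, x)$ holds by the previous paragraph. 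Hence $\mathcal{M}', \hat{s} \models \hat{\iota}_\mathcal{P}(X_\mathcal{P})$, which finishes the argument.

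The one point I expect to require care is the semantic status of the existential quantifier $\exists x$, which binds an array- (hence also element-) sorted symbol but is evaluated inside a purely index-sorted structure. I would fix this once and for all by recording the convention that satisfaction of a formula $\exists x.\,\psi(X_\mathcal{P}, x)$ in a state $(\mathcal{M}', \hat{s})$ of $\hat{S}_\mathcal{P}$ means: there exists a \tarray-model $\mathcal{N}$ with $\mathcal{N}_{|I} = \mathcal{M}'$ and a valuation $t$ of $x$ into $\mathcal{N}$ with $\mathcal{N}, t, \hat{s} \models \psi$. With this convention in place, the proof above amounts to choosing $\mathcal{N} = \mathcal{M}$ and $t = s$, and the remaining steps are routine bookkeeping; in particular, no surgery on the element sort is needed, since the element component of the witness model can be taken to be exactly that of $\mathcal{M}$.
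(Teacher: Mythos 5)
Your proposal is correct and follows essentially the same route as the paper: choose $\mathcal{M}' = \mathcal{M}_{|I}$ and $\hat{s} = \hat{s}_\mathcal{P}$ from Definition~\ref{defn-restriction-index}, invoke Proposition~\ref{prop-restriction-index} for $(s,\hat{s}) \in \mathcal{S}$, and witness the existential in $\hat{\iota}_\mathcal{P}$ by $\mathcal{M}$ and $s$ themselves. Your explicit note on the semantic convention for evaluating $\exists x$ over an index-only structure is a detail the paper leaves implicit, but it does not change the argument.
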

\begin{proof}
    Recall that $\hat{\iota}_{\mathcal{P}}(X_{\mathcal{P}}) \defeq \exists x. (\iota(x) \wedge  H_\mathcal{P}
(X_{\mathcal{P}}, x) )$.     Let $\mathcal{M}' = \mathcal{M}_{|I}$, and $\hat{s} = \hat{s}_\mathcal{P}$ as defined in the previous definition. From \ref{prop-restriction-index}, we have that $s, \hat{s} \in \mathcal{S}$, and the lemma follows.
\end{proof}
Similarly, we can prove that:
\begin{lem}
\label{lem-index-abs-2}
    Let $\mathcal{M}$ a model of \tarray, and let $s, \hat{s}$ a couple of states such that $ s, \hat{s} \in \mathcal{S}$. Then, for every $s'$ such that $\mathcal{M}, s, s' \models \tau(x, x')$, there exists an $\hat{s}'$ such that $\mathcal{M}_{|I}, \hat{s}, \hat{s}' \models \hat{\tau}_\mathcal{P}(X_\mathcal{P}, X'_\mathcal{P}).$
\end{lem}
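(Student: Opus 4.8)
The plan is to reproduce, one step later in time, the argument used for Lemma~\ref{lem-index-abs-1}. Recall that, by definition of the indexed predicate abstraction, $\hat{\tau}_\mathcal{P}(X_\mathcal{P}, X'_\mathcal{P}) \defeq \exists x, x'.\,(\tau(x,x') \wedge H_\mathcal{P}(X_\mathcal{P}, x) \wedge H_\mathcal{P}(X'_\mathcal{P}, x'))$; concretely, an index structure together with valuations $\hat{s},\hat{s}'$ of the abstract predicates satisfies $\hat{\tau}_\mathcal{P}$ precisely when that structure can be expanded to a full \tarray{}-model carrying valuations of $x$ and $x'$ that make the quantifier-free body $\tau(x,x') \wedge H_\mathcal{P}(X_\mathcal{P}, x) \wedge H_\mathcal{P}(X'_\mathcal{P}, x')$ true.

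First I would fix the witness: take $\hat{s}' \defeq \hat{s}'_\mathcal{P}$, the abstract restriction of $s'$ relative to $\mathcal{M}$, as in Definition~\ref{defn-restriction-index}. By Proposition~\ref{prop-restriction-index} we get $(s',\hat{s}'_\mathcal{P})\in\mathcal{S}$, and unfolding the definition of $\mathcal{S}_\mathcal{P}$ this says that $\hat{s}'_\mathcal{P}$ is a valuation into $\mathcal{M}_{|I}$ and that $s',\hat{s}'_\mathcal{P}\models H_\mathcal{P}(X'_\mathcal{P}, x')$ (the primed instance of $H_\mathcal{P}$). It then remains to check $\mathcal{M}_{|I},\hat{s},\hat{s}'_\mathcal{P}\models\hat{\tau}_\mathcal{P}$. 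For this I would simply use $\mathcal{M}$ itself as the required \tarray{}-expansion of $\mathcal{M}_{|I}$ and $s(x),s'(x')$ as existential witnesses for $x,x'$: the first conjunct holds because $\mathcal{M},s,s'\models\tau(x,x')$ by hypothesis, the second because $(s,\hat{s})\in\mathcal{S}$, and the third by the line above. Together with Lemma~\ref{lem-index-abs-1} (and Proposition~\ref{prop-restriction-index} for the "$(s',\hat{s}') \in \mathcal{S}$" part), this establishes that $\mathcal{S}_\mathcal{P}$ is a simulation from $S$ to $\hat{S}_\mathcal{P}$, which is the ingredient feeding into Proposition~\ref{prop-index-abs-1}.

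I expect no real difficulty in the construction itself; the one point that needs care is purely semantic, namely spelling out what it means for the index structure $\mathcal{M}_{|I}$ to satisfy $\hat{\tau}_\mathcal{P}$ — a formula that still syntactically mentions the element theory and the array sort under an existential quantifier — so that one can legitimately produce $\mathcal{M}$, $s(x)$, $s'(x')$ as witnesses, and then observe that $\mathcal{M}$ is by construction an expansion of its own restriction $\mathcal{M}_{|I}$. The universal quantifier over $I$ inside $H_\mathcal{P}$ plays no special role and is simply transported along, so no induction or theory-specific reasoning is needed.
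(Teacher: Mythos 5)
Your proof is correct and follows exactly the route the paper intends: the paper omits the proof of this lemma, introducing it only with ``Similarly, we can prove that,'' and your argument is precisely the transition-step analogue of the paper's proof of Lemma~\ref{lem-index-abs-1} --- take $\hat{s}'$ to be the abstract restriction $\hat{s}'_\mathcal{P}$ of $s'$ via Definition~\ref{defn-restriction-index} and Proposition~\ref{prop-restriction-index}, then witness the existentials in $\hat{\tau}_\mathcal{P}$ with $\mathcal{M}$, $s(x)$ and $s'(x')$. Your closing remark about making precise what it means for the index structure $\mathcal{M}_{|I}$ to satisfy a formula that still mentions the array and element sorts is the right point to be careful about, and your resolution (viewing $\mathcal{M}$ as an expansion of its own restriction) is the intended one.
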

We already mentioned that the index predicates abstraction preserves the validity of all the formulas whose atoms are contained in $\mathcal{P}$. Therefore, we have the following (stated as Proposition \ref{prop-index-abs-1} in Section \ref{sec-updria}): 
\begin{prop}
\label{prop-index-abs-app}
   Let \system an array-based transition system, and $\phi(X)$ a formula. Let $\mathcal{P}$ a set of index predicates which contains all the atoms occurring in $\phi$. If $\hat{S}_\mathcal{P} \models \hat{\phi}$ then $S \models \phi$.
\end{prop}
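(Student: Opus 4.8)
The plan is to exhibit $\mathcal{S}_\mathcal{P}$ as a simulation from $S$ to $\hat{S}_\mathcal{P}$ in the sense of Definition~\ref{def-simul} which, in addition, preserves the truth of every $\Sigma(X)$-formula all of whose atoms occur in $\mathcal{P}$, and then to argue by contraposition. The two clauses of Definition~\ref{def-simul} for $\mathcal{S}_\mathcal{P}$ are exactly Lemmas~\ref{lem-index-abs-1} and~\ref{lem-index-abs-2}: every initial state of $S$ is related to an initial state of $\hat{S}_\mathcal{P}$, and $\mathcal{S}_\mathcal{P}$ is preserved by transitions. Hence, by the standard fact recalled after Definition~\ref{def-simul} that simulations preserve reachability, for every reachable state $(\mathcal{M},s)$ of $S$ there is a reachable state $(\mathcal{M}_{|I},\hat{s})$ of $\hat{S}_\mathcal{P}$ with $(s,\hat{s})\in\mathcal{S}_\mathcal{P}$ (concretely one may take $\hat{s}=\hat{s}_\mathcal{P}$, using Definition~\ref{defn-restriction-index} and Proposition~\ref{prop-restriction-index}).

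The core of the proof is the following claim: if $(s,\hat{s})\in\mathcal{S}_\mathcal{P}$, that is $\mathcal{M},s,\hat{s}\models H_\mathcal{P}(X_\mathcal{P},x)$, and $\psi(X)$ is any $\Sigma(X)$-formula (possibly with index quantifiers) whose atoms all lie in $\mathcal{P}$, then $\mathcal{M},s\models\psi$ if and only if $\mathcal{M}_{|I},\hat{s}\models\bar{\psi}$, where $\bar\psi$ replaces each atom $p(I,x)\in\mathcal{P}$ by the corresponding abstract predicate $x_{p(I,x)}(I)$. I would prove this by induction on the structure of $\psi$: the atomic case is precisely the content of $H_\mathcal{P}$; the Boolean cases are immediate; and the quantifier cases go through because $\mathcal{M}$ and $\mathcal{M}_{|I}$ share the same index universe, so index quantifiers range over the same set, and because $H_\mathcal{P}$ is itself universally quantified over $I$, hence survives any instantiation of the bound index variables. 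Applying the claim with $\psi=\phi$, together with the remark before Proposition~\ref{prop-index-abs-1} that $\bar\phi$ and $\hat\phi$ are logically equivalent under $H_\mathcal{P}$, gives $\mathcal{M},s\models\phi$ iff $\mathcal{M}_{|I},\hat{s}\models\hat\phi$.

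Combining these: assume $S\not\models\phi$, so some reachable state $(\mathcal{M},s)$ of $S$ satisfies $\neg\phi$; pick a related reachable state $(\mathcal{M}_{|I},\hat{s})$ of $\hat{S}_\mathcal{P}$; by the claim it satisfies $\neg\hat\phi$, hence $\hat{S}_\mathcal{P}\not\models\hat\phi$, which is the contrapositive of the statement. I expect the main obstacle to be the bookkeeping in the inductive claim, in particular making the passage between the atoms $p(I,x)$ and the predicates $x_{p(I,x)}(I)$ precise when $\phi$ re-uses the variables in $I$ as bound variables (the beta-conversion mentioned in the text), and checking that the restriction $\hat{s}_\mathcal{P}$ of Definition~\ref{defn-restriction-index} really does interpret the $x_{p(I,x)}$ so that $H_\mathcal{P}$ holds; the latter is also what underlies Lemmas~\ref{lem-index-abs-1} and~\ref{lem-index-abs-2}, so the same care is needed throughout.
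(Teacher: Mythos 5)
Your proposal is correct and follows essentially the same route as the paper's proof: establish that $\mathcal{S}_\mathcal{P}$ is a simulation via Lemmas~\ref{lem-index-abs-1} and~\ref{lem-index-abs-2}, use preservation of reachability, and argue by contraposition that a reachable $\neg\phi$-state of $S$ would yield a reachable $\neg\hat\phi$-state of $\hat{S}_\mathcal{P}$. The only difference is that you spell out the atom-preservation step as an explicit structural induction, where the paper simply asserts it from the fact that $\mathcal{P}$ contains all atoms of $\phi$.
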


\begin{proof}
    By Lemmas \ref{lem-index-abs-1}, \ref{lem-index-abs-2}, we have that the relation $\mathcal{S}_\mathcal{P}$ is a simulation relation, and therefore it preserves reachability (\ref{subs-simul}). Suppose that  $S \models \neg \phi$: then, there exists a finite path to a state $s$ such that $s \models \neg \phi$. By using definition \ref{defn-restriction-index}, it follows that there exists a reachable state $\hat{s}$ of $\hat{S}$ such that $s, \hat{s} \in \mathcal{S}_\mathcal{P}$. Since $\mathcal{P}$ contains all the atoms of $\phi$, we have that $\hat{s} \models \neg \hat{\phi}$, a contradiction.
\end{proof}
Moreover, can now establish the following fact. Recall that
\begin{align*}
     AbsRelInd(F, \tau, \psi, \mathcal{P}) =  F(X_{\mathcal{P}}) \wedge \psi(X_{\mathcal{P}})\wedge H_\mathcal{P}(X_{\mathcal{P}}, X)
      \\
      {} \wedge EQ_\mathcal{P}(X, \bar{X}) \wedge \tau(\bar{X}, \bar{X}')\wedge EQ_\mathcal{P}(\bar{X}', X') \wedge \neg \psi(X_{\mathcal{P}}') \wedge H_\mathcal{P}(X_{\mathcal{P}}', X')  
    \end{align*}

We have: 
\begin{prop}
\label{prop-abs-relative}
Given a system \system and its abstraction $\hat{S}_\mathcal{P} = $ $(X_\mathcal{P}, $ $\hat{\iota}_\mathcal{P}(X_\mathcal{P}),$ $ \hat{\tau}_\mathcal{P}(X_\mathcal{P}, X'_\mathcal{P}))$, given any formulae $F, \psi$, then the formulae
    $AbsRelInd(F, \tau, \psi, \mathcal{P})$ and $F(X_{\mathcal{P}}) \wedge \hat{\tau}(X_{\mathcal{P}}, X'_{\mathcal{P}})$ $ \wedge \psi(X_{\mathcal{P}}) \wedge $ $\neg \psi(X'_{\mathcal{P}})$ are equisatisfiable. Moreover, if $s, s' \models AbsRelInd(F, \tau, \psi, \mathcal{P})$, then $\hat{s}_\mathcal{P}, \hat{s'}_\mathcal{P} \models F(X_{\mathcal{P}}) \wedge \hat{\tau}(X_{\mathcal{P}}, X'_{\mathcal{P}}) \wedge \psi(X_{\mathcal{P}}) \wedge \neg \psi(X'_{\mathcal{P}})$. 
\end{prop}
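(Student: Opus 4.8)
The plan is to unfold the definition of the abstract transition relation,
$\hat{\tau}_\mathcal{P}(X_\mathcal{P}, X'_\mathcal{P}) \defeq \exists X, X'.\,\bigl(\tau(X, X') \wedge H_\mathcal{P}(X_\mathcal{P}, X) \wedge H_\mathcal{P}(X'_\mathcal{P}, X')\bigr)$,
and then prove both directions of equisatisfiability by exhibiting explicit witnesses. The only ingredient beyond bookkeeping is the elementary fact that I would establish first: for any array valuations $X, Y$ we have $H_\mathcal{P}(X_\mathcal{P}, X) \wedge EQ_\mathcal{P}(X, Y) \models H_\mathcal{P}(X_\mathcal{P}, Y)$. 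Indeed, $H_\mathcal{P}$ and $EQ_\mathcal{P}$ are both $\forall I$-quantified conjunctions of biconditionals ranging over the atoms of $\mathcal{P}$, so this is just transitivity of $\leftrightarrow$ applied predicate by predicate under a common universal quantifier; I would also note that $EQ_\mathcal{P}$ is symmetric and $EQ_\mathcal{P}(X, X)$ is valid (reflexivity).

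For the direction from right to left, suppose a \tarray-model $\mathcal{M}$ (with valuations) satisfies $F(X_\mathcal{P}) \wedge \hat{\tau}(X_\mathcal{P}, X'_\mathcal{P}) \wedge \psi(X_\mathcal{P}) \wedge \neg\psi(X'_\mathcal{P})$. Unfolding $\hat{\tau}$ gives array valuations $a, a'$ with $\mathcal{M} \models \tau(a, a') \wedge H_\mathcal{P}(X_\mathcal{P}, a) \wedge H_\mathcal{P}(X'_\mathcal{P}, a')$. Extend the valuation by setting $X := a$, $\bar{X} := a$, $\bar{X}' := a'$, $X' := a'$, keeping the interpretations of $X_\mathcal{P}, X'_\mathcal{P}$. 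Then $F(X_\mathcal{P})$, $\psi(X_\mathcal{P})$, $\neg\psi(X'_\mathcal{P})$, $H_\mathcal{P}(X_\mathcal{P}, X)$, $H_\mathcal{P}(X'_\mathcal{P}, X')$ and $\tau(\bar{X}, \bar{X}')$ hold by construction, while $EQ_\mathcal{P}(X, \bar{X})$ and $EQ_\mathcal{P}(\bar{X}', X')$ are instances of reflexivity; hence $AbsRelInd(F, \tau, \psi, \mathcal{P})$ is satisfied. For the converse, given a model of $AbsRelInd(F, \tau, \psi, \mathcal{P})$, take the interpretations of $\bar{X}$ and $\bar{X}'$ as witnesses for the existentials in $\hat{\tau}_\mathcal{P}$: $\tau(\bar{X}, \bar{X}')$ is a conjunct directly, and the elementary fact yields $H_\mathcal{P}(X_\mathcal{P}, \bar{X})$ from $H_\mathcal{P}(X_\mathcal{P}, X) \wedge EQ_\mathcal{P}(X, \bar{X})$ and (using symmetry of $EQ_\mathcal{P}$) $H_\mathcal{P}(X'_\mathcal{P}, \bar{X}')$ from $H_\mathcal{P}(X'_\mathcal{P}, X') \wedge EQ_\mathcal{P}(\bar{X}', X')$. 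Thus $\hat{\tau}_\mathcal{P}(X_\mathcal{P}, X'_\mathcal{P})$ holds in the same model, and since $F(X_\mathcal{P})$, $\psi(X_\mathcal{P})$, $\neg\psi(X'_\mathcal{P})$ are already conjuncts of $AbsRelInd$, the second formula holds on the restriction of the model to $\Sigma_I$ together with $X_\mathcal{P}, X'_\mathcal{P}$.

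The ``moreover'' claim falls out of this last construction: if $s$ and $s'$ are the $X$- and $X'$-components of a model of $AbsRelInd(F, \tau, \psi, \mathcal{P})$, then the conjunct $H_\mathcal{P}(X_\mathcal{P}, X)$ forces the model's valuation of $X_\mathcal{P}$ to be exactly $\hat{s}_\mathcal{P}$ as in Definition~\ref{defn-restriction-index} (and likewise $X'_\mathcal{P} = \hat{s'}_\mathcal{P}$), so the witnesses $\bar{X}, \bar{X}'$ just exhibited show $\hat{s}_\mathcal{P}, \hat{s'}_\mathcal{P} \models F(X_\mathcal{P}) \wedge \hat{\tau}(X_\mathcal{P}, X'_\mathcal{P}) \wedge \psi(X_\mathcal{P}) \wedge \neg\psi(X'_\mathcal{P})$. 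The step I expect to require the most care is making the existential quantification over the array sort in $\hat{\tau}_\mathcal{P}$ precise in the multi-sorted semantics --- reading ``$\exists X$'' as ``there is an expansion of the index model to a \tarray-model together with a valuation of $X$'' --- and checking that all the chosen witnesses ($a,a'$ in one direction, the interpretations of $\bar{X},\bar{X}'$ in the other) live inside one common \tarray-model, so that the $\tau$, $H_\mathcal{P}$ and $EQ_\mathcal{P}$ conjuncts can be evaluated simultaneously; once that set-up is fixed, everything reduces to the one-line biconditional-transitivity observation, plus the bookkeeping of which model component interprets each of $X_\mathcal{P}, X, \bar{X}, \bar{X}', X', X'_\mathcal{P}$.
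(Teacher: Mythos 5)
Your proposal is correct, and it follows the same route the paper intends: the paper's own ``proof'' is only a one-line sketch deferring to Theorem~1 of the implicit-abstraction paper \cite{msatic3} (with Definition~\ref{defn-restriction-index} replacing the Boolean projection), and your argument is precisely that proof written out --- witness construction in both directions, with the crux being the entailment $H_\mathcal{P}(X_\mathcal{P}, X) \wedge EQ_\mathcal{P}(X, Y) \models H_\mathcal{P}(X_\mathcal{P}, Y)$ by transitivity of the biconditionals under the shared $\forall I$, plus reflexivity of $EQ_\mathcal{P}$ for the converse direction. You also correctly flag the one genuinely delicate point (reading the array-sort existential in $\hat{\tau}_\mathcal{P}$ as an expansion of the index model so that all witnesses live in one \tarray-model), and your observation that $H_\mathcal{P}(X_\mathcal{P}, X)$ forces the valuation of $X_\mathcal{P}$ to coincide with $\hat{s}_\mathcal{P}$ of Definition~\ref{defn-restriction-index} is exactly what the ``moreover'' clause needs.
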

\begin{proof}(sketch)
The proof is similar to Theorem 1 in \cite{msatic3}, by using Definition \ref{defn-restriction-index} instead of the projection on Boolean values.
\end{proof}

Proposition \ref{updr-abstract-cex} is an immediate consequence of the latter, once noticing that we compute diagrams from the restriction of models over the signature $\Sigma_I \cup X_\mathcal{P}$: every step of the algorithm is equivalent to performing UPDR \cite{updr} on $\hat{S}_\mathcal{P}$.

\subsubsection{Proofs of subsection \ref{subs-updria-alg}}
 We give here the proof of the Proposition \ref{prop-updria-soundness}. First, we have
\begin{prop}
    The frames $F_0(X_\mathcal{P}), \dots, F_n(X_\mathcal{P})$ are an approximate reachability sequence for $\hat{S}_\mathcal{P}$.
\end{prop}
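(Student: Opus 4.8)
The plan is to establish, by induction on the execution of Algorithm~\ref{updralg} (with the predicate set $\mathcal{P}$ held fixed between refinements), that the four conditions of Definition~\ref{def-approx-reach} --- instantiated with the abstract system $\hat{S}_\mathcal{P}$ and the abstract property $\hat{\phi}$ --- are invariants of the trace $F_0,\dots,F_N$: namely $\hat{\iota}_\mathcal{P}\models F_0$, $F_i\models F_{i+1}$, $F_i\wedge\hat{\tau}_\mathcal{P}\models F_i'$, and $F_i\models\hat{\phi}$ over the relevant ranges of $i$. The crucial leverage is Proposition~\ref{prop-rel-inductive-1} (equivalently Proposition~\ref{prop-abs-relative}): every satisfiability test the algorithm performs on an $AbsRelInd$ formula --- and, by the same implicit-abstraction argument, the test $F_k\wedge H_\mathcal{P}\wedge\neg\bar{\phi}$ on the last frame --- is equisatisfiable with the corresponding query stated directly over $\hat{\tau}_\mathcal{P}$ and $\hat{\phi}$. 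Consequently the frame bookkeeping performed here coincides step-for-step with that of UPDR~\cite{updr} running on $\hat{S}_\mathcal{P}$, and the claim reduces to the standard fact that the IC3/UPDR trace is always an approximate reachability sequence~\cite{updr,ic3}; only the points where our presentation differs need a dedicated check.

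\emph{Base case.} After lines~5--6 we have $F_0$ equal to (a formula equivalent under $H_\mathcal{P}$ to) $\hat{\iota}_\mathcal{P}$ and $F_1=\emptyset\equiv\top$. Hence $\hat{\iota}_\mathcal{P}\models F_0$ is immediate, while $F_0\models F_1$ and the consecution condition for $i=0$ are vacuous because $F_1\equiv\top$. For $F_0\models\hat{\phi}$ we use the check at line~3: if $\bar{\iota}\wedge H_\mathcal{P}\not\models\bar{\phi}$ the algorithm returns a counterexample and no trace is built; otherwise, since $\mathcal{P}$ is initialized (line~2) with all atoms of $\iota$ and of $\phi$, the remark in Section~\ref{subs-index-predicate-abs} lets us transfer this to $\hat{\iota}_\mathcal{P}\models\hat{\phi}$, i.e.\ $F_0\models\hat{\phi}$.

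\emph{Inductive step.} There are three kinds of update to the trace. \textbf{(i) Opening a fresh frame.} This occurs exactly when the inner loop ``while $F_k\wedge H_\mathcal{P}\wedge\neg\bar{\phi}$ is sat'' exits; by the implicit-abstraction reading of Proposition~\ref{prop-abs-relative} this means $F_k\models\hat{\phi}$, which is precisely what is needed so that ``$F_i\models\hat{\phi}$ for all $i<N$'' survives the increment $k\leftarrow k+1$, $F_k\leftarrow\top$ (the remaining three conditions being trivial for a $\top$ frame). \textbf{(ii) Strengthening inside $RecBlock$.} When a generalization $g$ of $\neg\psi$ is added to $F_1,\dots,F_j$: the conditions $F_i\models\hat{\phi}$ and $F_i\models F_{i+1}$ are preserved because the $F_i$ only get stronger; the consecution condition is preserved because $g$ is added only after the relative-induction query $AbsRelInd(F_{j-1},\tau,\neg\psi,\mathcal{P})$ came back unsatisfiable, which by Proposition~\ref{prop-abs-relative} says $\neg\psi$, and hence any sound generalization $g\models\neg\psi$, is inductive relative to $F_{j-1}$ in $\hat{S}_\mathcal{P}$; and $\hat{\iota}_\mathcal{P}\models F_1$ is preserved provided $g$ is not over-generalized past the initial states --- this holds because the base case $N=0$ of $RecBlock$ refuses to block (it returns a would-be counterexample), so no obligation intersecting $\hat{\iota}_\mathcal{P}$ is ever blocked, and $Generalize$ is constrained accordingly. \textbf{(iii) Propagation.} Pushing a diagram $\psi$ from $F_i$ to $F_{i+1}$ is done only when $AbsRelInd(F_i,\tau,\psi,\mathcal{P})$ is unsatisfiable; by Proposition~\ref{prop-abs-relative} this is exactly the consecution fact required, and monotonicity and $F_i\models\hat{\phi}$ survive because $F_{i+1}$ only grows.

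The step I expect to be the genuine obstacle is the initiation invariant $\hat{\iota}_\mathcal{P}\models F_1$ across the $RecBlock$ strengthenings: one must argue carefully that every diagram that is actually blocked --- and every $g$ output by $Generalize$ --- is disjoint from $\hat{\iota}_\mathcal{P}$, so that adding $g$ to $F_1,\dots,F_j$ but not to $F_0$ does not violate $F_0\models F_1$. This is the classical ``relative inductiveness together with initiation'' reasoning of IC3, and the cleanest discharge is again to route it through the equisatisfiability of $AbsRelInd$ with the abstract queries (Propositions~\ref{prop-rel-inductive-1} and~\ref{updr-abstract-cex}), so that the property is inherited verbatim from the correctness of UPDR~\cite{updr}. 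A minor additional point is the refinement loop: since Proposition~\ref{prop-abs-relative} holds for \emph{every} choice of $\mathcal{P}$, a refinement step simply re-initializes the trace and restarts the same induction with the enlarged predicate set, so it needs no separate treatment.
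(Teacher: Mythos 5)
Your argument is correct and takes essentially the same route as the paper, whose own proof is just a one-line deferral to Lemma~1 of~\cite{msatic3}: in both cases the equisatisfiability of $AbsRelInd$ with the abstract queries (Proposition~\ref{prop-rel-inductive-1}) is used to identify every frame update with the corresponding step of UPDR/IC3 running on $\hat{S}_\mathcal{P}$, from which the four conditions of Definition~\ref{def-approx-reach} follow by the standard induction on the algorithm's execution. One small imprecision: in step~(ii) relative inductiveness of $\neg\psi$ is \emph{not} automatically inherited by a strengthening $g\models\neg\psi$ (consecution and initiation must be re-verified by the generalization procedure itself), but since you already note that $Generalize$ must be constrained accordingly, this does not affect the substance of the argument.
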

\begin{proof}(sketch)
    The proof is the same as the one in \cite{msatic3}, Lemma 1.
\end{proof}

\begin{prop}
    If algorithm \ref{updralg} terminates with \Safe, then $S \models \phi$. If algorithm \ref{updralg} terminates with \Unsafe, then $S \not \models \phi$
\end{prop}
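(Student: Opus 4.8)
The plan is to prove the two implications separately, reusing the facts already established about the abstract system $\hat S_\mathcal{P}$ and the reduction of the invariant problem to ground instances from Section~\ref{subs-ground}.

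\textbf{The \Safe case.} First I would record the loop invariant of Algorithm~\ref{updralg}: the frames $F_0(X_\mathcal{P}),\dots,F_N(X_\mathcal{P})$ always form an approximate reachability sequence (Definition~\ref{def-approx-reach}) for $\hat S_\mathcal{P}$ and $\hat\phi$. The conditions $\bar\iota\models F_0$, $F_i\models F_{i+1}$ and $F_i\models\hat\phi$ hold by construction of the main loop (line~8 strengthens the last frame until it entails $\bar\phi$, which is equivalent to $\hat\phi$ under $H_\mathcal{P}$), and the relative-consecution condition $F_i\wedge\hat\tau\models F_i'$ is exactly what is certified in the propagation phase: by Proposition~\ref{prop-abs-relative}, an unsatisfiable $AbsRelInd(F_i,\tau,\psi,\mathcal{P})$ check is equisatisfiable with $F_i\wedge\hat\tau\wedge\psi\wedge\neg\psi'$, so the diagrams pushed from $F_i$ to $F_{i+1}$ are genuinely inductive relative to $F_i$ in $\hat S_\mathcal{P}$; likewise the generalizations added in $RecBlock$ are inductive relative to the preceding frame by the same proposition. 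Termination with \Safe happens only when $F_i=F_{i+1}$ for some $i$, so Proposition~\ref{prop-approxi-sequence} gives $\hat S_\mathcal{P}\models\hat\phi$. Since $\mathcal{P}$ is initialized (line~2) with all atoms of $\phi$ and is only enlarged thereafter, Proposition~\ref{prop-index-abs-app} applies and yields $S\models\phi$.

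\textbf{The \Unsafe case.} The algorithm returns \textbf{unsafe} only along one path: $RecBlock$ has failed, producing an abstract counterexample $\pi$ (Proposition~\ref{updr-abstract-cex}) with an associated sequence of diagrams $\psi_0,\dots,\psi_k$, and then $Concretize(\pi)$ succeeds. By the construction of Section~\ref{sub-concretizing-updria}, this success means that the finite concrete unrolling~\eqref{concrete-unrolling-finite} is satisfiable, for $n=\max_i n_i$ with $n_i$ the number of existential variables of $\psi_i$. I would then decode a satisfying assignment into a path $s_0,\dots,s_k$ of the ground instance $S_n$: the shared index constants $C$, frozen by $C'=C$, fix one finite index universe; the conjuncts $\tau_n(\cdot)$ give $s_i,s_{i+1}\models\tau_n$; the first conjunct (read together with $\iota_n$, as in the standard extraction of an IC3 counterexample, which is consistent because $\psi_0$ is the diagram of a state of $F_0=\bar\iota$ and $\mathcal{P}\supseteq\mathrm{atoms}(\iota)$) makes $s_0$ an initial state; and the last conjunct together with $\sigma_k\models\neg\bar\phi$ and $\mathcal{P}\supseteq\mathrm{atoms}(\phi)$ makes $s_k\models\neg\phi_n$. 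Hence $S_n\not\models\phi_n$, and by the observation at the end of Section~\ref{subs-ground} that $S\models\phi$ iff $S_m\models\phi_m$ for every $m$, we conclude $S\not\models\phi$.

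\textbf{Expected main obstacle.} The delicate step is the decoding in the \Unsafe direction. Diagrams are existentially closed and only determine a state up to the submodel (upward-closure) relation, so one must verify carefully that concretizing ($[X_\mathcal{P}/\mathcal{P}]$) and grounding (over $C$) the diagrams really yields a consistent single $S_n$-execution whose first state is initial and whose last state violates $\phi_n$ --- in particular that the grounded index constants at the different time steps are identified, and that the initial constraint is $\iota_n$ rather than merely something consistent with it. The role of $\mathcal{P}$ containing all atoms of $\iota$ and of $\phi$ (so that $\bar\iota$ and $\bar\phi$ coincide with $\iota$ and $\phi$ after undoing the abstraction) is essential here. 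By contrast, once Proposition~\ref{prop-abs-relative} and the approximate-reachability-sequence invariant are in place, the \Safe direction is routine.
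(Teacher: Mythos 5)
Your proof is correct and follows essentially the same route as the paper's: the \Safe direction goes through the approximate-reachability-sequence invariant of the frames, Proposition~\ref{prop-approxi-sequence}, and Proposition~\ref{prop-index-abs-app}, while the \Unsafe direction extracts a concrete counterexample in a ground instance from the satisfiable unrolling. You supply more detail than the paper does (which defers the frame invariant to a citation and dispatches the \Unsafe case in one sentence), but the decomposition and the key lemmas invoked are the same.
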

\begin{proof}
    The algorithm terminates with \Safe when there exists an approximate reachability sequence $F_0(X_\mathcal{P}), \dots, F_n(X_\mathcal{P})$ such that, for some $i$, $F_{i+1} \models F_i$. From Proposition \ref{prop-approxi-sequence}, we have that $\Tilde{S}_\mathcal{P} \models \hat{\phi}$. From Proposition \ref{prop-index-abs-app}, this implies that $S \models \phi$. If the algorithm terminates with \Unsafe, then the unrolling \ref{concrete-unrolling} is satisfiable, and we have a counterexample.
\end{proof}

\subsection{Proofs of Section \ref{subs-param-abs}}
\label{sec-appendix}
We report here the technical results for the proof of Proposition~\ref{prop-general-simulation}. We consider an array-based transition system $S = (x, \iota(x), \tau(x, x'))$, a candidate property $\phi(x)$, and its parameter abstraction $ \tilde{S} = (\{\tilde{x}, P, E\}, \tilde{\iota}(P, \tilde{x}), \tilde{\tau}(P, P', E, \tilde{x}, \tilde{x}'))$, as defined in \ref{subs-param-abs}, where the $\tilde{x}$ are a renaming of the $x$. Note that a state $\tilde{s}$ of $\tilde{S}$ consists of both an assignment of the array variable $x$ and of the index variables $P \cup E$. With a small abuse of notation, we do not distinguish the two cases. The simulation of the abstraction is given by the formula:
\begin{defn}
    Let $S$ an array based transition system and $\tilde{S}$ its parameter abstraction. Let $P = \{ p_1, \dots, p_n \}$ be the set of prophecy variables. We define
    \[ \tilde{H}(x, \tilde{x}) \defeq \bigwedge_{i=1, \dots, n} \Tilde{x}[p_i] = x[p_i] \]
    Let $\mathcal{M}$ be a model of \tarray. If $s$ is a state of $S$, and $\tilde{s}$ of $\Tilde{S}$, we define a relation $\mathcal{S}$ among states in this way: 
    \[ s, \tilde{s} \in \mathcal{S} \Leftrightarrow \mathcal{M}, s, \Tilde{s} \models \Tilde{H}(x, \tilde{x}). \]
    
\end{defn}
First, we prove the following Proposition.
\begin{prop}
\label{equiv-prop} Let $\mathcal{M}$ a model of \tarray.
Let $\tilde{s}$ be a state of $\tilde{S}$. Let $\mu$ be an interpretation of $P$ such that $\mu(P) = \tilde{s}(P)$. Let $\phi(P, x)$ be a quantifier free formula which contains only prophecies as free index variables.
  Then, for any state $s$ of $S$ such that $\mathcal{S}(s,\tilde{s})$,
  $$
  \tilde{s} \models \phi(P, x)  \Leftrightarrow s, \mu \models \phi(P, x)
  $$
\end{prop}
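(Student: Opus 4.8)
The plan is to prove the biconditional by structural induction on the quantifier-free formula $\phi(P,x)$. The Boolean cases ($\neg$, $\wedge$, $\vee$) are immediate from the induction hypothesis, so the whole argument reduces to the case in which $\phi$ is a single atom. For atoms I would in turn isolate the following \emph{coincidence claim}: for every term $t$ occurring in $\phi(P,x)$, the value of $t$ in the state $s$ under the prophecy valuation $\mu$ equals the value of $t$ in the state $\tilde s$ (both computed inside the fixed model $\mathcal{M}$ of \tarray{}). Once this claim is established, every atom of $\phi$ --- being a $\Sigma_I$- or $\Sigma_E$-predicate, or an equality, applied to such terms and interpreted by the shared model $\mathcal{M}$ --- receives the same truth value in the two states, and the proposition follows.

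The coincidence claim I would prove by a sub-induction on the structure of $t$. If $t$ is a prophecy variable $p_i \in P$, then its value under $\mu$ is $\mu(p_i)$ and its value in $\tilde s$ is $\tilde s(p_i)$, and these agree by the hypothesis $\mu(P) = \tilde s(P)$. If $t = g(t_1,\dots,t_m)$ with $g$ a function symbol of $\Sigma_I$ or $\Sigma_E$ (constants included, $m=0$), then $g$ is interpreted by $\mathcal{M}$ in both cases, so the sub-induction hypothesis on $t_1,\dots,t_m$ gives equal evaluations. The only remaining case is a read $t = x[t_0]$ through the array symbol: by the sub-induction hypothesis the index subterm $t_0$ evaluates, in both states, to the same index element $d$; what is then needed is $s(x)(d) = \tilde s(x)(d)$. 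This is exactly where the hypothesis $\mathcal{S}(s,\tilde s)$ --- i.e. $\mathcal{M},s,\tilde s \models \tilde H(x,\tilde x)$, which says $\tilde x[p_i] = x[p_i]$ for all $i$ --- is used: by construction of $\phi(P,x)$ the array $x$ is accessed only at the index variables that survived preprocessing, and since the only free index variables of $\phi(P,x)$ are the prophecies, $t_0$ is some $p_i \in P$, so $d = \mu(p_i) = \tilde s(p_i)$ and $s(x)(d) = \tilde s(x)(d)$ follows from $\tilde H$.

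The main obstacle, and the only nonroutine point, is the justification that in $\phi(P,x)$ the array is accessed exclusively at prophecy variables; this is where the semantic argument meets the syntactic discipline of array-based transition systems (formulae that read state arrays only at index variables, together with the quantifier elimination of Proposition~\ref{prophecy} that replaces the remaining universal index variables by the fresh constants $P$). Granting that, the rest is a standard relevance/coincidence lemma for multi-sorted first-order logic. I would also note in passing that the restriction is essential: if $\phi$ were allowed to read $x$ at an arbitrary index term (for instance $x[f(p_i)]$ with $f$ an uninterpreted index function), the biconditional would generally fail, since $\tilde x$ and $x$ need not agree off the prophecy locations --- which is precisely why the shape hypotheses on $S$ of Section~\ref{subs-param-abs} are imposed.
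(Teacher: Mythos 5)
Your proof is correct and follows essentially the same route as the paper's: both are coincidence arguments showing that a quantifier-free formula whose only free index variables are prophecies can only observe the arrays at the prophecy locations, where $s$ and $\tilde s$ agree by $\tilde H$ and $\mu(P)=\tilde s(P)$ — the paper phrases this as restriction to the substructure of $\mathcal{M}_I$ generated by $\mu(P)$, while you unpack the same fact as a structural induction on terms and atoms. The caveat you flag about reads at composite index terms such as $x[f(p_i)]$ is precisely the implicit assumption the paper's proof also relies on when it asserts that the two restricted valuations assign $x$ to the same function, so your explicit treatment of that point is a clarification rather than a divergence.
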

\begin{proof}	
Note that a model for a function is uniquely determined by the values on its domain. So, if $\mathcal{M}$ is a model for the total functions from $\mathcal{M}_I$ to $\mathcal{M}_E$, then 
\[\mu, s \models \phi(P, x) \]
where $s$ is a valuation to into $\mathcal{M}$, is equivalent to 
\begin{equation}
\label{cond-1}
\mu, s' \models \phi(P, x),
\end{equation}
where $s'$ is a valuation to $\mathcal{N}$, which obtained from $\mathcal{M}$ by restricting all the interpretation of the index variables to the substructure of $\mathcal{M}_I$ generated by the elements in $\mu(P)$. Similarly, for any model $\mathcal{M}'$ and valuation $\tilde{s}$ into it, 
\[\tilde{s} \models \phi(P, x) \]
is equivalent to 
\begin{equation}
\label{cond-2}
\tilde{s}' \models \phi(P, x),
\end{equation}
where $\mathcal{N'}$ defined similarly as above. Since $\mu(P) = \tilde{s}(P)$, we have $\mathcal{N} = \mathcal{N'}$. Moreover, from the definition of $\mathcal{S}$, $s'$ and $\tilde{s}'$ assign $x$ to the same function, so (\ref{cond-1}) and (\ref{cond-2}) are equivalent. 
\end{proof}

\begin{lem}
\label{lem-simulation-initial} Let $\mathcal{M}$ be a model of \tarray.
If $ s \models \iota(x)$, then there exists some $\tilde{s}$ such that
$\mathcal{S}(s,\tilde{s})$ and $\tilde{s}\models \tilde{\iota}(P, x)$.
\end{lem}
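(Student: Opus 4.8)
The plan is to exhibit the witness abstract state $\tilde{s}$ explicitly and then verify the two required conditions separately. Given $s$ with $\mathcal{M}, s \models \iota(x)$, I would define $\tilde{s}$ as follows: let $\tilde{s}$ interpret the (renamed) array variable $\tilde{x}$ as the very same function that $s$ assigns to $x$, and let $\tilde{s}$ assign to the index variables $P$ and $E$ arbitrary elements of the index universe of $\mathcal{M}$, subject only to the standing convention that the values taken by $P$ and by $E$ are disjoint. (If the index universe of $\mathcal{M}$ is too small to admit such an assignment, then $\tilde{S}$ has no state over $\mathcal{M}$ at all, so there is nothing to prove.)

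The first condition, $\mathcal{S}(s, \tilde{s})$, will follow immediately from the construction: by the definition of $\mathcal{S}$ we must check $\mathcal{M}, s, \tilde{s} \models \tilde{H}(x, \tilde{x})$, that is, $\tilde{x}[p_i] = x[p_i]$ for $i = 1, \dots, n$. In each conjunct the index term $p_i$ is evaluated to the element $\tilde{s}(p_i)$ on both sides, and $\tilde{s}(\tilde{x})$ and $s(x)$ are literally the same function, so the equality holds.

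For the second condition, $\tilde{s} \models \tilde{\iota}(P, x)$, recall that by the hypothesis of this section $\iota(x)$ is universal, say $\iota(x) = \forall \underline{I}.\, \iota'(\underline{I}, x)$ with $\iota'$ quantifier-free, and that $\tilde{\iota}(P, x)$ is by construction the expansion $\bigwedge_{\underline{p} \in P^{|\underline{I}|}} \iota'(\underline{p}, x)$, a quantifier-free formula whose free index variables all lie in $P$. Let $\mu$ be the restriction of $\tilde{s}$ to $P$. Since $\mathcal{S}(s, \tilde{s})$ holds, Proposition~\ref{equiv-prop} applied to the formula $\tilde{\iota}(P, x)$ reduces the goal to showing $s, \mu \models \tilde{\iota}(P, x)$. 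But this is precisely a set of instances of the hypothesis: from $\mathcal{M}, s \models \forall \underline{I}.\, \iota'(\underline{I}, x)$ we obtain $s, \mu \models \iota'(\underline{p}, x)$ for every tuple $\underline{p}$ of prophecy variables (instantiating $\underline{I}$ to the elements $\mu(\underline{p})$), and conjoining over all $\underline{p} \in P^{|\underline{I}|}$ yields $s, \mu \models \tilde{\iota}(P, x)$, as required.

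The argument is essentially routine once the definitions are unwound; the only delicate point is the interface with Proposition~\ref{equiv-prop}, which requires both that $\tilde{\iota}(P,x)$ be quantifier-free with free index variables confined to $P$ (true here, since we expand \emph{all} universal quantifiers of the universal formula $\iota$) and that $\mathcal{S}(s, \tilde{s})$ already be established — which is why I carry out the $\mathcal{S}$-check first. A minor side remark is also needed about the disjointness convention between $P$ and $E$, and about the degenerate case of an index universe too small to host any abstract state.
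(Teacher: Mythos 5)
Your proof is correct and follows essentially the same route as the paper's: both construct $\tilde{s}$ by fixing an assignment of the index variables and letting the abstract array agree with $s(x)$ on the prophecy locations (you take the identical function, the paper takes the restriction to $\tilde{\mu}(P)$ — either choice satisfies $\tilde{H}$), then verify $\mathcal{S}(s,\tilde{s})$ by definition and invoke Proposition~\ref{equiv-prop} together with universal instantiation of $\iota$ at the prophecy values. The only cosmetic difference is that the paper additionally takes the prophecy assignment to be injective, which is immaterial for this lemma.
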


\begin{proof}
Let $s$ be an assignment into a model $\mathcal{M}$, with index domain $\mathcal{M}_I$, and let $m$ be the length of $I$. 
Then, \[\tilde{\iota}(P, x) = \bigwedge_{p_{i_1}, \dots, p_{i_m} \subseteq P^m} {\phi}(p_{i_1}, \dots, p_{i_m}, x[p_{i_1}, \dots, p_{i_m}] ). \] Let $\tilde{\mu}$ an (injective) assignment of the prophecy variables $P$ into $\mathcal{M}_I$. Let $\tilde{s}$ defined as the restriction of $s$ over $\tilde{\mu}(P)$ and such that $\tilde{s}(P) = \tilde{\mu}(P)$. By definition, $(s, \tilde{s}) \in \mathcal{S}$. Then, by Proposition \ref{equiv-prop}, we have
$$\tilde{s} \models \tilde{\iota}(P, x) \Leftrightarrow s, \tilde{\mu} \models \tilde{\iota}(P, x).$$

Since the formula $\iota(x) \rightarrow \tilde{\iota}(P, x)$ is valid, and $s \models \iota(a)$ by hypothesis, the claim follows. 
\end{proof}
\begin{lem}
\label{lem-simulation-transition} Let $\mathcal{M}$ be a model of \tarray. If $s, s'\models\tau(x,x')$, then for every $\tilde{s}$ such that $(s,\tilde{s}) \in \mathcal{S}$, either:
\begin{itemize}
\item there exists a rule $\tilde{\tau}$ and some $\tilde{s}'$, 
such that $\tilde{s},  \tilde{s}' \models \tilde{\tau}$ and $(s', \tilde{s}') \in \mathcal{S}$; or
\item there exist a rule $\tilde{\tau}$ and some $\tilde{s}'$, 
$\tilde{s}''$, such that $\tilde{s}, \tilde{s}' \models \tilde{\tau}_S$, 
$\tilde{s}', \tilde{s}'' \models \tilde{\tau}$, 
and $(s', \tilde{s}'') \in \mathcal{S}$.
\end{itemize}
\end{lem}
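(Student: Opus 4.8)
The plan is to unfold the concrete transition, read off its existential witnesses, and then either match those witnesses directly against the index variables $P\cup E$ of the given abstract pre-state, or first reposition the environmental variables with a single stuttering step and then match. By the hypothesis of Section~\ref{subs-param-abs}, $\tau(x,x')$ is a disjunction of formulas $\exists I\,\forall J.\,\psi(I,J,x,x')$ with $\psi$ quantifier-free, so from $\mathcal{M}, s, s'\models\tau(x,x')$ I would first fix a disjunct together with a tuple $\bar a\in\mathcal{M}_I^{|I|}$ such that $\mathcal{M}, s, s'\models\forall J.\,\psi(\bar a,J,x,x')$. Recall that the number of environmental variables was chosen to be at least $|I|$ for every disjunct, so $E$ is large enough to name the components of $\bar a$ that the prophecies $P$ do not already name, once we identify the components of $\bar a$ that coincide with each other or with a prophecy value (this is what the standing assumption that $P$ and $E$ take pairwise distinct values forces).

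I would then split into two cases. In the first case the given $\tilde s$ already names every component of $\bar a$ among $P\cup E$ and its array $\tilde x$ agrees with $s$ at every $P\cup E$ location. Here I pick a substitution $\theta\colon I\to P\cup E$ with $\tilde s(\theta(i_\ell))=a_\ell$, take $\tilde\tau_\theta$ to be the corresponding abstract disjunct (the instantiation over $P\cup E$ of $\forall J.\,\psi(\theta(I),J,x,x')$), and define $\tilde s'$ so as to keep $P$ frozen, keep $E$ fixed, and interpret the updated array as the restriction of $s'$ to the relevant index elements, exactly as in the proof of Lemma~\ref{lem-simulation-initial}. Then $(s',\tilde s')\in\mathcal{S}$ is immediate, and $\tilde s,\tilde s'\models\tilde\tau_\theta$ follows by checking each ground instance $\psi(\theta(I),\bar j,x,x')$ with $\bar j\in(P\cup E)^{|J|}$: by construction $\theta(I)$ evaluates to $\bar a$ and $\bar j$ to a tuple of elements named by $\tilde s$, and since $\tilde s$ and $\tilde s'$ agree with $s$ and $s'$ on the array at all those locations, the instance holds in $\tilde s,\tilde s'$ iff $\mathcal{M}, s, s'\models\psi(\bar a,\bar j,x,x')$, which is a consequence of $\mathcal{M}, s, s'\models\forall J.\,\psi(\bar a,J,x,x')$. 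This is the substructure argument already used for Proposition~\ref{equiv-prop}, now carried out with $P\cup E$ in place of $P$. In the remaining case the given $\tilde s$ fails one of the two conditions above, and I would first fire the stuttering rule $\tilde\tau_S$ to pass to a state $\tilde s'$ that still freezes $P$ and keeps $\tilde x$ unchanged at the $P$ locations, so that $(s,\tilde s')\in\mathcal{S}$, but whose environmental variables are reassigned to name the missing components of $\bar a$ and whose array at the new environmental locations is set to agree with $s$ — this is legitimate precisely because $\tilde\tau_S$ leaves $E'$ and $\tilde x'$ outside the $P$ locations unconstrained. Now $\tilde s'$ satisfies the hypotheses of the first case, and applying it yields a $\tilde s''$ with $\tilde s',\tilde s''\models\tilde\tau$ and $(s',\tilde s'')\in\mathcal{S}$, which is the second alternative of the lemma.

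The main obstacle, as usual with these simulation arguments, is not conceptual but lies in the bookkeeping at the level of index elements: one must choose $\theta$ and the environmental reassignment so that $P$ and $E$ remain pairwise distinct, which means collapsing repeated witnesses and those already named by prophecies before distributing the rest over $E$, and one must define the post-states $\tilde s'$ and $\tilde s''$ so that they agree with $s$ and $s'$ on the array not merely at the prophecy locations (needed for membership in $\mathcal{S}$) but also at the environmental locations (needed so that the $\forall J$-instances of $\psi$ over $P\cup E$ transfer back and forth). Once the state constructions are arranged with this in mind, verifying that the selected abstract rule fires is a routine instantiation check, and the two cases are exhaustive because the stuttering rule can always be used first to normalize $\tilde s$ into the shape required by the first case.
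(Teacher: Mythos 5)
Your proposal is correct and follows essentially the same route as the paper's proof: extract the existential witnesses $\bar a$ from the fired disjunct, match them against $P\cup E$ (using the substructure argument of Proposition~\ref{equiv-prop} to transfer the $\forall J$-instances), and use one stuttering step to reposition the environmental variables and the array values at their locations when the match is not already available. The only cosmetic difference is that you merge the paper's first two cases (witnesses all among $P$, versus a prefix among $P$ and the rest among $E$ with matching array values) into a single ``already named by $P\cup E$'' case, which is a slightly cleaner organization of the same argument.
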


\begin{proof}
 We first consider the simpler case of one prophecy variable $p$ and one environmental variable $e$. 
  By hypothesis,
  $$
      s, s'\models\exists i \forall J. \psi(i, J, a, a[J],  a', a'[J]).
$$
  Hence, there exists an interpretation $\mu$ of $i$ in an element of $\mathcal{M}_I$ such that
  \begin{equation}
  \label{1}  
  s, s', \mu \models \forall J. \psi(i, J, a, x[J], x', x'[J]).
  \end{equation}
Let's also fix a state $\tilde{s}$ of $\tilde{S}$, such that $\mathcal{S}(s, \tilde{s})$. There are now three cases:
\begin{itemize}
\item Suppose $\mu(i) = \tilde{s}(p)$. Then, the transition of $\tilde{S}$ labeled by the substitution  $i \mapsto p$ is:
$$\tilde{\tau}_{\sigma : i \mapsto p} = \bigwedge_{j \in {p, x}} \psi(p, J, x, x[J], x', x'[J]).$$
Let $\tilde{s}'$ defined as $\tilde{s}'(p) \defeq \mu(i)$ and $\tilde{s}'(x)[\tilde{s}'(p)] \defeq s'(x)[\mu(i)]$. Note that $\mathcal{S}(s', \tilde{s}')$ by definition. Since $(\ref{1})$ is universal and  $\mu(i) = \tilde{s}(p)$, with an argument similar to lem~\ref{lem-simulation-initial}, we have that $\tilde{s}, \tilde{s}' \models \tilde{\tau}_{\sigma : i \mapsto p}$. 

\item Suppose $\mu(i) \neq \tilde{s}(p)$ but $\mu(i) = \tilde{s}(e)$ and $\tilde{s}(x)[\tilde{s}(e)] = s(x)[\mu(i)]$. Then, consider the transition labeled by the substitution $i \mapsto e$. Similarly to the first case, we can define $\tilde{s}'$ to be the restriction of $s'$ over $p$ and $e$, and we have that $\tilde{s}, \tilde{s}'\models \tilde{\tau}_{\sigma : i \mapsto e}$. Moreover, $\mathcal{S}(s', \tilde{s}')$ by definition.
 
\item If instead $\mu(i) \neq \tilde{s}(x)$ or $\tilde{s}(x)[\tilde{s}(e)] \neq s(x)[\mu(i)]$, we can reduce to the previous case with a stuttering transition. Let $\tilde{s}'$ defined as $\tilde{s}$ on $p$, but $\tilde{s}'(e) \defeq \mu(i)$ and $\tilde{s}'(x)[\tilde{s}(e)] \defeq s(x)[\mu(i)]$. Note that we also have $(s, \tilde{s}') \in \mathcal{S}$.
Then $\tilde{s}, \tilde{s}' \models \tilde{\tau}_S$, and we have reduced to the previous case.  So, there exists an $\tilde{s}''$ such that $\tilde{s}', \tilde{s}''\models \tilde{\tau}_{\sigma : i \mapsto e}$ and $\mathcal{S}(s', \tilde{s}'')$. 
\end{itemize}

In general, suppose $P = (p_1, \dots, p_n)$. By hypothesis,
$$
      s, s' \models\exists I \forall J. \psi(i, J, x, x[J],  x', x'[J]).
$$
Since the length of $E$ is the maximum length of the existentially quantified index variables in the rules of $S$, we can assume without loss of generality that $I = (i_1, \dots, i_m)$ and $E = (e_1, \dots, e_m)$. By hypothesis there exists an interpretation $\mu$ of $I$ such that 
$$
s, s', \mu \models \forall J. \psi(I, J, x, x[J], x', x'[J]).
$$
Let's also fix a state $\tilde{s}$ of $\tilde{S}$, such that $\mathcal{S}(s, \tilde{s})$. There are again three cases; we omit the details since they are a generalization of the previous ones.  
\begin{itemize}
\item if $\mu(I) \subseteq \tilde{s}(P)$, then there exist $P_{J} = ( p_{j_1}, \dots, p_{j_m})$ such that $\mu(I) = \tilde{s}(P_{J})$. We can define $\tilde{s}'$ to be the restriction of $s$ over, $P$ and we have again $\tilde{s}, \tilde{s}' \models \tilde{\tau}_{\sigma : I \mapsto P_{J}}$. 
\item Suppose now there exists a $0 \leq h < m$ such that $\mu(i_1, \dots, i_h) = \tilde{s}(p_{j_1}, \dots, p_{j_h})$, and moreover $\mu(i_{h+1}, \dots, i_{m}) = \tilde{s}(e_1, \dots, e_{m-h})$, and also $\tilde{s}(a)[\tilde{s}(e_1, \dots, e_{m-h})] = s(a)[\mu(i_{h+1}, \dots, i_{m})]$. Then, if we define $\tilde{s}'$ to be the restriction of $s'$ over $P \cup E$, we have that $\tilde{s}, \tilde{s}' \models \tilde{\tau}_{\sigma}$ where $\sigma : I \mapsto \{p_{j_1}, \dots, p_{j_h}, e_1, \dots, e_{m-h}\}$.
\item If instead $\mu(i_{h+1}, \dots, i_{m}) \neq \tilde{s}(e_1, \dots, e_{m-h})$ or $\tilde{s}(a)[\tilde{s}(e_1, \dots, e_{m-h})] \neq s(a)[\mu(i_{h+1}, \dots, i_{m-h})]$, we can reduce to the previous case with a stuttering transition. Let $\tilde{s}'$ defined as $\tilde{s}$ on $P$, but $\tilde{s}'(e_1, \dots, e_{m-h}) \defeq \mu(i_{h+1}, \dots, i_{m})$ and $\tilde{s}'(a)[\tilde{s}(e_1, \dots, e_{m-h})] \defeq s(a)[\mu(i_{h+1}, \dots, i_{m-h})]$. Note that $\mathcal{S}(s, \tilde{s}')$ by definition. Moreover, $\tilde{s}, \tilde{s}' \models
\tau_S$. We have now reduced to the previous case, and the claim follows.
\end{itemize}

\end{proof}

\begin{thm}
The relation $\mathcal{S}$ is a stuttering simulation between $S$ and $\tilde{S}$. \end{thm}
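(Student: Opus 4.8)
The plan is to observe that the statement follows immediately by assembling the two lemmas just established: Lemma~\ref{lem-simulation-initial} supplies the first clause of Definition~\ref{def-stutter-simul}, and Lemma~\ref{lem-simulation-transition} supplies the second. So the proof will consist of recalling what a stuttering simulation requires and then matching each requirement to the corresponding lemma.

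First I would recall what must be checked. By Definition~\ref{def-stutter-simul}, to show that $\mathcal{S}$ (relating states of $S$ to states of $\tilde{S}$) is a stuttering simulation I need: (i) every state $s$ of $S$ with $s \models \iota(x)$ is related to some state $\tilde{s}$ of $\tilde{S}$ with $\tilde{s} \models \tilde{\iota}(P,x)$; and (ii) whenever $(s,\tilde{s}) \in \mathcal{S}$ and $s,s' \models \tau(x,x')$, there is either a single $\tilde{\tau}$-successor $\tilde{s}'$ of $\tilde{s}$ with $(s',\tilde{s}') \in \mathcal{S}$, or a two-step sequence $\tilde{s} \xrightarrow{\tilde{\tau}} \tilde{s}' \xrightarrow{\tilde{\tau}} \tilde{s}''$ with $(s',\tilde{s}'') \in \mathcal{S}$. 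For clause (i), Lemma~\ref{lem-simulation-initial} gives exactly this (noting that $\tilde{\iota}$ is by construction the initial formula of $\tilde{S}$). For clause (ii), Lemma~\ref{lem-simulation-transition} provides precisely the required case split: its first alternative yields $\tilde{s},\tilde{s}' \models \tilde{\tau}$ with $(s',\tilde{s}') \in \mathcal{S}$, matching the non-stuttering case of Definition~\ref{def-stutter-simul}; its second alternative yields $\tilde{s},\tilde{s}' \models \tilde{\tau}_S$ and $\tilde{s}',\tilde{s}'' \models \tilde{\tau}$ with $(s',\tilde{s}'') \in \mathcal{S}$, matching the stuttering case. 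The one point worth spelling out is that the stuttering rule $\tilde{\tau}_S$ is itself one of the disjuncts of the transition formula $\tilde{\tau}$ of $\tilde{S}$ (by the construction in Section~\ref{sec-abstraction-defn}), so both steps of the two-step sequence are genuine transitions of $\tilde{S}$, as Definition~\ref{def-stutter-simul} demands; and that Definition~\ref{def-stutter-simul} permits at most one stuttering step, which is exactly what Lemma~\ref{lem-simulation-transition} delivers.

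Since both lemmas are already in hand, there is essentially no remaining obstacle: the proof is a one-line invocation of Lemmas~\ref{lem-simulation-initial} and~\ref{lem-simulation-transition}. The only care needed is bookkeeping — reading the ``rule $\tilde{\tau}$'' in the statement of Lemma~\ref{lem-simulation-transition} as a disjunct of the full transition relation of $\tilde{S}$ (including $\tilde{\tau}_S$), so that the abstract runs produced are runs of $\tilde{S}$ rather than of some single-rule fragment. Once this is observed, the two clauses of Definition~\ref{def-stutter-simul} are verified verbatim.
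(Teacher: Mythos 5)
Your proposal is correct and matches the paper's own proof, which simply invokes Lemmas~\ref{lem-simulation-initial} and~\ref{lem-simulation-transition} to discharge the two clauses of Definition~\ref{def-stutter-simul}. Your extra bookkeeping remark that $\tilde{\tau}_S$ is a disjunct of the full transition relation of $\tilde{S}$ is a reasonable clarification but does not change the argument.
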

\begin{proof}
Follows directly from
Lemmas~\ref{lem-simulation-initial}
and~\ref{lem-simulation-transition}.  
\end{proof}

\begin{thm}
\label{thm-stuttering-simulation}
Let $S$ be an array-based transition system, $\tilde{S}$ its parameter abstraction. Let $\forall I. \Phi(I, a)$ a candidate invariant, and $P$ a set of frozen variables with same length as $I$.
If $\tilde{S} \models \Phi(P, a)$, then $S \models \Phi(P, a)$
\end{thm}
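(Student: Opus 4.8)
The plan is to deduce Theorem~\ref{thm-stuttering-simulation} from the two ingredients already in place: the fact that $\mathcal{S}$ is a stuttering simulation between $S$ and $\tilde S$ (Lemmas~\ref{lem-simulation-initial} and~\ref{lem-simulation-transition}), and the transfer statement Proposition~\ref{equiv-prop}, which says that whenever $(s,\tilde s)\in\mathcal{S}$ and a valuation $\mu$ of $P$ satisfies $\mu(P)=\tilde s(P)$, a quantifier-free formula with only prophecies free holds in $\tilde s$ iff it holds in $(s,\mu)$. Recalling that after the preprocessing steps the candidate invariant of $S$ is $\forall I.\Phi(I,a)$, so that $S\models\Phi(P,a)$ amounts to: no state reachable in $S$ falsifies $\Phi(P,a)$ under any valuation of $P$, it suffices to prove the contrapositive-flavoured claim that if some reachable state of $S$ falsifies $\Phi(P,a)$ under a valuation $\mu$ of $P$, then some reachable state of $\tilde S$ falsifies $\Phi(P,a)$ as well, contradicting $\tilde S\models\Phi(P,a)$.

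Concretely, I would start from a path $\pi=s_0,\dots,s_k$ of $S$ and a valuation $\mu$ of $P$ into the index universe $\mathcal{M}_I$ with $s_k,\mu\not\models\Phi(P,a)$, and build by induction on $i$ states $\tilde s_0,\dots,\tilde s_k$ of $\tilde S$ that are reachable in $\tilde S$, satisfy $(s_i,\tilde s_i)\in\mathcal{S}$, and have $\tilde s_i(P)=\mu(P)$. For $i=0$ I would apply Lemma~\ref{lem-simulation-initial}, but \emph{instantiating its free choice of prophecy assignment $\tilde\mu$ with $\mu$ itself}, obtaining an abstract initial state with the required agreement on $P$. For the inductive step, from $(s_i,\tilde s_i)\in\mathcal{S}$ and $s_i,s_{i+1}\models\tau$, Lemma~\ref{lem-simulation-transition} provides either a one-step abstract successor $\tilde s_{i+1}$ with $(s_{i+1},\tilde s_{i+1})\in\mathcal{S}$, or a stuttering step $\tilde\tau_S$ followed by a rule step reaching such an $\tilde s_{i+1}$; in both cases $\tilde s_{i+1}$ is reachable in $\tilde S$, and since the prophecy variables are frozen along every transition of $\tilde S$ (in particular $\tilde\tau_S$ keeps $P'=P$), $\tilde s_{i+1}(P)=\tilde s_i(P)=\mu(P)$. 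At the end $\tilde s_k$ is reachable in $\tilde S$ with $(s_k,\tilde s_k)\in\mathcal{S}$ and $\tilde s_k(P)=\mu(P)$, so Proposition~\ref{equiv-prop} applied to the quantifier-free formula $\Phi(P,a)$ gives $\tilde s_k\not\models\Phi(P,a)$, the desired contradiction; hence no such $\pi,\mu$ exist and $S\models\Phi(P,a)$.

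The conceptual crux I expect is precisely the base case: exploiting that the prophecy variables can be \emph{fixed in advance} to the valuation $\mu$ that will eventually witness the violation — this is exactly why the abstraction separates frozen prophecies from non-deterministically changing environmental variables, and it is what lets a purely universal abstract invariant imply a universally quantified concrete one. The rest is routine bookkeeping: confirming that the inductively chosen $\tilde s_{i+1}$ is genuinely reachable (a stuttering step is a legal transition of $\tilde S$) and that $P$ stays frozen through the non-stuttering rules as well; a minor point worth double-checking is whether Lemma~\ref{lem-simulation-initial} needs any rewording when $\mu$ identifies some prophecy variables, but the construction of $\tilde s_0$ as the restriction of $s_0$ to the set $\mu(P)$ goes through unchanged in that case.
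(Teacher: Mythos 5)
Your proof is correct and takes essentially the same route as the paper's: the paper's own proof of this theorem is a two-line appeal to the fact that stuttering simulations preserve reachability together with Proposition~\ref{equiv-prop}, which is precisely the argument you unpack (including the key observation that the frozen prophecy valuation can be fixed to $\mu$ already at the initial abstract state and is preserved along both rule and stuttering transitions). No gaps.
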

\begin{proof}
The statement follows from the fact that stutter simulations preserve reachability (\ref{subs-simul}), and from Proposition \ref{equiv-prop}.

\end{proof}

\newpage

\end{document}